\DeclareMathOperator{\rank}{rank}
\newcommand{\be}{\begin{equation}}
\newcommand{\ee}{\end{equation}}
\newtheorem{Remark}{\bf Remark}
\newtheorem{Lemma}{\bf Lemma}
\newtheorem{Corollary}{\bf Corollary}
\begin{document}

\title{Characterization and Control of Diffusion Processes\\in Multi-Agent Networks}
\author{Wai~Hong~Ronald~Chan, Matthias~Wildemeersch, Tony~Q.~S.~Quek%
\thanks{Wai Hong Ronald Chan is with Stanford University, California, USA (email: \tt{whrchan@stanford.edu}).}%
\thanks{Matthias Wildemeersch is with the International Institute for Applied Systems Analysis (IIASA), Laxenburg, Austria (email: \tt{wildemee@iiasa.ac.at}).}%
\thanks{Tony Q. S. Quek is with the Singapore University of Technology and Design (SUTD), Singapore (email: \tt{tonyquek@sutd.edu.sg}).}%
}
\markboth{}{Chan \textit{\MakeLowercase{et al.}}: Characterization and Control of Diffusion Processes in Multi-Agent Networks}
\maketitle
\thispagestyle{empty}

\tikzset{
block/.style = {draw, fill=white, rectangle, minimum height=3em, minimum width=3em},
tmp/.style  = {coordinate}, 
sum/.style= {draw, fill=white, circle, node distance=1cm},
input/.style = {coordinate},
output/.style= {coordinate},
pinstyle/.style = {pin edge={to-,thin,black}
}
}

\begin{abstract}

Diffusion processes are instrumental to describe the movement of a continuous quantity in a generic network of interacting agents. Here, we present a probabilistic framework for diffusion in networks and propose to classify agent interactions according to two protocols where the total network quantity is conserved or variable. For both protocols, our focus is on asymmetric interactions between agents involving directed graphs. Specifically, we define how the dynamics of conservative and non-conservative networks relate to the weighted in-degree Laplacian and the weighted out-degree Laplacian. Our framework allows the addition and subtraction of the considered quantity to and from a set of nodes. This enables the modeling of stubborn agents with time-invariant quantities, and the process of dynamic learning. We highlight several stability and convergence characteristics of our framework, and define the conditions under which asymptotic convergence is guaranteed when the network topology is variable. In addition, we indicate how our framework accommodates external network control and targeted network design. We show how network diffusion can be externally manipulated by applying time-varying input functions at individual nodes. 
Desirable network structures can also be constructed by adjusting the dominant diffusion modes. To this purpose, we propose a Markov decision process that learns these network adjustments through a reinforcement learning algorithm, suitable for large networks. The presented network control and design schemes enable flow modifications that allow the alteration of the dynamic and stationary behavior of the network in conservative and non-conservative networks.

\end{abstract} 
\begin{IEEEkeywords}
\normalfont\bfseries
multi-agent networks, diffusion process, directed graphs, graph Laplacian, network dynamics, network control, network design
\end{IEEEkeywords}
\section{Introduction}
\label{sec:Introduction}

Large-scale network dynamics received ample research interest over the last decade in the context of group coordination~\cite{jadbabaie2003coordination}, distributed algorithms~\cite{boyd2006randomized,dimakis2010gossip}, network control \cite{kandhwaycampaigning}, distributed optimization~\cite{chen2013distributed}, consensus problems~\cite{acemoglu2011opinion,matei2012randomized}, or herding and flocking behavior~\cite{blondel2005convergence}. Network dynamics involve interactions between individual agents and relate to the diffusion of a continuous quantity within a generic network~\cite{ren1998diffusion,dargatz2007diffusion,dargatz2010bayesian,yildiz2010asymmetric,como2011robust,kiessling2011diffusion,acemoglu2013opinion}, including assets among financial institutions and opinions within social networks. In this work, we establish a probabilistic diffusion framework that describes in continuous time the movement of such a continuous quantity or node property within a multi-agent network. The main contributions of our framework are (i) to classify linear update rules and characterize the corresponding dynamical network behavior, (ii) to include external inputs as control variables into the framework resulting in the study of the network stability and convergence under different conditions, and (iii) to impose flow modifications by means of network control and network design. 

This framework builds on consensus models involving Markovian state transitions that have been formulated in discrete time~\cite{tcha1977optimal,jadbabaie2003coordination,olfati2004consensus,olfati2007consensus,acemoglu2008convergence,banisch2012agent} and continuous time~\cite{yildiz2011discrete}, as well as multi-agent gossiping models describing interactions between pairs of agents~\cite{acemoglu2013opinion}. We generalize these models and introduce a classification of linear inter-agent update rules depending on whether the global quantity initially present in the network is conserved or not. This enables our framework to account for a wider range of network phenomena. For example, financial or trade assets could be modeled using conservative flows, while opinions follow non-conservative network dynamics. Going beyond symmetric, unweighted graphs \cite{banerjee2008spectrum}, we focus on weighted graphs with asymmetric update rules and derive the corresponding differential equation that describes the diffusion of the considered quantity averaged over all sample paths. The nature of the update rule affects the stability and convergence of diffusion processes over networks \cite{blondel2005convergence}. We highlight the differences in the transient and stationary characteristics of conservative and non-conservative networks, the effects of network asymmetry, as well as the conditions for convergence in networks with switching topologies.

Building on existing leader-follower models~\cite{hong2006tracking,ji2008containment}, we extend the homogeneous differential equation that describes the diffusion process to its inhomogeneous form and use the inhomogeneous term to model an exogenous quantity input. By doing so, we can model the addition and subtraction of the considered quantity to and from the multi-agent network. Our framework unifies several types of systems involving external influences, including external restrictions on the quantity values of selected agents, as well as externally imposed reference values obtained from measurements or predefined system targets.  For instance, we demonstrate that diffusion processes considered by other authors, such as the presence of stubborn agents~\cite{acemoglu2013opinion} or the process of dynamic learning~\cite{acemoglu2008convergence}, can be expressed in terms of the inhomogeneous differential equation. In addition, we demonstrate that under some constraints on the input vector and the network topology, networks with exogenous excitations can remain stable and converge to a steady state. 

Our framework enables network control through the introduction of external excitations at individual nodes or the adjustment of the network structure. By transforming the basis of agents using standard results in state-space control to isolate the time dynamics of the characteristic modes of the system from one another~\cite{dahleh2004lectures}, and implementing a suitable controller for one of the modes, we can design time-varying input functions for each of the individual agents. A suitable choice of the controller will allow the network designer to guide the system towards a desired objective in terms of diffusion control. 
Different types of networks are associated with different modal compositions and require different input design considerations for the achievement of control. Besides externally imposing control on a network, we also address the modification of inter-agent interactions to impose network control. By modifying the transition rate matrix that governs the network dynamics, the eigenvalues of the characteristic modes of the system are shifted, and the dynamics of the system are modified. Furthermore, we present an adaptive heuristic that models the modification of the network structure as a Markov decision process (MDP). In this MDP, the state space is defined as the set of individual agents, while the action space is defined as perturbations to the transition rate matrix governing the system evolution. Using an appropriate reward function and a suitable time schedule for decision making, the action space can be searched efficiently using a reinforcement learning algorithm based on standard learning techniques \cite{barto1998reinforcement}, and suboptimal control with quick convergence can be achieved. By enabling each agent to be associated with a cost or a reward, we allow the isolation and control of individual nodes through this modification. 
In conclusion, the proposed network control and network design processes form a toolbox for the micromanagement of diffusion in networks.


The remainder of the paper is structured as follows. In Section \ref{sec:Master}, we introduce two essential classes of stochastic update rules that are able to model diffusion in networks, and derive the corresponding linear system of ordinary differential equations that describe the network dynamics for the corresponding conservative and non-conservative networks. In Section \ref{sec:homoSC}, we further discuss the stability and convergence characteristics of these differential equations. In Section \ref{sec:inhomoSC}, we extend the homogeneous equations to their inhomogeneous forms, and discuss their stability and convergence characteristics. In Section \ref{sec:External}, we look at how external excitation can possibly achieve control, and in Section \ref{sec:Internal}, we discuss the impact and possible implementations of network structure modification. In Sections \ref{sec:External} and \ref{sec:Internal}, we also provide case studies to illustrate how these control processes can be implemented. In Section \ref{sec:Conclusion}, we conclude our work by looking at some future directions and extensions.
\section{System model and stochastic update rules}
\label{sec:Master}

We consider a population $\mathcal{V}$ of interacting agents $\mathcal{V}_i$, where $i \in \mathcal{I} = \{1,2,\ldots, n:n \in \mathbb{Z}^+\}$. These agents have the capacity to generate, destroy, store, transmit and receive a continuous quantity or node property $\mathcal{S}_i(t) \in \mathbb{R}$ that can change with time $t \geq 0$. We denote the vector of properties by $\mathcal{S}(t) = \{\mathcal{S}_i(t):i\in\mathcal{I}\}$. Given some set of initial conditions $\mathcal{S}_i(0) = \mathcal{S}_{i,0}$ for some $\mathcal{S}_{i,0} \in \mathbb{R}$, as well as probabilistic interactions between the agents, the system evolves over time and the property $\mathcal{S}_i(t)$ of each agent can be monitored over time. The probabilistic interactions between the agents can be described by 
a weighted digraph $G = (\mathcal{V}, \vec{\mathcal{E}}, w)$ where $\mathcal{V}$ is the set of agents, $\vec{\mathcal{E}}$ is the set of directed links between pairs of agents in $\mathcal{V}$, and $w \colon \vec{\mathcal{E}} \to  \mathbb{R}^+$ is a function that represents the weights. 
The weighted adjacency matrix can be represented as 
\be
A_G(i,j) = 
\left\{
	\begin{array}{ll}
		w(i,j) & \mbox{if } (i,j) \in \vec{\mathcal{E}} \, , \\
		0 & \mbox{otherwise }\, .
	\end{array}
\right.
\ee
Furthermore, we define the weighted in-degree matrix $D_G^{(\mathrm{in})}$ and the weighted out-degree matrix $D_G^{(\mathrm{out})}$ as the diagonal matrices with
\be
D_G^{(\mathrm{in})}(j,j) = \sum_i A_G(i,j) \, ,
\ee
and
\be
D_G^{(\mathrm{out})}(i,i) = \sum_j A_G(i,j) \, .
\ee
The exact meaning of the edge weight and the direction of the links will be made explicit when we analyze the update rules, referred to as protocols, in the following Section. Since the interactions between agents can be asymmetric, we will introduce two different Laplacians that refer to the in-degree and the out-degree of each node. 
We define the weighted in-degree Laplacian as 
\be
L^{(\mathrm{in})}(G) = D_G^{(\mathrm{in})} -A_G\, .
\label{eqn:inLap}
\ee
Conversely, the weighted out-degree Laplacian is defined as 
\be
L^{(\mathrm{out})}(G) = D_G^{(\mathrm{out})} - A_G\, .
\label{eqn:outLap}
\ee
Depending on the choice of the stochastic update rule used in the probabilistic interactions between the agents, we will characterize the flow dynamics of networks operating under different protocols. Here, we describe two main classes of linear update rules that result in linear, time-invariant differential equations in the node property and corresponding linear, time-invariant matrix differential equations in the diffusion probabilities.\footnote{For non-linear update rules, we refer to \cite{SabMur:ACC03}.}  These update rules are distinguished by the conservation or the non-conservation of the total property initially present in the network. The networks applying the conservative and non-conservative protocols will be referred to as conservative and non-conservative networks respectively.

\subsection{Conservative networks}
We first consider a protocol where the total amount of property in the network is conserved at every instant in time. Conservative updating is relevant for the description of conservative flow dynamics in a network, including the flow of material and physical assets. In this respect, conservative networks are able to represent stylized instances of hydraulic, financial, or trade networks. 

In conservative networks, agents obey the conservative update rule
\begin{align}
\mathcal{S}_i(t+\Delta t) &= \mathcal{S}_i(t) + \mathcal{C}_{ij} \mathcal{S}_j(t) \notag \\
\mathcal{S}_j(t+\Delta t) &= (1 - \mathcal{C}_{ij}) \mathcal{S}_j(t) \, , \tag{P1}
\label{eqn:P1}
\end{align}
where $i,j \in \mathcal{I}$, $(i,j) \in \vec{\mathcal{E}}$. The parameter $\mathcal{C}_{ij} \in (0,1]$ is a measure of liability or responsibility of agent $j$ towards agent $i$, and $\Delta t$ is an infinitesimal time interval. For every edge $(i,j) \in \vec{\mathcal{E}}$, there exists a clock that follows an independent Poisson process with rate $r_{ij} > 0$. The protocol \eqref{eqn:P1} is executed for nodes $i$ and $j$ when the independent Poisson clock of $(i,j)$ ticks. The following Lemma characterizes the property dynamics of the instance-averaged value of $\mathcal{S}$ in conservative networks.
\begin{Lemma} \label{lem:conPropDyn}
Let $\bar{\mathcal{S}}(t)$ denote the expected value of $\mathcal{S}$ averaged over all sample paths. The dynamics of the expected property for a network applying \eqref{eqn:P1} are defined by 
\be
\dot{\bar{\mathcal{S}}}(t) = \mathcal{Q}\bar{\mathcal{S}}(t) \, , \tag{A}
\label{eqn:markov2}
\ee
where $\mathcal{Q} = -L^{(\mathrm{in})}(G)$, the weighing function is defined as $w(i,j) = C_{ij} r_{ij}$, and 
\be
\mathcal{Q}_{ij} = 
\left\{
	\begin{array}{ll}
		C_{ij} r_{ij} & \quad \mbox{if } i \neq j \, , \\
		-\sum_{i\neq j} C_{ij} r_{ij} & \quad \mbox{if } i = j \, .
	\end{array}
\right.
\label{eqn:markovCoeffP1}
\ee 
\end{Lemma}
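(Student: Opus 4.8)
The plan is to obtain \eqref{eqn:markov2} as the forward (generator) equation of the continuous-time Markov jump process defined by the Poisson clocks, by computing the infinitesimal mean drift of $\mathcal{S}$ and reading off the coefficient matrix. First I would fix a small interval $[t,t+\Delta t)$ and condition on the configuration $\mathcal{S}(t)$. Because every edge $(i,j)\in\vec{\mathcal{E}}$ carries an \emph{independent} Poisson clock of rate $r_{ij}$, the probability that the clock of $(i,j)$ ticks exactly once in the interval is $r_{ij}\Delta t + o(\Delta t)$, the probability of two or more ticks on a single edge is $o(\Delta t)$, and the probability that two distinct clocks both tick is $o(\Delta t)$ by independence. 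Hence, up to $o(\Delta t)$, at most one edge fires, and the contributions of different edges to the expected increment are additive.

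Second, I would track the conditional expected increment of a single component. By \eqref{eqn:P1}, a tick on edge $(i,j)$ sends a quantity $\mathcal{C}_{ij}\mathcal{S}_j$ from $j$ to $i$: it raises $\mathcal{S}_i$ by $\mathcal{C}_{ij}\mathcal{S}_j$ and lowers $\mathcal{S}_j$ by the same amount, leaving all other components untouched. Summing over the edges incident to a node $k$, node $k$ gains $\mathcal{C}_{kj}\mathcal{S}_j(t)$ from each in-edge $(k,j)$ and loses $\mathcal{C}_{ik}\mathcal{S}_k(t)$ from each out-edge $(i,k)$, each weighted by its tick probability, so that
\be
\mathbb{E}\!\left[\mathcal{S}_k(t+\Delta t)-\mathcal{S}_k(t)\mid \mathcal{S}(t)\right] = \Delta t \Big( \sum_{j\neq k} \mathcal{C}_{kj} r_{kj}\, \mathcal{S}_j(t) - \sum_{i\neq k} \mathcal{C}_{ik} r_{ik}\, \mathcal{S}_k(t) \Big) + o(\Delta t),
\ee
where off-diagonal terms vanish automatically for non-edges (no clock, i.e.\ rate $0$).

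Third, I would apply the law of total expectation to remove the conditioning; since the right-hand side is \emph{linear} in $\mathcal{S}(t)$, the outer expectation simply replaces $\mathcal{S}(t)$ by $\bar{\mathcal{S}}(t)$. Dividing by $\Delta t$ and letting $\Delta t\to 0$ gives the componentwise ODE $\dot{\bar{\mathcal{S}}}_k = \sum_{j\neq k}\mathcal{C}_{kj}r_{kj}\,\bar{\mathcal{S}}_j - \big(\sum_{i\neq k}\mathcal{C}_{ik}r_{ik}\big)\bar{\mathcal{S}}_k$. Reading off the coefficient matrix yields exactly \eqref{eqn:markovCoeffP1}: the off-diagonal entry in row $k$, column $j$ equals $\mathcal{C}_{kj}r_{kj}=A_G(k,j)$, while the diagonal entry equals $-\sum_{i\neq k}\mathcal{C}_{ik}r_{ik}=-D_G^{(\mathrm{in})}(k,k)$. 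Comparing with \eqref{eqn:inLap} under the weighting $w(i,j)=\mathcal{C}_{ij}r_{ij}$ identifies $\mathcal{Q}=A_G-D_G^{(\mathrm{in})}=-L^{(\mathrm{in})}(G)$, establishing \eqref{eqn:markov2}.

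The main obstacle is the bookkeeping of flow direction together with the $o(\Delta t)$ accounting. One must be careful that in a tick of $(i,j)$ the quantity flows \emph{from} $j$ \emph{to} $i$, so a node's losses are summed over its out-edges and its gains over its in-edges; it is precisely the loss column-sum structure that produces the in-degree matrix $D_G^{(\mathrm{in})}$ on the diagonal, yielding an in-degree (rather than out-degree) Laplacian. The remaining care is to confirm that simultaneous ticks and multiple ticks contribute only at order $o(\Delta t)$, which follows from the independence of the edge clocks and the standard small-time expansion of the Poisson distribution, so that the single-edge additive accounting is exact in the limit.
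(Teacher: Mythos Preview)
Your proposal is correct and follows essentially the same approach as the paper: compute the expected increment of each component over a short interval using the Poisson tick probabilities, divide by $\Delta t$, pass to the limit, and read off the matrix as $-L^{(\mathrm{in})}(G)$. If anything, your treatment is more careful than the paper's, which uses a global-clock heuristic with $r\approx 1/\Delta t$ rather than the explicit $o(\Delta t)$ accounting and tower-property step you include.
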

\begin{proof}
We first note that the total update rate for an node $i \in \mathcal{V}$ is given by $r_i = \sum_j r_{ij}$, and that the total update rate of the network is given by $r = \sum_i r_i$. Assume that a global network clock is ticking at rate $r$. Then, the probability that the clock will activate edge $(i,j)$ is given by $r_{ij}/r$, where in the limit of large-scale networks $r \approx 1/\Delta t$. Consequently, when property updating follows \eqref{eqn:P1}, the probabilistic update of the property of the nodes corresponding to the edge $(i,j)$ is given by
\begin{align}
\bar{S}_i(t+\Delta t) - \bar{S}_i(t) &= \sum_{j \neq i} r_{ij} \, \Delta t \, C_{ij}\,\bar{S}_j(t)\\
\bar{S}_j(t+\Delta t) - \bar{S}_j(t) &= - \sum_{i \neq j} r_{ij} \, \Delta t \, C_{ij}\,\bar{S}_j(t) \, ,
\end{align}
which can be succinctly written as
\be
\bar{\mathcal{S}}_i(t+\Delta t) - \bar{\mathcal{S}}_i(t) = \sum_{j \neq i}  \Delta t (r_{ij}C_{ij} \bar{\mathcal{S}}_j(t) - r_{ji}C_{ji} \bar{\mathcal{S}}_i(t)) \, .
\ee
Dividing by $\Delta t$ and taking the limit for $\Delta t \to 0$, we get a system of differential equations
\be
\dot{\bar{S}}_i(t) = \sum_{j \neq i} r_{ij}  C_{ij}\bar{S}_j(t) - r_{ji}C_{ji}\bar{S}_i(t) \, ,
\ee
or written differently
\be
\dot{\bar{S}}(t) = -L^{(\mathrm{in})}(G) \bar{S}(t) \, ,
\ee
which concludes the proof.
\end{proof}
We will refer to \eqref{eqn:markov2} as the governing equation. 
\begin{Corollary}
The matrix $\mathcal{Q} = -L^{(\mathrm{in})}(G)$ of a conservative network represents the transition rate matrix of a continuous-time Markov chain (CTMC). The transition probabilities $\mathcal{P}_{ij}(t)$ of the CTMC are solutions of the following differential equation
\be
\dot{\mathcal{P}}(t) = \mathcal{Q}\mathcal{P}(t) \, , \mathcal{P}(t) = [\mathcal{P}_{ij}(t)]\,.
\label{eqn:markov1Conservative}
\ee
\end{Corollary}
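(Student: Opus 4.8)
The plan is to establish the Corollary in two stages that mirror its two assertions. First I would verify that $\mathcal{Q} = -L^{(\mathrm{in})}(G)$ satisfies the three defining algebraic properties of an infinitesimal generator (transition rate matrix) of a CTMC; then I would invoke the standard theory of CTMCs to conclude that the associated transition probability matrix solves the stated differential equation. The entire argument rests on the explicit entries of $\mathcal{Q}$ recorded in \eqref{eqn:markovCoeffP1}, so no new analysis of the dynamics is needed beyond Lemma~\ref{lem:conPropDyn}.

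For the first stage, recall that a matrix is a valid transition rate matrix precisely when its off-diagonal entries are non-negative, its diagonal entries are non-positive, and each column sums to zero (the column convention being dictated by the left-multiplication form $\dot{\bar{\mathcal{S}}} = \mathcal{Q}\bar{\mathcal{S}}$ in \eqref{eqn:markov2}). From \eqref{eqn:markovCoeffP1}, the off-diagonal entry $\mathcal{Q}_{ij} = C_{ij} r_{ij}$ is non-negative because $C_{ij} \in (0,1]$ and $r_{ij} > 0$, while the diagonal entry $\mathcal{Q}_{jj} = -\sum_{i \neq j} C_{ij} r_{ij}$ is manifestly non-positive. Summing the $j$-th column, $\sum_i \mathcal{Q}_{ij} = \sum_{i \neq j} C_{ij} r_{ij} + \mathcal{Q}_{jj} = 0$, which is exactly the statement that $L^{(\mathrm{in})}(G)$ has zero column sums, since the diagonal in-degree $D_G^{(\mathrm{in})}(j,j)$ cancels the off-diagonal column entries of $A_G$. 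These three facts together certify that $\mathcal{Q}$ generates a CTMC.

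For the second stage, I would appeal to the fact that a generator $\mathcal{Q}$ determines the transition probability matrix through the matrix exponential $\mathcal{P}(t) = e^{\mathcal{Q}t}$. Differentiating this expression yields $\dot{\mathcal{P}}(t) = \mathcal{Q} e^{\mathcal{Q}t} = \mathcal{Q}\mathcal{P}(t)$, which is precisely \eqref{eqn:markov1Conservative} and shares the structural form of the governing equation. The step I expect to require the most care is not the differentiation but the justification that $\mathcal{P}(t)$ is a genuine stochastic matrix for every $t \geq 0$, i.e. that $e^{\mathcal{Q}t}$ has non-negative entries and unit column sums. The latter follows immediately from the zero-column-sum property, since $\mathbf{1}^{\top}\mathcal{Q} = 0$ gives $\mathbf{1}^{\top} e^{\mathcal{Q}t} = \mathbf{1}^{\top}$; the former follows from a uniformization argument, writing $\mathcal{Q} = \lambda(\Pi - I)$ for a sufficiently large $\lambda$ and a non-negative matrix $\Pi = I + \mathcal{Q}/\lambda$, so that $e^{\mathcal{Q}t} = e^{-\lambda t}\sum_k (\lambda t)^k \Pi^k / k!$ is a sum of non-negative terms. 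This confirms that $\mathcal{Q}$ indeed governs a bona fide CTMC and completes the proof.
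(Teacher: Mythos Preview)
Your proposal is correct, but it proceeds along a somewhat different route from the paper. The paper checks only the zero-column-sum condition \eqref{eqn:CTMC1Conservative} and then immediately interprets $\mathcal{Q}$ as the generator of a CTMC on the agent set $\mathcal{V}$; from this probabilistic picture it writes down the Chapman--Kolmogorov semigroup identity $\mathcal{P}(t+u)=\mathcal{P}(t)\mathcal{P}(u)$, differentiates at the origin, and identifies $\mathcal{Q}=\dot{\mathcal{P}}(0)$ to obtain \eqref{eqn:markov1Conservative}. You instead verify all three generator axioms explicitly from \eqref{eqn:markovCoeffP1}, take $\mathcal{P}(t)=e^{\mathcal{Q}t}$ as the definition, differentiate it, and then confirm stochasticity of $e^{\mathcal{Q}t}$ via a uniformization argument. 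Your approach is more self-contained and makes the non-negativity of $\mathcal{P}(t)$ explicit, which the paper does not address; the paper's derivation, on the other hand, emphasizes the probabilistic interpretation (an infinitesimal piece of property performing a random walk on $\mathcal{V}$) and makes the Markov structure visible through Chapman--Kolmogorov rather than invoking it as a black box.
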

\begin{proof}
From \eqref{eqn:markovCoeffP1}, we notice that the 
\begin{equation}
\mathcal{Q}_{jj} \doteq -\sum_{i\neq j} \mathcal{Q}_{ij} \quad \forall \enskip j \in \mathcal{I},
\label{eqn:CTMC1Conservative}
\end{equation}
such that the columns of $\mathcal{Q}$ sum to zero. It follows that $\mathcal{Q}$ represents the transition rate matrix of a positive recurrent CTMC. Accordingly, the spread of property from one agent to another can be described by a CTMC whose state space is equivalent to the agent set $\mathcal{V}$ and whose instantaneous state is denoted by $X(t)$. Given that the diffusion dynamics can be described by a CTMC, we can write \cite{van2006performance}
\begin{equation}
\Pr[X(t+u)=i] = \sum_j \mathcal{P}_{ij}(u) \Pr[X(t)=j] \enskip, \forall \enskip u \in (0,\infty)
\label{eq:traPro}
\end{equation}
for some state $i \in \mathcal{I}$, where
\begin{equation}
\mathcal{P}_{ij}(u) = \Pr[X(t+u)=i|X(t)=j]
\end{equation}
describes the probability that an infinitesimal piece of property is in state $i$ (i.e. agent $i$) at time $t+u$, given that it was in state $j$ (i.e. agent $j$) at time $t$. From \eqref{eq:traPro}, it follows that the  transition probability matrix $\mathcal{P} = [\mathcal{P}_{ij}]$ satisfies 
\begin{equation}
\mathcal{P}(t+u) = \mathcal{P}(t)\mathcal{P}(u) \enskip\enskip \forall \enskip t,u \in (0,\infty) \, ,
\end{equation}
which is known as the Chapman-Kolmogorov equation. Assuming the limits $\lim_{h \rightarrow 0} \mathcal{P}(h) = \mathcal{P}(0)$ and $\lim_{h \rightarrow 0} \dot{\mathcal{P}}(h) = \dot{\mathcal{P}}(0)$ exist, we can then adopt the relation $\mathcal{P}(0) = \mathbb{I}$ where $\mathbb{I}$ is the identity matrix, and thereafter the following deterministic differential equation 
\begin{equation}
\dot{\mathcal{P}}(t) = \mathcal{Q}\mathcal{P}(t)
\end{equation}
where $\mathcal{Q} = \dot{\mathcal{P}}(0) = [\mathcal{Q}_{ij}]$.
\end{proof} 
We can interpret \eqref{eqn:markov1Conservative} as an equation describing the diffusion of transition probabilities between the various agents over time.

Note that Lemma \ref{lem:conPropDyn} can also be demonstrated via a probabilistic interpretation of the CTMC. It can be shown that $\mathcal{P}(t)$ is continuous in $t$, $\forall t \geq 0$ \cite{van2006performance}. For a conservative network, it follows from \eqref{eqn:CTMC1Conservative} that the columns of $\dot{\mathcal{P}}(t)$ always sum to zero. This shows that for each origin agent $j$, the sum over all the agents $i$ of the time-cumulative transition probabilities $\mathcal{P}_{ij}$  is conserved at a value of 1 over time, as expected from conventional laws of probability. We now consider the expected value of $\mathcal{S}_i$ averaged over all sample paths, which is given by the column vector
\begin{equation}
\bar{\mathcal{S}}(t) = \mathcal{P}(t)\bar{\mathcal{S}}(0) \, .
\label{eqn:instAvg}
\end{equation}
This illustrates that the sum of $\bar{\mathcal{S}}$ over all agents remains a conserved quantity when \eqref{eqn:P1} is applied. If the system starts from a known, deterministic state, then the bar in $\bar{\mathcal{S}}(0)$ can be omitted. Combining \eqref{eqn:markov1Conservative} and \eqref{eqn:instAvg}, we find again \eqref{eqn:markov2}.

\subsection{Non-conservative networks}
We now consider a protocol where property diffuses between agents by means of a convex updating protocol. This protocol is of interest for opinion dynamics \cite{acemoglu2013opinion, dimakis2010gossip}, or preference dynamics in cultural theory \cite{thompson1990cultural}. 

In non-conservative networks, agents obey the following convex update rule \cite{acemoglu2013opinion}
\begin{equation}
\mathcal{S}_i(t + \Delta t) = \mathcal{C}_{ij}\mathcal{S}_j(t) + (1 - \mathcal{C}_{ij})\mathcal{S}_i(t) \, , \tag{P2}
\label{eqn:P2}
\end{equation}
whenever the Poisson clock ticks for a pair of agents $i,j \in \mathcal{I}$, $(i,j) \in \vec{\mathcal{E}}$. In other words, when the ($i,j$)-th Poisson clock activates the link between agents $i$ and $j$, agent $i$ is triggered to poll the property value of agent $j$ with a confidence measure of $\mathcal{C}_{ij}$ and update its own value accordingly. The following Lemma characterizes the property dynamics in non-conservative networks. 
\begin{Lemma} \label{lem:nonConPropDyn}
The dynamics of the expected property for a network applying \eqref{eqn:P2} are defined by the governing equation
\be
\dot{\bar{\mathcal{S}}}(t) = \mathcal{Q}\bar{\mathcal{S}}(t) \, , \tag{A}
\label{eqn:markov2}
\ee
where $\mathcal{Q} = -L^{(\mathrm{out})}(G)$, $w(i,j) = C_{ij} r_{ij}$, and 
\be
\mathcal{Q}_{ij} = 
\left\{
	\begin{array}{ll}
		C_{ij} r_{ij} & \quad \mbox{if } i \neq j \, , \\
		-\sum_{j\neq i} C_{ij} r_{ij} & \quad \mbox{if } i = j  \, .
	\end{array}
\right.
\label{eqn:markovCoeffP2}
\ee 
\end{Lemma}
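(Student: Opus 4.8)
The plan is to mirror the Poisson-clock argument used in the proof of Lemma~\ref{lem:conPropDyn}, modifying it to reflect the essential feature of the convex rule \eqref{eqn:P2}: only the polling agent $i$ updates its value, while agent $j$ is left unchanged. This one-sided update is exactly what replaces the in-degree Laplacian of the conservative case by the out-degree Laplacian here.

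First I would invoke the same global-clock construction: each edge $(i,j)\in\vec{\mathcal{E}}$ fires according to an independent Poisson process of rate $r_{ij}$, so that over an infinitesimal interval the edge $(i,j)$ is activated with probability $r_{ij}\,\Delta t$ (with $r\approx 1/\Delta t$ as before). When $(i,j)$ fires, \eqref{eqn:P2} produces the increment $\mathcal{S}_i \mapsto \mathcal{S}_i + C_{ij}(\mathcal{S}_j - \mathcal{S}_i)$, and $\mathcal{S}_j$ does not change. Taking expectations over all sample paths and summing the contributions of every edge that can update node $i$, I obtain
\[
\bar{\mathcal{S}}_i(t+\Delta t) - \bar{\mathcal{S}}_i(t) = \sum_{j \neq i} r_{ij}\,\Delta t\,C_{ij}\bigl(\bar{\mathcal{S}}_j(t) - \bar{\mathcal{S}}_i(t)\bigr).
\]
Dividing by $\Delta t$ and letting $\Delta t \to 0$ yields the componentwise system
\[
\dot{\bar{\mathcal{S}}}_i(t) = \sum_{j \neq i} r_{ij}C_{ij}\,\bar{\mathcal{S}}_j(t) - \Bigl(\sum_{j \neq i} r_{ij}C_{ij}\Bigr)\bar{\mathcal{S}}_i(t),
\]
whose coefficients are precisely the entries of $\mathcal{Q}$ in \eqref{eqn:markovCoeffP2}. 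To finish I would identify $\mathcal{Q}$ with $-L^{(\mathrm{out})}(G)$: under the weighting $w(i,j)=C_{ij}r_{ij}$ the off-diagonal entries reproduce $A_G(i,j)$, while the diagonal term $-\sum_{j\neq i}C_{ij}r_{ij}$ is exactly $-D_G^{(\mathrm{out})}(i,i)$ by the definition of the out-degree matrix. Hence $\mathcal{Q}=A_G - D_G^{(\mathrm{out})} = -L^{(\mathrm{out})}(G)$, as in \eqref{eqn:outLap}, and the governing equation follows.

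The one point that needs care --- and the sole structural difference from Lemma~\ref{lem:conPropDyn} --- is the bookkeeping of which endpoint moves. Because \eqref{eqn:P2} updates the receiver $i$ alone, the diagonal correction aggregates the firing rates over the out-edges of $i$, i.e.\ the sum runs over $j\neq i$, so the out-degree emerges; this is to be contrasted with the two-sided transfer in \eqref{eqn:P1}, where both endpoints are modified and the summation instead produced the in-degree. I expect no analytic difficulty beyond this combinatorial distinction, since the limiting step is identical to the conservative case.
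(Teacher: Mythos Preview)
Your proposal is correct and follows essentially the same route as the paper: write the expected one-step increment of $\bar{\mathcal{S}}_i$ as $\sum_{j\neq i} r_{ij}\,\Delta t\,C_{ij}(\bar{\mathcal{S}}_j-\bar{\mathcal{S}}_i)$, divide by $\Delta t$, and pass to the limit to recover \eqref{eqn:markov2} with $\mathcal{Q}=-L^{(\mathrm{out})}(G)$. Your added remarks spelling out the identification $\mathcal{Q}=A_G-D_G^{(\mathrm{out})}$ and contrasting the out-degree bookkeeping with the in-degree case of Lemma~\ref{lem:conPropDyn} are more explicit than the paper's version, but the argument is the same.
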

\begin{proof}
When property updating follows \eqref{eqn:P2}, the probabilistic update of the property of node $v_i \in \mathcal{V}$ is given by
\begin{equation}
\bar{\mathcal{S}}_i(t+\Delta t) - \bar{\mathcal{S}}_i(t) = \sum_{j \neq i} (r_{ij} \Delta t) C_{ij} (\bar{\mathcal{S}}_j(t) - \bar{\mathcal{S}}_i(t)) \, .
\end{equation}
Dividing by $\Delta t$ and taking the limit for $\Delta t \rightarrow 0$, we obtain the instance-averaged linear differential equations represented by \eqref{eqn:markov2} with $\mathcal{Q} = -L^{(\mathrm{out})}(G)$. 
\end{proof}
Note that Lemma \ref{lem:nonConPropDyn} extends the basic consensus algorithm where $\dot{\bar{\mathcal{S}}}_i(t) = \sum_{j \in \mathcal{N}_i} \left(\bar{\mathcal{S}}_j(t) - \bar{\mathcal{S}}_i(t)\right)$, with $\mathcal{N}_i$ the neighborhood of $i$, to asymmetrically weighted updating. The relevance of the asymmetry will be further discussed in Section III. 
\begin{Corollary}
The matrix $\mathcal{Q} = -L^{(\mathrm{out})}(G)$ of a non-conservative network represents the transition rate matrix of a CTMC, and the transition probabilities of the CTMC are solutions of the following differential equation
\begin{equation}
\dot{\mathcal{P}}(t) = \mathcal{Q} \mathcal{P}(t)\, ,
\label{eqn:markov1Consensus}
\end{equation}
\end{Corollary}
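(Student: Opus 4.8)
The plan is to mirror the argument used for the conservative Corollary, with the one essential modification that here the \emph{rows} of $\mathcal{Q}$ sum to zero rather than the columns. First I would read off from \eqref{eqn:markovCoeffP2} that for every $i \in \mathcal{I}$,
\begin{equation}
\mathcal{Q}_{ii} = -\sum_{j \neq i} C_{ij} r_{ij} = -\sum_{j \neq i} \mathcal{Q}_{ij} \, ,
\end{equation}
so that $\sum_j \mathcal{Q}_{ij} = 0$, i.e.\ $\mathcal{Q}\mathbf{1} = \mathbf{0}$. Combined with the fact that the off-diagonal entries $\mathcal{Q}_{ij} = C_{ij}r_{ij}$ are non-negative (and strictly positive exactly on the edges of $G$, since $C_{ij}\in(0,1]$ and $r_{ij}>0$), this is precisely the defining property of the transition rate matrix of a CTMC in the standard convention, where each row collects the rates out of a fixed origin state.

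Next I would introduce the transition probability matrix $\mathcal{P}(t)$ of this chain and run the same Chapman--Kolmogorov derivation as in the conservative case. The point of difference is the indexing: because the rows (rather than the columns) of $\mathcal{Q}$ sum to zero, the natural interpretation is $\mathcal{P}_{ij}(u) = \Pr[X(t+u)=j \mid X(t)=i]$, so that $\mathcal{P}(t)$ is row-stochastic, $\mathcal{P}(t)\mathbf{1}=\mathbf{1}$, rather than column-stochastic. With this convention the Chapman--Kolmogorov relation again reads $\mathcal{P}(t+u) = \mathcal{P}(t)\mathcal{P}(u)$, and under the same regularity assumptions invoked before---that $\lim_{h\to 0}\mathcal{P}(h)=\mathcal{P}(0)=\mathbb{I}$ and $\lim_{h\to 0}\dot{\mathcal{P}}(h)=\dot{\mathcal{P}}(0)$ exist---I would set $\mathcal{Q}=\dot{\mathcal{P}}(0)$. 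Differentiating $\mathcal{P}(t+u)=\mathcal{P}(u)\mathcal{P}(t)$ with respect to $u$ at $u=0$ then yields $\dot{\mathcal{P}}(t)=\mathcal{Q}\mathcal{P}(t)$, which is \eqref{eqn:markov1Consensus}. Equivalently, one checks directly that $\mathcal{P}(t)=e^{\mathcal{Q}t}$ solves this equation and stays row-stochastic, since $\mathcal{Q}\mathbf{1}=\mathbf{0}$ forces $\mathcal{P}(t)\mathbf{1}=\mathbf{1}$ for all $t\geq 0$.

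The step requiring the most care---the genuine point of departure from the conservative Corollary---is the bookkeeping of this convention flip. In the conservative case the column sums vanish, $\mathcal{P}$ is column-stochastic, and $\dot{\mathcal{P}}=\mathcal{Q}\mathcal{P}$ arises as a forward-type equation; here the row sums vanish, $\mathcal{P}$ is row-stochastic, and the \emph{same} matrix equation should be recognized as the Kolmogorov backward equation. The form $\dot{\mathcal{P}}=\mathcal{Q}\mathcal{P}$ survives unchanged because $\mathcal{P}(t)=e^{\mathcal{Q}t}$ commutes with $\mathcal{Q}$ regardless of which one-sided stochasticity holds; the only thing one must not conflate is which marginal of $\mathcal{P}$ is conserved. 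That distinction is exactly what ties \eqref{eqn:P2} to the out-degree Laplacian $L^{(\mathrm{out})}(G)$, as opposed to the in-degree Laplacian $L^{(\mathrm{in})}(G)$ that governs \eqref{eqn:P1}.
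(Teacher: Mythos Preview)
Your proposal is correct and takes essentially the same approach as the paper: the paper's proof simply notes that by \eqref{eqn:markovCoeffP2} the rows of $\mathcal{Q}$ sum to zero (your first displayed equation) and then declares the remainder ``similar to the proof of Corollary~1,'' identifying the resulting CTMC as a random walk on $G$. You have filled in precisely the details the paper leaves implicit---the row-stochastic convention for $\mathcal{P}$, the Chapman--Kolmogorov step, and the backward/forward bookkeeping---so there is nothing to correct.
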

\begin{proof}
The proof is similar to the proof of Corollary 1, and follows from the fact that 
\begin{equation}
\mathcal{Q}_{ii} \doteq -\sum_{j\neq i} \mathcal{Q}_{ij} \quad \forall \enskip i \in \mathcal{I} \, ,
\label{eqn:CTMC1Consensus}
\end{equation}
meaning that the rows of $\mathcal{Q}$ all sum to zero. The CTMC corresponding to $\mathcal{Q}$ represents a random walk on $G$.
\end{proof}
In comparison with the transfer of an infinitesimal piece of property in a conservative network, $\mathcal{P}_{ij}$ in non-conservative networks represents the probability that node $i$ polls node $j$ to update its own value using the weight $\mathcal{C}_{ij}$. For networks applying \eqref{eqn:P2}, we note similarly that the rows of $\dot{\mathcal{P}}(t)$ always sum to zero, and consequently the rows of $\mathcal{P}(t)$ always sum to one. This shows that every polling tag originating from an agent $i$ must either still be under the possession of agent $i$ or of some other agent $j$ in the network. 

\begin{Remark}
Lemma \ref{lem:conPropDyn} and Lemma \ref{lem:nonConPropDyn} make the link between the protocols \eqref{eqn:P1} and \eqref{eqn:P2} and the transition rate matrices explicit. In the case of a symmetric $\mathcal{Q}$ matrix, the in-degree and out-degree Laplacians are identical. Hence, the conservative and non-conservative protocols are equivalent for symmetric matrices. In the event of asymmetry, we observe from \eqref{eqn:inLap} and \eqref{eqn:outLap} that the conservative $\mathcal{Q}$ and the transpose of the non-conservative $\mathcal{Q}$ are identical if the two corresponding digraphs have the same weights $w$ but with all link directions reversed.
\end{Remark}
\section{Homogeneous equation: Stability and convergence}
\label{sec:homoSC}

In this Section, we analyze the steady-state characteristics of the homogeneous equation \eqref{eqn:markov2} based on the eigendecomposition of the rate transition matrix $\mathcal{Q}$. The relevance of $\mathcal{Q}$ can further be understood by considering that  \eqref{eqn:markov1Conservative} and \eqref{eqn:markov1Consensus} can be solved as $\mathcal{P}(t) = \exp(\mathcal{Q} t)$, indicating the role of the topology, the inter-agent measures of confidence, and the updating rates in the diffusion process. Due to the construction of $\mathcal{Q}$ as a Laplacian matrix, $\mathcal{Q}$ always has the eigenvalue $q_\mathrm{s} = 0$. Moreover, $\mathcal{Q}$ is a Metzler matrix, such that $\exp(\mathcal{Q}t)$ is nonnegative for $t \geq 0$. If $\mathcal{Q}$ is diagonalizable, then we can write
\begin{align}
\bar{\mathcal{S}}(t) &= \exp \left(A \Lambda A^{-1} t \right) \mathcal{S}(0) \notag \\
&= A \, \text{diag} \left( \exp (q_k t) \right) A^{-1} \mathcal{S}(0) \, ,
\label{eqn:eigenMaster}
\end{align}
where $A$ contains the unit right eigenvectors of $\mathcal{Q}$ as columns, $A^{-1}$ contains the corresponding left eigenvectors of $\mathcal{Q}$ as rows, $q_k$ represents the eigenvalues of $\mathcal{Q}$, and $\Lambda = \text{diag}(q_k)$. We can further write
\begin{align}
\bar{\mathcal{S}}(t) &= \sum_k \exp(q_k t)v_{\mathrm{R},k}v_{\mathrm{L}^*,k}^T \mathcal{S}(0) \notag \\
&= \sum_k c_k \exp(q_k t) v_{\mathrm{R},k}  \, ,
\label{eqn:markov3}
\end{align}
where $v_{\mathrm{R},k}$ and $v_{\mathrm{L}^*,k}$ are the unit right and corresponding left eigenvectors\footnote{The left eigenvectors $v_{\mathrm{L}^*,k}$ are only unit in the case where $\mathcal{Q}$ is symmetric.} of $\mathcal{Q}$ in column form, and $c_k = v_{\mathrm{L}^*,k}^T \mathcal{S}(0)$ are scalars. We refer to the eigenvectors of $\mathcal{Q}$ as the system modes. From \eqref{eqn:markov3}, the eigenvalues and eigenvectors reflect the characteristic growth and decay rates of the system, as well as the dominant and weak nodes in the network. In particular, the quantities of the network nodes are stable and converge to a steady-state distribution in finite time iff the eigenvalues of $\mathcal{Q}$ are nonpositive and include zero. Note also that the steady-state behavior of the nodes is given by $\lim_{t \to \infty} \bar{\mathcal{S}}(t) = c_\mathrm{s} v_{\mathrm{R,s}}$, where $c_\mathrm{s}$ and $v_{\mathrm{R,s}}$ are the scalar and the unit right eigenvector corresponding to $q_\mathrm{s} = 0$. As per Ger\v{s}gorin's circle theorem, transition rate matrices $\mathcal{Q}$ constructed according to \eqref{eqn:P1} or \eqref{eqn:P2} and associated with CTMCs satisfy these properties. In the following subsections, we will illustrate the dynamics for different strongly connected network classes by means of the network presented in Fig. \ref{fig:verifyNetwork}.
\begin{figure}[!htb]
\centering
\begin{tikzpicture}[->,auto,node distance=1.6cm,>=latex',graph node/.style={circle,draw},every node/.style={scale=0.65}]
	
	\node [graph node] (1) {1};
	\node [graph node] (2) [right of=1] {2};
	\node [graph node] (3) [right of=2] {3};
	\node [graph node] (4) [right of=3] {4};
	\node [graph node] (5) [right of=4] {5};
	
	\path (1) edge [bend left] node {1} (2);
	\path (2) edge [bend left] node {1} (3);
	\path (3) edge [bend left] node {1} (4);
	\path (4) edge [bend left] node {1} (5);
	\path (5) edge [bend left] node [below] {$\alpha$} (4);
	\path (4) edge [bend left] node [below] {$\alpha$} (3);
	\path (3) edge [bend left] node [below] {$\alpha$} (2);
	\path (2) edge [bend left] node [below] {$\alpha$} (1);
	
\end{tikzpicture}
\caption{Path graph $P_5$ with symmetric links of unit weights used to illustrate the transient and steady-state behaviors of the various update rules.} \label{fig:verifyNetwork}
\end{figure}
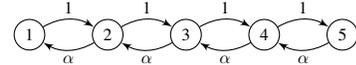
\vspace{-0.5 cm}
\subsection{Conservative asymmetric networks}
For conservative networks, we formulate the following Lemma for the stationary behavior. 
\begin{Lemma} \label{lem:conP1}
The stationary value of a strongly connected asymmetric network applying \eqref{eqn:P1} is given by the unit steady-state right eigenvector $v_{\mathrm{R,s}}$ scaled by $c_\mathrm{s} = \sum_i S_{i,0}/\Psi$ where $\Psi$ is the sum of the entries of $v_{\mathrm{R,s}}$.
\end{Lemma}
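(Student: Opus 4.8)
The plan is to extract the stationary value directly from the modal decomposition in \eqref{eqn:markov3} and then pin down the single surviving coefficient using the conservation structure of $\mathcal{Q}$. First I would invoke strong connectivity: it makes $\mathcal{Q}=-L^{(\mathrm{in})}(G)$ an irreducible Metzler matrix, so by Perron--Frobenius the eigenvalue $q_\mathrm{s}=0$ is simple and every other eigenvalue has strictly negative real part. Consequently each term $c_k\exp(q_kt)v_{\mathrm{R},k}$ with $q_k\neq 0$ decays to zero as $t\to\infty$, and \eqref{eqn:markov3} collapses to $\lim_{t\to\infty}\bar{\mathcal{S}}(t)=c_\mathrm{s}v_{\mathrm{R,s}}$, exactly the form asserted. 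It then only remains to evaluate the scalar $c_\mathrm{s}$.

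Next I would determine $c_\mathrm{s}$ from conservation rather than by computing the left eigenvector explicitly. By \eqref{eqn:CTMC1Conservative} the columns of $\mathcal{Q}$ sum to zero, i.e.\ $\mathbf{1}^T\mathcal{Q}=\mathbf{0}^T$ where $\mathbf{1}$ is the all-ones vector. Multiplying the governing equation \eqref{eqn:markov2} on the left by $\mathbf{1}^T$ gives $\tfrac{d}{dt}\big(\mathbf{1}^T\bar{\mathcal{S}}(t)\big)=\mathbf{1}^T\mathcal{Q}\,\bar{\mathcal{S}}(t)=0$, so the total quantity $\mathbf{1}^T\bar{\mathcal{S}}(t)=\sum_i\bar{\mathcal{S}}_i(t)$ is invariant in time and equals its initial value $\sum_i S_{i,0}$ (this is the conservation property already noted after \eqref{eqn:instAvg}).

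Finally I would combine the two facts. Evaluating the conserved total at the steady state $\bar{\mathcal{S}}(\infty)=c_\mathrm{s}v_{\mathrm{R,s}}$ yields $\mathbf{1}^T\big(c_\mathrm{s}v_{\mathrm{R,s}}\big)=c_\mathrm{s}\,\Psi=\sum_iS_{i,0}$, where $\Psi=\mathbf{1}^Tv_{\mathrm{R,s}}$ is precisely the sum of the entries of the unit right eigenvector. Solving gives $c_\mathrm{s}=\sum_iS_{i,0}/\Psi$, as claimed. Equivalently, one can identify $\mathbf{1}$ as the left eigenvector for $q_\mathrm{s}=0$ and read off $c_\mathrm{s}=v_{\mathrm{L}^*,\mathrm{s}}^T\mathcal{S}(0)$ after rescaling it to satisfy the biorthonormality $v_{\mathrm{L}^*,\mathrm{s}}^Tv_{\mathrm{R,s}}=1$ built into \eqref{eqn:markov3}, which forces the same factor $1/\Psi$. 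The main obstacle is the first step: one must ensure that $0$ is a \emph{simple}, strictly dominant eigenvalue so that a single well-defined steady-state mode exists and $\Psi\neq 0$ (indeed $v_{\mathrm{R,s}}$ can be taken entrywise positive by Perron--Frobenius, guaranteeing $\Psi>0$). This is exactly where strong connectivity enters, and without it the limiting mode need not be unique.
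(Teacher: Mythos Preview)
Your proposal is correct and follows essentially the same approach as the paper: both identify that the column-sum-zero structure of $\mathcal{Q}=-L^{(\mathrm{in})}(G)$ makes the all-ones vector the steady-state left eigenvector (equivalently, that the total property is conserved), and then read off the limiting mode and its coefficient from the decomposition \eqref{eqn:markov3}. Your version is more explicit---you invoke Perron--Frobenius to justify simplicity of $q_\mathrm{s}=0$ and positivity of $v_{\mathrm{R,s}}$, and you carry out the computation of $c_\mathrm{s}$ via the conserved total---whereas the paper's proof is terse and leaves these details implicit in ``considering \eqref{eqn:markov3}, the proof is concluded.''
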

\begin{proof}
In networks that apply \eqref{eqn:P1}, the column-sum of $\mathcal{Q} = - L^{(\mathrm{in})}(G)$ is zero, such that $\mathcal{Q}$ has a unit steady-state left eigenvector $v_{\mathrm{L,s}}$ with equal components. In an asymmetric network, the stationary node values are in general imbalanced due to the unequal components in the unit steady-state right eigenvector $v_{\mathrm{R,s}}$. Considering \eqref{eqn:markov3}, the proof is concluded.
\end{proof}
In the case of asymmetric liabilities between nodes, the stationary distribution will favor attractive nodes and shun repulsive nodes, and uniform spreading of the property over all the nodes does not occur in general. This is a consequence of the conservation of total property after every update. We illustrate this for the network in Fig. \ref{fig:verifyNetwork} with $\alpha = 0.2$ in the plot of $\bar{\mathcal{S}}_i(t)$ in Fig. \ref{fig:verifyAsymConservative}, where we observe that the steady-state properties for each node are different. In the latter plot, we generate $\bar{\mathcal{S}}_i(t)$  by both empirically generating sample paths based on the update rule \eqref{eqn:P1}, and analytically determining the expected property from \eqref{eqn:markov3}.
\begin{figure}[!t]
\centering
\includegraphics[width=0.8\linewidth]{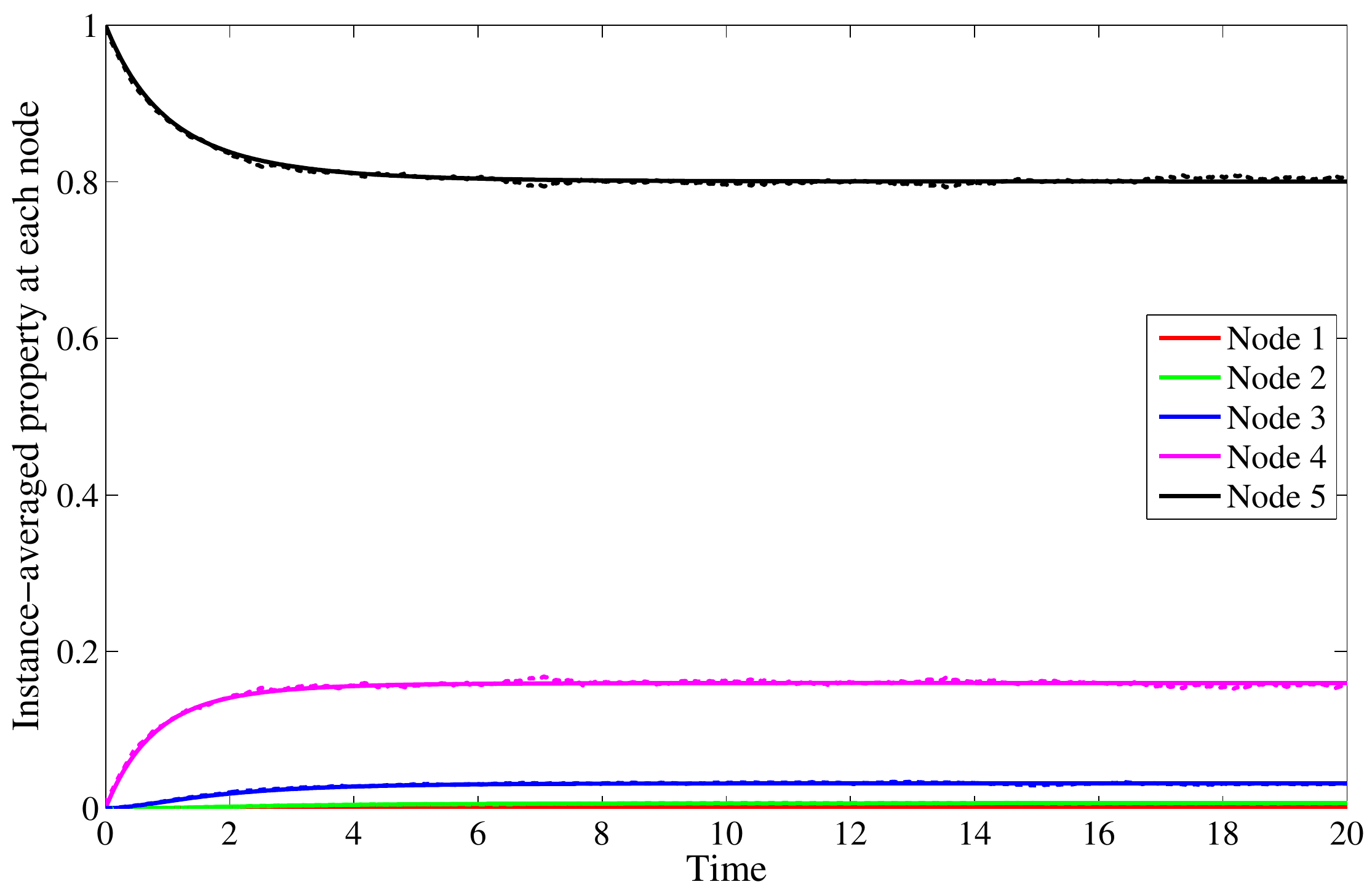}
\caption{Comparison of instance-averaged node property obtained through Monte Carlo simulations (dotted lines) and expected property obtained by eigendecomposition of the matrix $\mathcal{Q}$ (solid lines) for the asymmetric network in Fig. \ref{fig:verifyNetwork} with $\alpha = 0.2$ under \eqref{eqn:P1}. For the simulations, 5000 trials were performed, and each timestep was discretized into 1000 sub-units. In both cases, the initial conditions $\mathcal{S}(0) = [0\, 0\, 0\, 0\, 1]$ were adopted.}
\label{fig:verifyAsymConservative}
\end{figure}

\subsection{Non-conservative asymmetric networks}
For networks that apply the convex update rule \eqref{eqn:P2}, we formulate the following Lemma for the stationary behavior. 
\begin{Lemma} \label{lem:conP2}
A strongly connected asymmetric network following the convex update rule \eqref{eqn:P2} always achieves consensus. The final consensus value $c_\mathrm{v} = c_\mathrm{s} / \sqrt{n}$ in the infinite time horizon is the weighted average of $\mathcal{S}(0)$, where the weights are the corresponding entries of the unit steady-state left eigenvector $v_{\mathrm{L,s}}$, scaled by the factor $1/\Omega$, where $\Omega$ is the sum of the entries in $v_{\mathrm{L,s}}$. In other words, 
\begin{equation}
c_\mathrm{v} = \dfrac{1}{\Omega} v_{\mathrm{L,s}}^T\mathcal{S}(0)\, .
\end{equation} 
\end{Lemma}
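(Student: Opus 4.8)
The plan is to read the claim directly off the modal expansion \eqref{eqn:markov3} together with the steady-state formula $\lim_{t \to \infty}\bar{\mathcal{S}}(t) = c_\mathrm{s} v_{\mathrm{R,s}}$, once the steady-state right eigenvector and the normalization of the left eigenvector have been pinned down. First I would invoke the fact established in Corollary 2 via \eqref{eqn:CTMC1Consensus} that the rows of $\mathcal{Q} = -L^{(\mathrm{out})}(G)$ sum to zero. This says precisely that $\mathcal{Q}\mathbf{1} = 0$, so the all-ones vector $\mathbf{1}$ is a right eigenvector for $q_\mathrm{s} = 0$; normalizing to unit length gives $v_{\mathrm{R,s}} = \mathbf{1}/\sqrt{n}$, whose entries are all equal. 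Substituting into the steady-state formula yields $\lim_{t \to \infty}\bar{\mathcal{S}}(t) = (c_\mathrm{s}/\sqrt{n})\,\mathbf{1}$, i.e. every node converges to the common value $c_\mathrm{v} = c_\mathrm{s}/\sqrt{n}$. This is exactly the consensus claim, and it already produces the stated expression $c_\mathrm{v} = c_\mathrm{s}/\sqrt{n}$.

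Second, I would justify that this limit is well defined and that the all-ones mode is the only surviving one. Strong connectivity makes $\mathcal{Q}$ irreducible; combined with its Metzler/CTMC structure and the spectral facts recorded before \eqref{eqn:eigenMaster} via Ger\v{s}gorin's circle theorem, irreducibility forces the zero eigenvalue to be simple while every other eigenvalue has strictly negative real part. Hence each mode $\exp(q_k t) v_{\mathrm{R},k}$ with $q_k \neq q_\mathrm{s}$ decays to zero, leaving only the $k = \mathrm{s}$ term, and the left null space is one-dimensional so $v_{\mathrm{L,s}}$ is determined up to scale. Simplicity of the zero eigenvalue is the ingredient that genuinely requires strong connectivity and that upgrades mere convergence to consensus.

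The main work — and the only step I expect to pose any difficulty — is the normalization bookkeeping that converts the biorthogonally-normalized coefficient $c_\mathrm{s} = v_{\mathrm{L}^*,\mathrm{s}}^T \mathcal{S}(0)$ appearing in \eqref{eqn:markov3} into the stated expression in terms of the \emph{unit} left eigenvector $v_{\mathrm{L,s}}$. Since the zero eigenspace is one-dimensional, $v_{\mathrm{L}^*,\mathrm{s}}$ and $v_{\mathrm{L,s}}$ are parallel, say $v_{\mathrm{L}^*,\mathrm{s}} = \lambda\, v_{\mathrm{L,s}}$, and the biorthogonality constraint $v_{\mathrm{L}^*,\mathrm{s}}^T v_{\mathrm{R,s}} = 1$ built into $A^{-1}A = \mathbb{I}$ fixes $\lambda$. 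Computing $v_{\mathrm{L,s}}^T v_{\mathrm{R,s}} = \frac{1}{\sqrt{n}}\sum_i (v_{\mathrm{L,s}})_i = \Omega/\sqrt{n}$ gives $\lambda = \sqrt{n}/\Omega$, whence $c_\mathrm{s} = (\sqrt{n}/\Omega)\, v_{\mathrm{L,s}}^T \mathcal{S}(0)$. Dividing by $\sqrt{n}$ then delivers $c_\mathrm{v} = c_\mathrm{s}/\sqrt{n} = \frac{1}{\Omega} v_{\mathrm{L,s}}^T \mathcal{S}(0)$, which simultaneously exhibits the consensus value as the $v_{\mathrm{L,s}}$-weighted average of the initial condition $\mathcal{S}(0)$, as claimed. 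I would keep careful track of the two distinct normalizations ($\sqrt{n}$ from $v_{\mathrm{R,s}}$ and $\Omega$ from $v_{\mathrm{L,s}}$) since conflating them is the easiest way to drop the $1/\Omega$ factor.
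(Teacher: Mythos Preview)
Your proposal is correct and follows essentially the same route as the paper: both identify $v_{\mathrm{R,s}} = \mathbf{1}/\sqrt{n}$ from the zero row-sum of $\mathcal{Q}$, then use the biorthogonality constraint $A^{-1}A = \mathbb{I}$ to pin down the row of $A^{-1}$ corresponding to $q_\mathrm{s}=0$ as $(\sqrt{n}/\Omega)\,v_{\mathrm{L,s}}^T$, and finally divide by $\sqrt{n}$ to obtain $c_\mathrm{v}$. Your treatment is in fact a bit more careful than the paper's, since you explicitly invoke irreducibility to guarantee simplicity of the zero eigenvalue, a point the paper leaves implicit.
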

\begin{proof}
Since the row-sum of $\mathcal{Q} = - L^{(\mathrm{out})}(G)$ is zero, $\mathcal{Q}$ has a right unit eigenvector with equal components, demonstrating that strongly connected networks with possibly asymmetric updating always achieve agreement. In order to determine the final consensus value, we need to determine for each node the constant $c_\mathrm{s}$ that corresponds to the initial conditions enforced on the system. This involves finding the row of $A^{-1}$ corresponding with $q_\mathrm{s} = 0$. For an asymmetric matrix, this corresponds to the unit steady-state left eigenvector of $\mathcal{Q}$ scaled by the factor $\sqrt{n}/\Omega$, since the dot product of this vector and the unit steady-state right eigenvector must equal 1 based on the initial conditions $\mathcal{P}(0) = \mathbb{I} \, = A^{-1}A$. The result is then scaled by the magnitude of the entries of $v_{\mathrm{R,s}}$ to obtain $c_\mathrm{v}$.
\end{proof}

We illustrate this for the network in Fig. \ref{fig:verifyNetwork} with $\alpha = 0.2$ in the plot of $\bar{\mathcal{S}}_i(t)$ in Fig. \ref{fig:verifyAsymConsensus}. In the latter plot, we generate $\bar{\mathcal{S}}_i(t)$  by both empirically generating sample paths based on the update rule \eqref{eqn:P2}, and analytically determining the expected property from \eqref{eqn:markov3}. In this case, the final consensus value is heavily weighted towards node 5, whose inward link weight exceeds its outward link weight. This means that node 5 is heavily polled by its neighbors, resulting in its higher weight in the final consensus value. Such a trend is reminiscent of the availability heuristic, or availability bias, where subjects that are encountered or recalled more often are given more thought and emphasis~\cite{tversky1973availability}.

\begin{figure}[!t]
\centering
\includegraphics[width=0.8\linewidth]{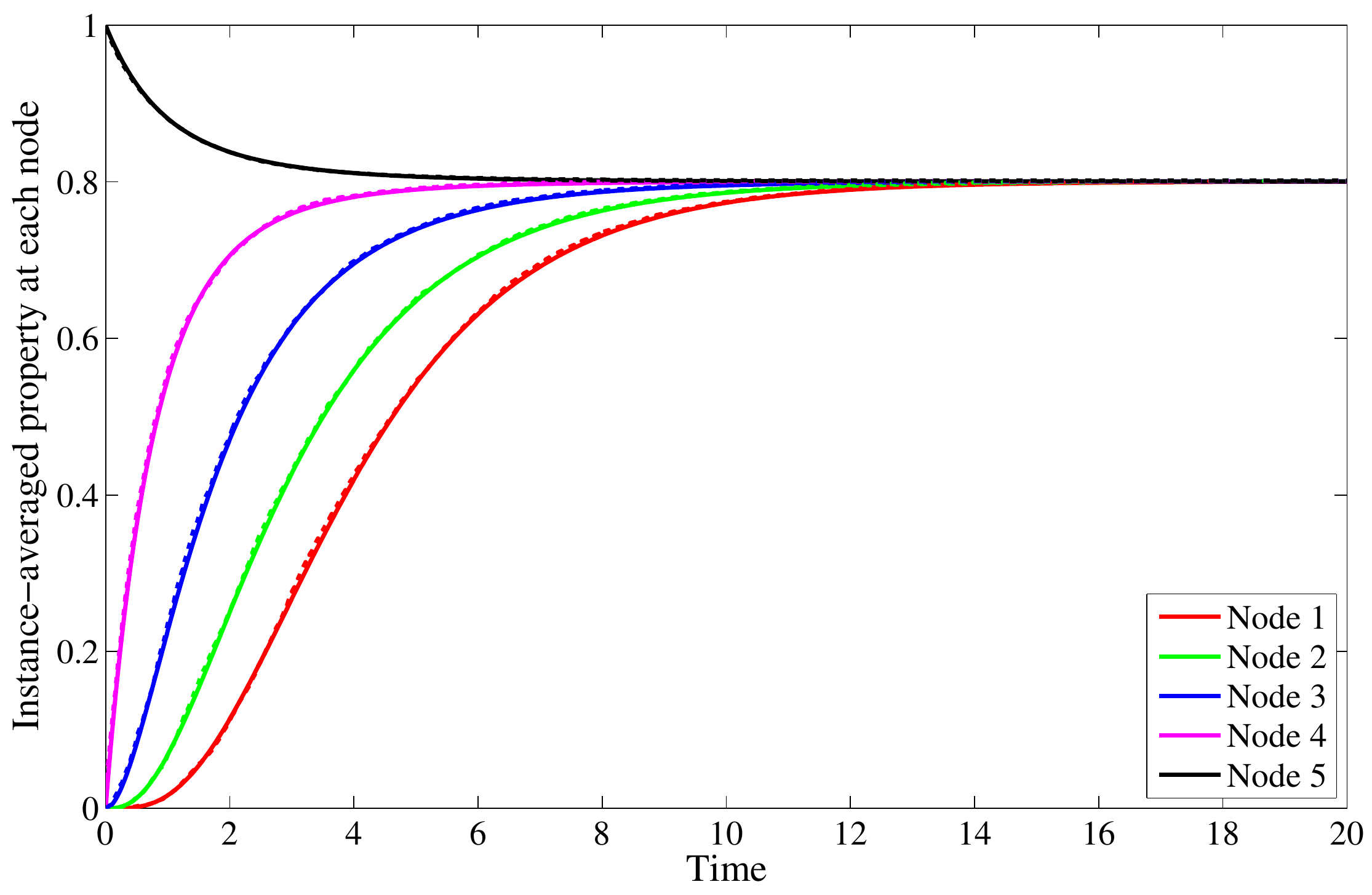}
\caption{Comparison of instance-averaged property obtained numerically (dotted lines) and analytically (solid lines) for the asymmetric network in Fig. \ref{fig:verifyNetwork} with $\alpha = 0.2$ under \eqref{eqn:P2}. The simulation parameters are identical to those in Fig. \ref{fig:verifyAsymConservative}. In this case, the unit steady-state left eigenvector is $[0.0016 \, 0.0078 \, 0.0392 \, 0.1960 \, 0.9798]$ and the unit steady-state right eigenvector is $\tfrac{1}{\sqrt{5}}[1 \, 1 \, 1 \, 1 \, 1]$, which gives us $\Omega = 1.2244$ and the corresponding consensus value $c_\mathrm{v} = 0.8003$.}
\label{fig:verifyAsymConsensus}
\end{figure}

\subsection{Symmetric networks}
In a strongly connected symmetric network, the steady-state right eigenvector has components of equal magnitude in all dimensions regardless of the update rule. We can interpret this as an equal sharing of resources in conservative networks and consensus among all agents at a value corresponding to the average of the initial conditions at all the nodes in non-conservative networks \cite{olfati2004consensus,olfati2007consensus,acemoglu2008convergence,acemoglu2013opinion}. We illustrate this for the network in Fig. \ref{fig:verifyNetwork} with $\alpha = 1$ in the plot of $\bar{\mathcal{S}}_i(t)$ in Fig. \ref{fig:verifySym}. In the latter plot, we generate $\bar{\mathcal{S}}_i(t)$  by both empirically generating sample paths based on the update rules \eqref{eqn:P1} and \eqref{eqn:P2}, which are equivalent in this case, and analytically determining the expected property from \eqref{eqn:markov3}.

\begin{figure}[!t]
\centering
\includegraphics[width=0.8\linewidth]{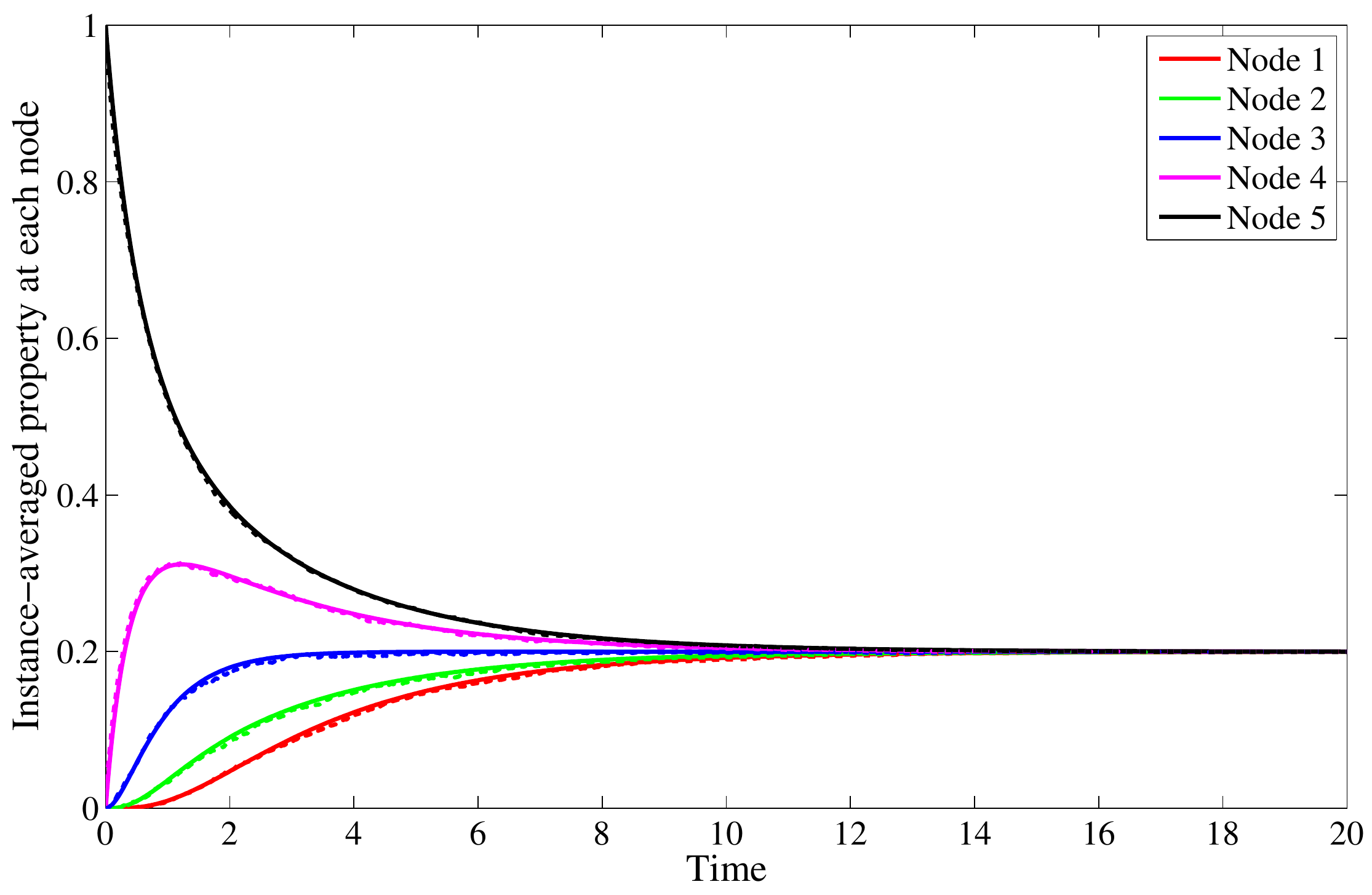}
\caption{Comparison of instance-averaged property obtained numerically (dotted lines) and analytically (solid lines) corresponding to the symmetric network in Fig. \ref{fig:verifyNetwork} with $\alpha = 1$. The simulation parameters are identical to those in Fig. \ref{fig:verifyAsymConservative}.}
\label{fig:verifySym}
\end{figure}

\subsection{Switching topologies}
There are many important scenarios where the network topology can change over time. As an example, connections in wireless sensor networks can be established and broken due to mobility. In this Section, we study the steady-state behavior of networks with switching topologies and provide a condition that leads to convergence under dynamic topologies. Networks with switching topologies converge if they exhibit the same invariant stationary value as defined in Lemma \ref{lem:conP1} and Lemma \ref{lem:conP2}, which depend on $v_{\mathrm{R},s}$ and $v_{\mathrm{L},s}$ for \eqref{eqn:P1} and \eqref{eqn:P2} respectively. 

We introduce now two classes of strongly connected, directed graphs  for which the weighted Laplacian shares the left or the right eigenvector corresponding to $q_\mathrm{s} = 0$. For conservative networks, we have
\begin{multline}
\mathcal{G}_{v_\mathrm{s}}^\eqref{eqn:P1} = \lbrace  G(\mathcal{V}, \vec{\mathcal{E}}, w)\, : \, \rank L^{(\mathrm{in})}(G) = n-1, \\ 
L^{(\mathrm{in})}(G) \times v_\mathrm{R,s}  = \vec{0} \rbrace \, .
\end{multline}
For non-conservative networks, we define
\begin{multline}
\mathcal{G}_{v_\mathrm{s}}^\eqref{eqn:P2} = \lbrace  G(\mathcal{V}, \vec{\mathcal{E}}, w)\, : \, \rank L^{(\mathrm{out})}(G) = n-1, \\
 v_\mathrm{L,s}  \times L^{(\mathrm{out})}(G) = \vec{0} \rbrace \, .
\end{multline}
Correspondingly, we define the following sets of matrices with equal steady-state eigenvectors as follows
\begin{align}
M_{v_\mathrm{s}}^\eqref{eqn:P1}  & =  \lbrace \mathcal{Q}\colon \mathcal{Q} = - L^{(\mathrm{in})}(G) , \mathcal{Q} \times v_\mathrm{R,s} = \vec{0} \rbrace \nonumber \\
 M_{v_\mathrm{s}}^\eqref{eqn:P2}  & =  \lbrace \mathcal{Q}\colon \mathcal{Q} = - L^{(\mathrm{out})}(G) , v_\mathrm{L,s} \times \mathcal{Q} = \vec{0} \rbrace \, .
\end{align}
For these classes of networks, the system dynamics are described by
\be \label{eqn:swiTop}
\dot{\bar{\mathcal{S}}}(t) = \mathcal{Q}_k \bar{\mathcal{S}}(t) \, ,
\ee
where $\mathcal{Q}_k \in M_{v_\mathrm{s}}^\eqref{eqn:P1}$ when \eqref{eqn:P1} is applied, and $\mathcal{Q}_k \in M_{v_\mathrm{s}}^\eqref{eqn:P2}$ when \eqref{eqn:P2} is applied. The index $k$ changes over time according to $k = f(t), \, f\colon \mathbb{R}^+ \to \mathcal{I}_M$, where $\mathcal{I}_M$ is the index set of the corresponding set of matrices. 

\begin{Lemma}
Consider a network with network dynamics as described in \eqref{eqn:swiTop}. The networks that are strongly connected and follow \eqref{eqn:P1} or \eqref{eqn:P2} are globally asymptotically convergent if $\mathcal{Q}_k$ belongs to $M_{v_s}^\eqref{eqn:P1}$ or $M_{v_s}^\eqref{eqn:P2}$, respectively. 
\end{Lemma}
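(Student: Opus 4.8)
The plan is to reduce both protocols to a single consensus-type problem and then to establish convergence through a \emph{common} weak Lyapunov function, rather than relying on the stability of the individual modes. The reason a common object is needed is that, although each admissible $\mathcal{Q}_k$ is Hurwitz on the hyperplane complementary to its steady-state eigenvector (its nonzero eigenvalues have negative real part by Ger\v{s}gorin's theorem, as used after \eqref{eqn:markov3}), individual stability of the modes does \emph{not} by itself guarantee stability of the switched system \eqref{eqn:swiTop}. The shared steady-state eigenvector built into $M_{v_\mathrm{s}}^{\eqref{eqn:P1}}$ and $M_{v_\mathrm{s}}^{\eqref{eqn:P2}}$ is precisely what repairs this.

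First I would treat the non-conservative case \eqref{eqn:P2}. Here every $\mathcal{Q}_k = -L^{(\mathrm{out})}(G_k)$ is a Metzler matrix with zero row sums, so $\mathcal{Q}_k \mathbf{1} = \vec{0}$ and the off-diagonal entries are nonnegative. Writing $\dot{\bar{\mathcal{S}}}_i = \sum_{j \neq i} [\mathcal{Q}_k]_{ij}(\bar{\mathcal{S}}_j - \bar{\mathcal{S}}_i)$, I would take the disagreement functional $V(x) = \max_i x_i - \min_i x_i$ as a candidate common Lyapunov function: the coordinate attaining the maximum has nonpositive time derivative and the one attaining the minimum has nonnegative derivative, so $V$ is nonincreasing along every admissible mode simultaneously. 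Strong connectivity of each $G_k$ then upgrades this to a strict decrease of $V$ over any interval on which the state is not already at consensus, via a LaSalle/contraction argument, forcing $\bar{\mathcal{S}}(t) \to c\,\mathbf{1}$. The common left eigenvector, $v_{\mathrm{L,s}}^T \mathcal{Q}_k = \vec{0}$ for all $\mathcal{Q}_k \in M_{v_\mathrm{s}}^{\eqref{eqn:P2}}$, makes $v_{\mathrm{L,s}}^T \bar{\mathcal{S}}(t)$ invariant under switching, which pins the limit to the value $c_\mathrm{v} = v_{\mathrm{L,s}}^T \mathcal{S}(0)/\Omega$ of Lemma \ref{lem:conP2}, independently of the switching signal $f$.

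For the conservative case \eqref{eqn:P1} I would reduce to the previous one by a positive diagonal similarity. Since each $G_k$ is strongly connected, the common right eigenvector $v_{\mathrm{R,s}}$ has strictly positive entries (Perron--Frobenius), so $D = \mathrm{diag}(v_{\mathrm{R,s}})$ is invertible. Setting $y = D^{-1}\bar{\mathcal{S}}$ gives $\dot{y} = \tilde{\mathcal{Q}}_k\, y$ with $\tilde{\mathcal{Q}}_k = D^{-1}\mathcal{Q}_k D$. A direct check shows $\tilde{\mathcal{Q}}_k$ again has nonnegative off-diagonal entries and, because $\mathcal{Q}_k v_{\mathrm{R,s}} = \vec{0}$ for all $\mathcal{Q}_k \in M_{v_\mathrm{s}}^{\eqref{eqn:P1}}$, zero row sums; since the similarity preserves the edge set, each $\tilde{\mathcal{Q}}_k$ is a strongly connected consensus-type Laplacian. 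Applying the previous paragraph to $y$ yields $y(t) \to c\,\mathbf{1}$, i.e. $\bar{\mathcal{S}}(t) \to c\, v_{\mathrm{R,s}}$, and the quantity $\mathbf{1}^T \bar{\mathcal{S}}(t)$, conserved by the zero column sums of $\mathcal{Q}_k$, fixes $c = \sum_i S_{i,0}/\Psi$, recovering Lemma \ref{lem:conP1}.

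The main obstacle is the switched-stability step: establishing that $V$ does not merely fail to increase but is actually driven to zero under arbitrary admissible switching. I expect the crux to be showing a \emph{uniform} strict decrease of $V$ across modes (for instance via a positive dwell time, or, when $\mathcal{I}_M$ is finite, a uniform contraction factor over a fixed window), so that the shared equilibrium direction is genuinely reached rather than merely left invariant.
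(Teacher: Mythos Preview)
Your proposal is correct in outline and is substantially more rigorous than the paper's own argument, which is only two sentences long. The paper simply observes that all matrices in $M_{v_\mathrm{s}}^{(\mathrm{P1})}$ (respectively $M_{v_\mathrm{s}}^{(\mathrm{P2})}$) share the same steady-state right (respectively left) eigenvector and are each individually globally asymptotically stable to the common equilibrium, and declares that this concludes the proof. That is exactly the step you flag as insufficient: shared equilibrium together with individual global asymptotic stability of each mode does \emph{not}, in general, imply global asymptotic stability of the switched system, and the paper does not supply a common Lyapunov function or any dwell-time hypothesis to close this gap.

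Your route genuinely differs. For \eqref{eqn:P2} you exploit the Metzler zero-row-sum structure to use the max--min disagreement $V(x)=\max_i x_i-\min_i x_i$ as a \emph{common} weak Lyapunov function, then invoke strong connectivity to force strict contraction; for \eqref{eqn:P1} you reduce to the previous case by the positive diagonal similarity $D=\mathrm{diag}(v_{\mathrm{R,s}})$, which is valid because the shared right eigenvector is strictly positive by Perron--Frobenius and because $\mathcal{Q}_k v_{\mathrm{R,s}}=\vec{0}$ makes $D^{-1}\mathcal{Q}_k D$ again zero-row-sum Metzler. Both the conserved quantities you identify ($v_{\mathrm{L,s}}^T\bar{\mathcal{S}}$ and $\mathbf{1}^T\bar{\mathcal{S}}$) are precisely what pin the limiting value despite switching. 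What your approach buys is an actual switched-systems proof; what the paper's approach buys is brevity at the cost of rigor. The residual obstacle you name---upgrading nonincrease of $V$ to uniform strict decrease under arbitrary switching---is real, and a clean way to resolve it (for finite $\mathcal{I}_M$) is to note that on any window of fixed length the transition matrix $\exp(\mathcal{Q}_{k_m}\tau_m)\cdots\exp(\mathcal{Q}_{k_1}\tau_1)$ is a product of stochastic matrices with positive diagonals whose union of associated graphs is strongly connected, giving a uniform contraction of $V$ over that window.
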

\begin{proof}
The proof follows from the fact that all networks that belong to $M_{v_\mathrm{s}}^\eqref{eqn:P1}$ or $M_{v_\mathrm{s}}^\eqref{eqn:P2}$ converge to the same invariant quantity, as they share the same right and left eigenvector, respectively. As all elements of $M_{v_\mathrm{s}}^\eqref{eqn:P1}$ and $M_{v_\mathrm{s}}^\eqref{eqn:P2}$ are globally asymptotically stable, this concludes the proof. 
\end{proof}

We introduce now the distance vector $\delta = \bar{\mathcal{S}}(t) - c_\mathrm{s} v_{\mathrm{R,s}}$. As $c_\mathrm{s} v_{\mathrm{R,s}}$ is an equilibrium of the system, we can write the distance dynamics as follows 
\be
\dot{\delta}(t) = \mathcal{Q}_k \delta(t) \, .
\ee
We cannot make a general statement about the definiteness of asymmetric matrices with non-positive eigenvalues. For the matrices belonging to $M_{v_\mathrm{s}}^\eqref{eqn:P1}$ or $M_{v_\mathrm{s}}^\eqref{eqn:P2}$ that are negative definite, we can bound the convergence speed of the distance vector as follows.
\begin{Corollary}
Consider a strongly connected graph for which the weighted in-degree or out-degree Laplacian is negative definite. For these networks, the distance vector globally asymptotically converges to zero and the convergence speed can be bounded as
\be
\| \delta(t)\|_2 \leq \|\delta(0)\|_2 \exp(q_\mathrm{max} t) \,
\ee 
where $q_\mathrm{max} = \max q_2(M_{v_\mathrm{s}}^\eqref{eqn:P1})$ for conservative networks and $q_\mathrm{max} = \max q_2(M_{v_\mathrm{s}}^\eqref{eqn:P2})$ for non-conservative networks. Here, $q_2$ is the non-zero eigenvalue with the smallest absolute value of the corresponding transition rate matrix.
\end{Corollary}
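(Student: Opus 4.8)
The plan is to exhibit a common quadratic Lyapunov function for the switched distance dynamics $\dot{\delta}(t) = \mathcal{Q}_k \delta(t)$ and to convert the decay estimate it yields into the stated exponential bound. I would take $V(t) = \delta(t)^T \delta(t) = \|\delta(t)\|_2^2$ as the candidate. Differentiating along the trajectory and substituting the dynamics gives
\begin{equation}
\dot{V}(t) = \dot{\delta}(t)^T \delta(t) + \delta(t)^T \dot{\delta}(t) = \delta(t)^T \left( \mathcal{Q}_k + \mathcal{Q}_k^T \right) \delta(t) \, ,
\end{equation}
so that the decay of $V$ is governed by the symmetric part of the active matrix $\mathcal{Q}_k$.

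First I would record that the zero eigenvalue plays no role. Since every $\mathcal{Q}_k \in M_{v_\mathrm{s}}^\eqref{eqn:P1}$ (respectively $M_{v_\mathrm{s}}^\eqref{eqn:P2}$) shares the steady-state eigenvector spanning its kernel and $c_\mathrm{s} v_{\mathrm{R,s}}$ is the common equilibrium, the distance vector $\delta$ evolves in the complementary invariant subspace on which $q_\mathrm{s} = 0$ is absent. On this subspace the negative-definiteness hypothesis makes $\mathcal{Q}_k$ (equivalently, its symmetric part) strictly negative definite, and its largest, i.e.\ least negative, eigenvalue is precisely $q_2(\mathcal{Q}_k)$, the nonzero eigenvalue of smallest absolute value. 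A Rayleigh-quotient estimate then yields $\delta^T(\mathcal{Q}_k + \mathcal{Q}_k^T)\delta \leq 2\, q_2(\mathcal{Q}_k)\, \|\delta\|_2^2$, and bounding this uniformly over the switching set by the worst-case rate $q_\mathrm{max} = \max_k q_2$ gives $\dot{V}(t) \leq 2\, q_\mathrm{max}\, V(t)$ at every instant, irrespective of the active index $k = f(t)$.

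With this differential inequality in hand, I would apply the comparison (Gr\"onwall) lemma to obtain $V(t) \leq V(0)\exp(2 q_\mathrm{max} t)$ across all switching instants, using that $V$ is continuous in $t$ and that the bound on $\dot{V}$ holds for each admissible $\mathcal{Q}_k$; taking square roots produces $\|\delta(t)\|_2 \leq \|\delta(0)\|_2 \exp(q_\mathrm{max} t)$. Because $q_\mathrm{max} < 0$ under the strong-connectivity and definiteness assumptions, this simultaneously establishes global asymptotic convergence of $\delta$ to zero, which is the first claim.

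The main obstacle I anticipate is reconciling the eigenvalue $q_2(\mathcal{Q}_k)$ of the generally asymmetric transition rate matrix with the quantity $\lambda_{\max}(\mathcal{Q}_k + \mathcal{Q}_k^T)$ that actually controls $\dot{V}$: for a nonnormal $\mathcal{Q}_k$ these differ, so the Rayleigh-quotient step is only valid because the hypothesis restricts attention to matrices that are genuinely negative definite, forcing the symmetric part to share the relevant spectral bound. I would therefore invoke normality of the admissible $\mathcal{Q}_k$ on the complementary subspace, so that $q_2$ coincides with the top nonzero eigenvalue of the symmetric part, and note that this is exactly the regime of negative-definite networks targeted by the statement; in the purely asymmetric case one would instead read $q_\mathrm{max}$ off the symmetrized matrix, the two quantities agreeing whenever $\mathcal{Q}_k$ is normal.
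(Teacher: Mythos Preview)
Your proposal is correct and follows essentially the same route as the paper: the paper also takes the quadratic Lyapunov function $V(\delta)=\tfrac{1}{2}\delta^T\delta$, differentiates to obtain the symmetric part $\tfrac{1}{2}(\mathcal{Q}_k+\mathcal{Q}_k^T)$, invokes negative definiteness to get $\dot V\le 0$, and then appeals to the Courant--Fischer theorem (your Rayleigh-quotient step) together with the argument in Olfati-Saber and Murray to conclude. Your write-up is in fact more detailed than the paper's terse proof---in particular, your observation that $\delta$ evolves in the invariant complement of the kernel and your explicit caveat about identifying $q_2(\mathcal{Q}_k)$ with the top eigenvalue of the symmetrized matrix are points the paper leaves implicit.
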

\begin{proof}
The proof of this corollary builds on the Lyapunov function $V(\delta) = 1/2 \delta(t)^T \delta(t)$. The first derivative can be written as 
\be
\dot{V}(\delta(t)) = 1/2 \left(\delta(t)^T \mathcal{Q}_k^T \delta(t) + \delta(t)^T \mathcal{Q}_k \delta(t) \right) \, ,
\ee
such that $\dot{V}(\delta(t)) \leq 0$ due to the assumptions made. Using the Courant-Fischer theorem, and similar to \cite{olfati2004consensus}, the proof is concluded. 
\end{proof}

\begin{Remark}
A relevant question is if it is possible to verify if two matrices $\mathcal{Q}_1$ and $\mathcal{Q}_2$ belong to the same set $M_{v_s}^{(\text{x})}$, $\text{x} \in \lbrace \ref{eqn:P1}, \ref{eqn:P2} \rbrace$. The common eigenvector problem can be solved when the eigenvalue is not known \cite{shemesh1984common, george1999common}. For the special case where the common eigenvalue is equal to zero, two matrices $\mathcal{Q}_1$ and $\mathcal{Q}_2$ have the same eigenvector corresponding to $q_s = 0$ if 
\be
\mathrm{null}(\mathcal{Q}_1) = \mathrm{null}(\mathcal{Q}_2) \, .
\ee
\end{Remark}
\section{Inhomogeneous equations: Stability and convergence}
\label{sec:inhomoSC}

To model the exogenous addition and subtraction of property to multi-agent networks, we extend the homogeneous equation \eqref{eqn:markov2} to include an inhomogeneous term 
\begin{equation}
\dot{\bar{\mathcal{S}}}(t) = \mathcal{Q}\bar{\mathcal{S}}(t) + \mathcal{U}(t) \, ,
\label{eqn:inhomoMaster}
\end{equation}
where $\mathcal{U}$ is an arbitrary input vector. 
Since the inhomogeneous equation \eqref{eqn:inhomoMaster} is linear in the exogenous input term, the exogenous input can either be constant over all sample paths or averaged over all sample paths. Both interpretations will result in identical instance-averaged inputs. For linear, time-invariant systems, the system response can be derived in closed form. Note that the solution to \eqref{eqn:markov2} with initial conditions of the network state $\mathcal{S}(0)$ is given by $\bar{\mathcal{S}}_h(t) = \exp(\mathcal{Q} t)\mathcal{S}(0)$. It can be shown that the general solution to \eqref{eqn:inhomoMaster} is given by \cite{dahleh2004lectures}
\begin{equation}
\bar{\mathcal{S}}(t) = \bar{\mathcal{S}}_h(t) + \int_0^t \exp(\mathcal{Q} (t-\tau))\mathcal{U}(\tau)d\tau.
\label{eqn:inhomoMasterSoln}
\end{equation}
This can be interpreted as the sum of the solution to \eqref{eqn:markov2} and the convolution of the input $\mathcal{U}$ with the impulse response to \eqref{eqn:inhomoMaster} for every node in the network. 

We now discuss several important use cases for the input vector $\mathcal{U}$ and show that many scenarios can be described as in \eqref{eqn:inhomoMaster}. In particular, we will demonstrate that networks containing stubborn agents~\cite{acemoglu2013opinion} and networks with dynamic learning~\cite{acemoglu2008convergence} can both be expressed in the form of the inhomogeneous governing equation with constant and dynamic $\mathcal{U}$ respectively. This motivates the study of stability and convergence in the presence of exogenous inputs.

\subsection{Non-conservative networks with static inputs}
Adopting the definition in~\cite{acemoglu2013opinion}, as well as the consensus update rule \eqref{eqn:P2}, a stubborn agent does not adjust its own value based on the values of neighboring nodes. This scenario is of interest to model opinion dynamics where a set of agents has constant opinion. In these networks, the out-degree of a stubborn agent is zero and the corresponding row of $\mathcal{Q}$ is a zero-row.\footnote{Conversely, if the in-degree of an agent equals zero, the node acts as a sink, which can receive property from every neighbor but does not contribute to his neighborhood. This case, where $\mathcal{Q}$ contains a zero column, does not lead to a convenient reduction of the state space.} For example, in the presence of two stubborn agents $a_1 = 2$ and $a_2 = n$, the governing equation can be written as follows
\begin{equation}
\begin{bmatrix}
\dot{\bar{\mathcal{S}}}_1 \\
\vdots \\
\dot{\bar{\mathcal{S}}}_n
\end{bmatrix}
= 
\begin{bmatrix}
\mathcal{Q}_{1,1}& \mathcal{Q}_{1,2}& \cdots & \mathcal{Q}_{1,n}\\
0 & 0 & \cdots & 0 \\
\mathcal{Q}_{3,1}& \mathcal{Q}_{3,2}& \cdots & \mathcal{Q}_{3,n}\\
\vdots & \vdots & \ddots & \vdots \\
0 & 0 & \cdots & 0 
\end{bmatrix}
\begin{bmatrix}
\bar{\mathcal{S}}_1 \\
\vdots\\
\bar{\mathcal{S}}_n
\end{bmatrix} \, ,
\end{equation}
which can be reformulated as
\begin{multline}
\label{eqn:stuAge}
\begin{bmatrix}
\dot{\bar{\mathcal{S}}}'_1 \\
\vdots \\
\dot{\bar{\mathcal{S}}}'_{n-2}
\end{bmatrix}
= 
\begin{bmatrix}
\mathcal{Q}'_{1,1}& \mathcal{Q}'_{1,2}& \cdots & \mathcal{Q}'_{1,n-2}\\
\vdots & \vdots & \ddots & \vdots \\
\mathcal{Q}'_{n-2,1}& \mathcal{Q}'_{n-2,2}& \cdots & \mathcal{Q}'_{n-2,n-2}
\end{bmatrix}
\begin{bmatrix}
\bar{\mathcal{S}}'_1 \\
\vdots\\
\bar{\mathcal{S}}'_{n-2}
\end{bmatrix}
\\+  
\begin{bmatrix}
\vdots & \vdots \\
B_{k,a_1}  & B_{k,a_2} \\
\vdots & \vdots
\end{bmatrix}
\begin{bmatrix}
\bar{\mathcal{S}}_{a_1}\\
\bar{\mathcal{S}}_{a_2}
\end{bmatrix} \, ,
\end{multline}
where $B_{k,a_1}$ and $B_{k,a_2}$ are the $a_1$-th and $a_2$-th columns of $\mathcal{Q}$ with the $a_1$-th and $a_2$-th entries omitted. The entries of these columns depend both on the network structure and the update profile, and we get
\begin{equation}
\label{eqn:inhConInp}
\dot{\bar{\mathcal{S}}}'(t) = \mathcal{Q}' \bar{\mathcal{S}}'(t) + B u \, ,
\end{equation}
where $\bar{\mathcal{S}}'$ and $\mathcal{Q}'$ represent the reduced state space and reduced transition rate matrix, and $u = [\bar{\mathcal{S}}_{a_1} \cdots \bar{\mathcal{S}}_{a_n}]^T$ contains the static quantities of the stubborn agents. We notice from \eqref{eqn:inhConInp} that the presence of the stubborn agents in the network leads to a reduction of the state space and a formulation equivalent to \eqref{eqn:inhomoMaster} with constant but possibly different inputs for different stubborn nodes, and $\mathcal{U} = Bu$. The reduced state space excludes the stubborn agents and the reduced transition rate matrix excludes the rows and columns corresponding to these agents. Note that the inhomogeneous equation \eqref{eqn:inhomoMaster} with static inputs corresponds to typical open-loop control systems.

Since the input vector $\mathcal{U}$ is constant, it is known that bounded-input bounded-output (BIBO) stability of \eqref{eqn:inhConInp} is guaranteed iff all eigenvalues of $\mathcal{Q}'$ have negative real components. In addition, since $\mathcal{Q}'$ is constructed differently from $\mathcal{Q}$, the convergence of the homogeneous and inhomogeneous systems is not identical. Define $\mathcal{I}_{\mathrm{st}} = \lbrace i \colon \bar{\mathcal{S}}_i(t) \, \text{is constant} \rbrace$ as the index set of stubborn agents. Then, we can write $\mathcal{Q}_{ii} = -\sum_{j \neq i, j \notin \mathcal{I}_{\mathrm{st}}} \mathcal{Q}_{ij} - \sum_{j \neq i, j \in \mathcal{I}_{\mathrm{st}}} \mathcal{Q}_{ij}$, and the row sum of $\mathcal{Q}'$ satisfies 

\begin{equation}
\mathcal{Q}'_{ii} + \sum_{j \neq i} \mathcal{Q}'_{ij} \leq 0 \, .
\end{equation}
As a consequence, $\mathcal{Q}'$ does not have a zero eigenvalue in general and is nonsingular in general. In the following Lemma, we provide a condition that guarantees the non-singularity of $\mathcal{Q}'$. 
\begin{Lemma}\label{lem:stuAgeSta}
In a network with stubborn agents, the transition rate matrix $\mathcal{Q'}$ of the reduced state space is invertible if all nodes are connected to at least one stubborn agent.
\end{Lemma}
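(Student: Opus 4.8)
The plan is to recognize $\mathcal{Q}'$ as the principal submatrix of $\mathcal{Q} = -L^{(\mathrm{out})}(G)$ obtained by deleting the rows and columns indexed by $\mathcal{I}_{\mathrm{st}}$, and to establish its nonsingularity through a diagonal-dominance argument applied to $-\mathcal{Q}'$. First I would record the sign structure inherited from \eqref{eqn:markovCoeffP2}: the off-diagonal entries satisfy $\mathcal{Q}'_{ij} \geq 0$, while the diagonal entries $\mathcal{Q}'_{ii} = \mathcal{Q}_{ii} \leq 0$ are unchanged by the reduction, since deleting the stubborn rows and columns alters neither the surviving diagonal nor the surviving off-diagonal couplings between non-stubborn nodes. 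The quantity that carries the connectivity hypothesis is the reduced row sum computed just before the statement, namely $\mathcal{Q}'_{ii} + \sum_{j \neq i}\mathcal{Q}'_{ij} = -\sum_{j \in \mathcal{I}_{\mathrm{st}}}\mathcal{Q}_{ij} \leq 0$, whose right-hand side is the total polling weight that node $i$ directs at stubborn agents.

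The main step is to convert the hypothesis into strict diagonal dominance. If every non-stubborn node $i$ polls at least one stubborn agent, then $\sum_{j \in \mathcal{I}_{\mathrm{st}}}\mathcal{Q}_{ij} > 0$, so every reduced row sum is strictly negative. Rewriting this for $-\mathcal{Q}'$ gives $|{-\mathcal{Q}'_{ii}}| = -\mathcal{Q}'_{ii} > \sum_{j \neq i}\mathcal{Q}'_{ij} = \sum_{j \neq i}|{-\mathcal{Q}'_{ij}}|$ for every $i$, so $-\mathcal{Q}'$ is strictly diagonally dominant with positive diagonal. By Ger\v{s}gorin's circle theorem each eigenvalue of $-\mathcal{Q}'$ lies in a disc centered at $-\mathcal{Q}'_{ii} > 0$ of radius strictly smaller than $-\mathcal{Q}'_{ii}$, so every such disc is contained in the open right half-plane and excludes the origin. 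Hence $0$ is not an eigenvalue of $-\mathcal{Q}'$, and both $-\mathcal{Q}'$ and $\mathcal{Q}'$ are invertible. Equivalently, $-\mathcal{Q}'$ is a strictly diagonally dominant Z-matrix and therefore a nonsingular M-matrix.

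The delicate point—and where the hypothesis must be read carefully—is the meaning of ``connected to at least one stubborn agent.'' The clean Ger\v{s}gorin argument requires a direct polling edge $i \to s$ for each non-stubborn $i$, yielding rowwise strict dominance. If instead the hypothesis only guarantees a directed path from each node to some stubborn agent, then a node polling only non-stubborn neighbors has reduced row sum exactly zero, $-\mathcal{Q}'$ is merely weakly diagonally dominant, and Ger\v{s}gorin alone no longer rules out a zero eigenvalue. The remedy is to invoke the theory of weakly chained diagonally dominant matrices, or equivalently the M-matrix characterization via a strictly positive vector $x$ with $-\mathcal{Q}'x > 0$: $-\mathcal{Q}'$ is a weakly diagonally dominant Z-matrix in whose associated digraph every index reaches an index of strict dominance, and such matrices are nonsingular. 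I expect this chaining step, rather than the elementary dominant case, to be the genuine obstacle, since it is precisely what propagates the strict inequality enjoyed by nodes adjacent to stubborn agents back along polling paths to all remaining nodes.
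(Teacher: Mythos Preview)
Your proposal is correct and follows essentially the same route as the paper: the paper reads ``connected to at least one stubborn agent'' as a direct polling edge, obtains the strict row-sum inequality $\mathcal{Q}'_{ii} + \sum_{j \neq i}\mathcal{Q}'_{ij} < 0$, and then applies Ger\v{s}gorin's circle theorem to conclude that $0$ is not an eigenvalue of $\mathcal{Q}'$. Your additional discussion of the weakly chained diagonally dominant case (when only a directed path to a stubborn agent is assumed) is a genuine extension beyond what the paper proves, but for the lemma as stated and interpreted there, the elementary strict-dominance argument is all that is used.
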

\begin{proof}
If each agent is connected to at least one stubborn agent, $\mathcal{Q}'$ is strictly diagonally dominant and we can write
\be
\mathcal{Q}'_{ii} + \sum_{j \neq i} \mathcal{Q}'_{ij} < 0 \, .
\ee
According to Ger\v{s}gorin's circle theorem, all eigenvalues of $\mathcal{Q}'$ reside in the complex plane within the union of the disks $D_i = \lbrace z \in \mathbb{C} \,: \,  |z - \mathcal{Q}'_{ii}| \leq \sum_{j \neq i} |\mathcal{Q}'_{ij}|\rbrace \, , \, 1\leq i \leq n'$ with $n'$ the dimension of $\mathcal{Q}'$. As  $|\mathcal{Q}'_{ii}| > \sum_{j \neq i} |\mathcal{Q}'_{ij}|$ for a strictly diagonally dominant matrix, $\mathcal{Q}'$ cannot have a zero eigenvalue and is invertible.
\end{proof}
From Lemma \ref{lem:stuAgeSta}, we notice that for networks with stubborn agents the eigenvalues will on average be more concentrated around $\mathcal{Q}'_{ii}$, which can accelerate the network dynamics. For the convergence of a BIBO-stable reduced system involving $\mathcal{Q}'$ with the conditions of Lemma \ref{lem:stuAgeSta} satisfied, we formulate the following Lemma.
\begin{Lemma}
Let $Bu \doteq b$. If each agent is connected to at least one stubborn agent, then the network state converges to $\bar{\mathcal{S}}'^* = -{\mathcal{Q}'}^{-1}b$. We can then express the system dynamics as a homogeneous equation in the deviation of the property of a node from its steady state $\bar{\mathcal{S}}'_\Delta = \bar{\mathcal{S}}' - \bar{\mathcal{S}}'^*$:
\begin{equation}
\dot{\bar{\mathcal{S}}}'_\Delta (t) = \mathcal{Q}' \bar{\mathcal{S}}'_\Delta (t) \, .
\end{equation}
\label{lem:stubbornConv}
\end{Lemma}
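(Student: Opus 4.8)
The plan is to exploit the nonsingularity of $\mathcal{Q}'$ guaranteed by Lemma \ref{lem:stuAgeSta}, together with the precise location of its spectrum, to identify a unique equilibrium and then reduce the inhomogeneous dynamics to a homogeneous one. First I would recall that under the stated hypothesis $\mathcal{Q}'$ is strictly diagonally dominant with strictly negative diagonal entries, exactly as established in the proof of Lemma \ref{lem:stuAgeSta}. Consequently every Ger\v{s}gorin disk $D_i = \lbrace z \in \mathbb{C} : |z - \mathcal{Q}'_{ii}| \leq \sum_{j \neq i} |\mathcal{Q}'_{ij}| \rbrace$ lies entirely in the open left half-plane, since its rightmost point $\mathcal{Q}'_{ii} + \sum_{j \neq i} |\mathcal{Q}'_{ij}|$ is strictly negative. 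Hence all eigenvalues of $\mathcal{Q}'$ have strictly negative real parts, which simultaneously re-certifies the invertibility used below and furnishes the BIBO stability invoked in the preceding discussion.

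Next I would determine the equilibrium of \eqref{eqn:inhConInp} with $Bu = b$ by setting $\dot{\bar{\mathcal{S}}}' = 0$, which gives $\mathcal{Q}' \bar{\mathcal{S}}'^* + b = 0$; since $\mathcal{Q}'$ is invertible this has the unique solution $\bar{\mathcal{S}}'^* = -{\mathcal{Q}'}^{-1} b$, as claimed. I would then introduce the change of variables $\bar{\mathcal{S}}'_\Delta = \bar{\mathcal{S}}' - \bar{\mathcal{S}}'^*$. Because $\bar{\mathcal{S}}'^*$ is constant, $\dot{\bar{\mathcal{S}}}'_\Delta = \dot{\bar{\mathcal{S}}}'$, and substituting into \eqref{eqn:inhConInp} yields
\begin{equation}
\dot{\bar{\mathcal{S}}}'_\Delta = \mathcal{Q}'\left(\bar{\mathcal{S}}'_\Delta + \bar{\mathcal{S}}'^*\right) + b = \mathcal{Q}' \bar{\mathcal{S}}'_\Delta + \left(\mathcal{Q}' \bar{\mathcal{S}}'^* + b\right) = \mathcal{Q}' \bar{\mathcal{S}}'_\Delta \, ,
\end{equation}
where the parenthesized term vanishes by the defining relation for $\bar{\mathcal{S}}'^*$. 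This is precisely the homogeneous equation in the statement.

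Finally, convergence follows from the closed-form solution $\bar{\mathcal{S}}'_\Delta(t) = \exp(\mathcal{Q}' t)\,\bar{\mathcal{S}}'_\Delta(0)$: since every eigenvalue of $\mathcal{Q}'$ has strictly negative real part, $\exp(\mathcal{Q}' t) \to 0$ as $t \to \infty$, so $\bar{\mathcal{S}}'_\Delta(t) \to 0$ and therefore $\bar{\mathcal{S}}'(t) \to \bar{\mathcal{S}}'^*$. I expect no serious obstacle here; the single point requiring care is that mere nonsingularity of $\mathcal{Q}'$ would not by itself guarantee convergence, so the argument hinges on upgrading the conclusion of Lemma \ref{lem:stuAgeSta} to the sharper statement that the spectrum of $\mathcal{Q}'$ lies strictly in the left half-plane, which the Ger\v{s}gorin bound already delivers from strict diagonal dominance.
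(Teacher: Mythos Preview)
Your proposal is correct and its core logic matches the paper's: reduce the inhomogeneous system to a homogeneous one in the deviation $\bar{\mathcal{S}}'_\Delta$, and then invoke that the spectrum of $\mathcal{Q}'$ lies strictly in the open left half-plane (which the paper also asserts, and which you justify explicitly via Ger\v{s}gorin and strict diagonal dominance from Lemma~\ref{lem:stuAgeSta}). The difference lies in how convergence of the homogeneous deviation equation is certified. You use the direct matrix-exponential argument $\bar{\mathcal{S}}'_\Delta(t)=\exp(\mathcal{Q}'t)\bar{\mathcal{S}}'_\Delta(0)\to 0$, which handles symmetric and nonsymmetric $\mathcal{Q}'$ uniformly. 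The paper instead proceeds via Lyapunov functions: for symmetric $\mathcal{Q}'$ it uses the quadratic $V=\bar{\mathcal{S}}_\Delta^{\prime T}\bar{\mathcal{S}}'_\Delta$ and negative definiteness of $\mathcal{Q}'$, while for the nonsymmetric case it appeals to strict diagonal dominance and the Lyapunov function $V=\max_i|\bar{\mathcal{S}}'_{\Delta,i}|$ (citing \cite{willems1976lyapunov}). Your route is more elementary and avoids the case split; the paper's Lyapunov route yields global asymptotic stability in a form that connects more directly to the stability language used elsewhere in the section.
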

\begin{proof}
The proof of this Lemma follows from traditional convergence results similar to those for the basic consensus algorithm~\cite{olfati2004consensus,olfati2007consensus,acemoglu2008convergence,acemoglu2013opinion} since the inhomogeneous equation in $\bar{\mathcal{S}}'$ can be expressed as a homogeneous equation in $\bar{\mathcal{S}}'_\Delta (t)$ and the poles of $\mathcal{Q}'$ lie in the open left-half complex plane.

Global asymptotic stability can be demonstrated for symmetric $\mathcal{Q'}$ by using the standard Lyapunov function $V = \bar{\mathcal{S}}_\Delta^{'T} \bar{\mathcal{S}}'_\Delta$. In this case, we obtain
\begin{equation}
\dot{V} = \bar{\mathcal{S}}_\Delta^{'T} \mathcal{Q}'^T \bar{\mathcal{S}}'_\Delta + \bar{\mathcal{S}}_\Delta^{'T} \mathcal{Q}' \bar{\mathcal{S}}'_\Delta < 0 \enskip \enskip \forall  \,\, \bar{\mathcal{S}}'_\Delta \neq \vec{0} \notag
\end{equation}
since $\mathcal{Q}'$ and thus its transpose are negative definite. This demonstrates that the equilibrium point of the network is globally asymptotically stable, and the location of the point can be obtained by setting $\dot{\bar{\mathcal{S}}}'_\Delta=\vec{0}$. 

The Lyapunov function above does not hold for nonsymmetric $\mathcal{Q}'$. However, because $\mathcal{Q}'$ is strictly diagonally dominant, the existence of a Lyapunov function for $\mathcal{Q}'$ can be demonstrated to prove the global asymptotic stability of $\bar{\mathcal{S}}'$~\cite{willems1976lyapunov}. One such Lyapunov function is $ V = \max_i |\bar{\mathcal{S}}'_{\Delta, i}|$, which proves the global asymptotic stability of $\bar{\mathcal{S}}'$.
\end{proof}

Note that in the general case of conservative or non-conservative updating without stubborn agents, $\mathcal{Q}$ is constructed such that it has a steady state eigenvalue $q_\mathrm{s} = 0$. As a consequence, $\mathcal{Q}$ is not invertible. If $\mathcal{Q}$ (or $\mathcal{Q}'$) is singular, then BIBO stability is no longer guaranteed, and the system is instead only marginally stable.

\begin{Lemma}
A non-trivially convergent network where $q_\mathrm{s} = 0$ exists, diverges in infinite time in the presence of a constant input vector $b$.
\end{Lemma}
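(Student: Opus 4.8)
The plan is to start from the closed-form solution \eqref{eqn:inhomoMasterSoln} specialized to the constant input $\mathcal{U}(\tau) = b$, and to exhibit the zero eigenvalue $q_\mathrm{s} = 0$ as an integrator that converts the constant excitation into a linearly growing (secular) term. After setting $\mathcal{U}(\tau)=b$ and changing the variable of integration to $s = t-\tau$, the forced part of the response is $\int_0^t \exp(\mathcal{Q}s)\,b\,ds$, which I would expand using the modal decomposition $\exp(\mathcal{Q}s) = \sum_k \exp(q_k s)\,v_{\mathrm{R},k} v_{\mathrm{L}^*,k}^T$ already established in \eqref{eqn:markov3}, so that the integral factors as $\sum_k v_{\mathrm{R},k}\,(v_{\mathrm{L}^*,k}^T b)\int_0^t \exp(q_k s)\,ds$.

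First I would split this sum into the steady-state mode $k=\mathrm{s}$ and the remaining modes. For every $k$ with $q_k \neq 0$ the elementary integral $\int_0^t \exp(q_k s)\,ds = (\exp(q_k t)-1)/q_k$ stays bounded as $t\to\infty$, because convergence of the homogeneous network forces $\mathrm{Re}(q_k) < 0$ for all such modes, so these contributions tend to the finite limits $-1/q_k$. The zero mode behaves differently: $\int_0^t \exp(0\cdot s)\,ds = t$, so this single term contributes $t\,(v_{\mathrm{L}^*,\mathrm{s}}^T b)\,v_{\mathrm{R,s}}$ to the response.

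Combining the pieces, the full solution is asymptotically $\bar{\mathcal{S}}(t) = t\,(v_{\mathrm{L}^*,\mathrm{s}}^T b)\,v_{\mathrm{R,s}} + O(1)$, since the homogeneous part $\bar{\mathcal{S}}_h(t)$ converges to $c_\mathrm{s} v_{\mathrm{R,s}}$ and the remaining forced modes stay bounded. Hence $\|\bar{\mathcal{S}}(t)\|$ grows without bound like $t$ whenever the projection $v_{\mathrm{L}^*,\mathrm{s}}^T b$ of the input onto the left zero-eigenvector is nonzero, which is exactly divergence in infinite time. This also makes the physical picture precise: the zero mode is the conserved (or consensus) direction, and a constant injection feeding it is never dissipated but integrated indefinitely.

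The hard part will be ruling out the one degenerate direction in which the projection vanishes, i.e.\ establishing $v_{\mathrm{L}^*,\mathrm{s}}^T b \neq 0$ for the inputs of interest. I expect to dispatch this by appealing to the structure of $\mathcal{Q}$: in the conservative case the steady-state left eigenvector has equal components, so $v_{\mathrm{L}^*,\mathrm{s}}^T b$ is proportional to the total injected quantity $\sum_i b_i$, which is nonzero for any genuine net addition of property; in the non-conservative case $v_{\mathrm{L,s}}$ is the stationary distribution with strictly positive entries, so the projection is nonzero for any nonnegative, nonzero $b$. Thus for any generic constant input the secular term survives and the network diverges, and I would state this projection condition explicitly rather than leaving the lone orthogonal direction $v_{\mathrm{L}^*,\mathrm{s}}^T b = 0$ implicit.
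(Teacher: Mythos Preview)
Your proposal is correct and follows essentially the same route as the paper: both arguments evaluate the convolution integral in \eqref{eqn:inhomoMasterSoln} via the eigendecomposition of $\exp(\mathcal{Q}s)$ and observe that the steady-state mode with $q_\mathrm{s}=0$ produces a non-decaying (linearly growing) contribution, so the integral fails to converge. Your treatment is in fact more explicit than the paper's terse proof, and your discussion of the degenerate case $v_{\mathrm{L}^*,\mathrm{s}}^T b = 0$ is a genuine refinement that the paper leaves implicit.
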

\begin{proof}
Consider the convolution integral in \eqref{eqn:inhomoMasterSoln}. Since $\mathcal{Q}$ is not invertible, we perform the convolution integral by expanding the matrix exponential in a power series and performing an eigendecomposition of $\mathcal{Q}$. Then, since the steady-state eigenvector does not decay with time, the integral does not converge in infinite time.
\end{proof}

\subsection{Non-conservative networks with dynamic inputs}
In the case of dynamic learning, each node updates its own value based on the value of its neighboring nodes, as well as the difference of its own value with a measurement of the system state~\cite{acemoglu2008convergence}, which can be interpreted as a reference input for the node. This scenario is of interest to model tracking systems that are augmented with the information present in the neighborhood. This can be represented by the following update rule

\begin{equation}
\mathcal{S}_i(t+\Delta t) = \mathcal{S}_i(t) + \sum_{j\neq i} \mathcal{C}_{ij} (\mathcal{S}_j(t)-\mathcal{S}_i(t)) + \beta(t) (X_i(t)-\mathcal{S}_i(t)) \, ,
\label{eqn:dynamicLearning}
\end{equation}
where $X_i(t)$ represents the measurement of node $i$ at time $t$. If the measurement of each node is made at a rate $\rho$, then the corresponding governing equation is
\begin{equation}
\dot{\bar{\mathcal{S}}}(t) = (\mathcal{Q}-\beta'\mathbb{I})\bar{\mathcal{S}}(t) + \beta' X(t) \, ,
\label{eqn:dynamicInput}
\end{equation}
where $\beta' = \beta \rho$, and where $r_{ij} = \rho$ in the construction of $Q_{ij}$. This is again reminiscent of \eqref{eqn:inhomoMaster} with time-varying $\mathcal{U} = \beta' X(t)$. The inhomogeneous governing equation with dynamic reference inputs corresponds directly to typical closed-loop proportional control systems with reference input $X$ and proportional gain $\beta'$.

Analogous to the case of systems with stubborn agents above is the BIBO stability criteria for the system with dynamic learning and constant $\beta'$: iff all eigenvalues of $\mathcal{Q}_D = \mathcal{Q}-\beta'\mathbb{I}$ have negative real components, then an input $\beta' X(t)$ that is bounded for all time $t>0$ will result in BIBO stability. In addition, we formulate the following Lemmas for the convergence of a BIBO stable dynamic learning system with constant $\beta'$.

\begin{Lemma}
If $\mathcal{Q}_D$ is invertible and $\lim_{t \rightarrow \infty} X(t) = X^*$ for some real-valued constant $X^*$, then the network state converges to $\bar{\mathcal{S}}^* = -{\mathcal{Q}_D}^{-1} \beta' X^*$.
\label{lem:dynamicConv}
\end{Lemma}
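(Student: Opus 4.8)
The plan is to reduce the inhomogeneous, time-varying system to an error system whose forcing term vanishes, and then show that both components of the variation-of-parameters solution decay to zero. First I would identify the candidate limit: setting $\dot{\bar{\mathcal{S}}} = \vec{0}$ in \eqref{eqn:dynamicInput} with the limiting input $\beta' X^*$ gives $\mathcal{Q}_D \bar{\mathcal{S}}^* + \beta' X^* = \vec{0}$, and since $\mathcal{Q}_D$ is invertible this has the unique solution $\bar{\mathcal{S}}^* = -\mathcal{Q}_D^{-1}\beta' X^*$, exactly the claimed value.

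Next I would introduce the error $e(t) = \bar{\mathcal{S}}(t) - \bar{\mathcal{S}}^*$ and substitute into \eqref{eqn:dynamicInput}. Using $\mathcal{Q}_D \bar{\mathcal{S}}^* = -\beta' X^*$, the constant offsets cancel and the error obeys $\dot{e}(t) = \mathcal{Q}_D e(t) + \beta'(X(t) - X^*)$, a linear system driven by a forcing term $g(\tau) := \beta'(X(\tau) - X^*)$ that tends to $\vec{0}$ by hypothesis. Applying the variation-of-parameters form \eqref{eqn:inhomoMasterSoln} to this error system yields $e(t) = \exp(\mathcal{Q}_D t)\, e(0) + \int_0^t \exp(\mathcal{Q}_D (t - \tau))\, g(\tau)\, d\tau$.

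Then I would show that each term vanishes. Because the system is assumed BIBO stable, $\mathcal{Q}_D$ is Hurwitz, so there exist constants $M > 0$ and $\sigma > 0$ with $\|\exp(\mathcal{Q}_D t)\|_2 \leq M \exp(-\sigma t)$; the homogeneous term $\exp(\mathcal{Q}_D t)\, e(0)$ therefore decays to zero. The convolution term is the main obstacle, since $g$ is only asymptotically (not identically) zero. I would handle it with a standard $\varepsilon$-split: given $\varepsilon > 0$, pick $T$ so that $\|g(\tau)\|_2 < \varepsilon$ for all $\tau > T$, then split the integral at $T$. The tail over $[T,t]$ is bounded by $\varepsilon M / \sigma$ uniformly in $t$, while the head over $[0,T]$ is bounded by a constant times $\exp(-\sigma(t-T))$, which vanishes as $t \to \infty$.

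Combining the two estimates gives $\limsup_{t\to\infty}\|e(t)\|_2 \leq \varepsilon M/\sigma$ for every $\varepsilon > 0$, hence $e(t) \to \vec{0}$ and $\bar{\mathcal{S}}(t) \to \bar{\mathcal{S}}^*$. The only remaining subtlety is ensuring the kernel bound $M\exp(-\sigma t)$ holds even when $\mathcal{Q}_D$ is not diagonalizable; this follows from the Hurwitz property, since any polynomial prefactor arising from nontrivial Jordan blocks is absorbed by enlarging $M$ and slightly decreasing $\sigma$ below the spectral abscissa.
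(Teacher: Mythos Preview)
Your argument is correct. The paper's own proof is a one-line pointer: it says the result is ``analogous to the proof of Lemma~\ref{lem:stubbornConv},'' which treated the constant-input case by rewriting the dynamics as a homogeneous equation in the deviation $\bar{\mathcal{S}}'_\Delta$ and invoking a Lyapunov argument (quadratic for symmetric $\mathcal{Q}'$, $\max_i|\cdot|$ in the diagonally dominant asymmetric case). You take a genuinely different and more explicit route: you form the same error variable, but then handle the residual time-varying forcing $\beta'(X(t)-X^*)$ directly via the variation-of-parameters formula and an $\varepsilon$-split on the convolution, using the Hurwitz bound $\|\exp(\mathcal{Q}_D t)\|\le M e^{-\sigma t}$.

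The trade-off is this: the paper's Lyapunov reference is terse but strictly speaking covers only the constant-input situation of Lemma~\ref{lem:stubbornConv}, leaving the reader to infer how the transient part of $X(t)$ is absorbed. Your approach closes that gap explicitly and does not rely on symmetry or diagonal dominance of $\mathcal{Q}_D$, only on its spectrum lying in the open left half-plane; the note you add about nontrivial Jordan blocks makes the kernel bound watertight even for defective $\mathcal{Q}_D$. Either approach is acceptable here, but yours is the more self-contained one for the stated hypotheses.
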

\begin{proof}
The proof of this Lemma is analogous to the proof of Lemma \ref{lem:stubbornConv}. 
\end{proof}

\begin{Remark}
If $\mathcal{Q}$ corresponds to a CTMC and $\beta' > 0$, then the eigenvalues of $\mathcal{Q}_D$ are exactly $\beta'$ less than the eigenvalues of $\mathcal{Q}$, and the resultant system is guaranteed to be BIBO stable, provided $\beta'$ takes the same value for all agents.
\end{Remark}

In the more general case where $X(t)$ is not an independent measurement, but a function of $\bar{\mathcal{S}}_\Delta$, we can formulate the following Lemma.

\begin{Lemma}
If $\bar{\mathcal{S}}_\Delta$ reaches equilibrium much faster than $X$, then we can approximate $\dot{X}(t) = h(X,\bar{\mathcal{S}}_\Delta)$ as $\dot{X}(t) \approx h(X,0)$ for any arbitrary function $h$ and regardless of whether $X$ eventually converges. The time evolutions of $\bar{\mathcal{S}}_\Delta$ and $X$ can then be approximated as independent even if they were originally coupled.
\end{Lemma}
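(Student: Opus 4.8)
The plan is to read this as a two-timescale \emph{singular perturbation} problem and settle it with Tikhonov's quasi-steady-state theorem. First I would write the coupled dynamics explicitly. Taking the dynamic-learning model $\dot{\bar{\mathcal{S}}} = \mathcal{Q}_D\bar{\mathcal{S}} + \beta' X$ and defining $\bar{\mathcal{S}}_\Delta := \bar{\mathcal{S}} - \bar{\mathcal{S}}^*(X)$ with the frozen-$X$ quasi-steady state $\bar{\mathcal{S}}^*(X) = -\mathcal{Q}_D^{-1}\beta' X$ from Lemma~\ref{lem:dynamicConv}, a direct differentiation gives the standard fast/slow form
\[
\dot{\bar{\mathcal{S}}}_\Delta = \mathcal{Q}_D\,\bar{\mathcal{S}}_\Delta + \mathcal{Q}_D^{-1}\beta'\,\dot{X} \, , \qquad \dot{X} = h(X,\bar{\mathcal{S}}_\Delta) \, .
\]
I would then make the hypothesis ``$\bar{\mathcal{S}}_\Delta$ reaches equilibrium much faster than $X$'' precise by a small parameter $\epsilon>0$ equal to the ratio of the fast relaxation time of $\bar{\mathcal{S}}_\Delta$ (set by the spectral gap of $\mathcal{Q}_D$) to the characteristic time of $X$; under this scaling the forcing $\mathcal{Q}_D^{-1}\beta'\dot{X}$ is $O(\epsilon)$ and the first equation is a genuine boundary-layer equation.

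The crucial step is to identify the quasi-steady state and establish its stability. Because the BIBO-stability criterion established above guarantees that $\mathcal{Q}_D$ is Hurwitz (all eigenvalues in the open left half-plane), the boundary-layer system $\tfrac{d}{d\tau}\bar{\mathcal{S}}_\Delta = \mathcal{Q}_D\,\bar{\mathcal{S}}_\Delta$, in the stretched time $\tau = t/\epsilon$, is linear, time-invariant, and globally exponentially stable to the \emph{isolated} root $\bar{\mathcal{S}}_\Delta = \vec{0}$ uniformly in $X$. This uniform exponential attractivity is precisely the isolated-root-plus-stability hypothesis that Tikhonov's theorem demands, and it is exactly the equilibrium toward which Lemma~\ref{lem:dynamicConv} (and, in the stubborn-agent reduction, Lemma~\ref{lem:stubbornConv}) guarantees convergence.

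With these hypotheses in hand, Tikhonov's theorem yields the conclusion: on any fixed time interval, after an initial boundary layer of width $O(\epsilon)$ the full trajectory satisfies $\bar{\mathcal{S}}_\Delta(t) = O(\epsilon)$, so by Lipschitz continuity of the arbitrary $h$,
\[
h(X,\bar{\mathcal{S}}_\Delta) = h(X,\vec{0}) + O(\epsilon) \, .
\]
Substituting into the slow equation gives $\dot{X}(t) = h(X,\vec{0}) + O(\epsilon)$, i.e. the reduced law $\dot{X}(t) \approx h(X,\vec{0})$ holds to leading order. Since this estimate is valid on each finite horizon and uses only the attractivity of the boundary layer, it is justified \emph{regardless of whether $X$ eventually converges}, and the evolution of $X$ is thereby decoupled from the fast relaxation of $\bar{\mathcal{S}}_\Delta$, as claimed.

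The step I expect to be the main obstacle is controlling the feedback of the slow variable back into the fast subsystem: when $X$ depends on $\bar{\mathcal{S}}_\Delta$, the nominal quasi-steady state $\bar{\mathcal{S}}^*(X)$ drifts, and one must verify that this drift does not spoil the attractivity of the boundary layer. Here the linearity and constancy of $\mathcal{Q}_D$ work in our favour --- the drift enters only through the term $\mathcal{Q}_D^{-1}\beta'\dot{X}$, which is $O(\epsilon)$ on the fast timescale and cannot perturb the fixed Hurwitz spectrum of $\mathcal{Q}_D$, so Tikhonov's error bound holds uniformly in $X$ and the decoupling can be made rigorous rather than merely heuristic.
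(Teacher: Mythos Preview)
Your proposal is correct and takes essentially the same approach as the paper: the paper's entire proof is the single sentence ``This Lemma follows directly from the moving equilibrium theorem,'' which is precisely the Tikhonov quasi-steady-state/singular-perturbation result you invoke. You have simply supplied the details the paper omits --- the explicit fast/slow decomposition, the Hurwitz property of $\mathcal{Q}_D$ guaranteeing uniform exponential stability of the boundary layer, and the resulting $O(\epsilon)$ error bound --- so your argument is a fleshed-out version of the paper's one-line citation.
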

\begin{proof}
This Lemma follows directly from the moving equilibrium theorem.
\end{proof}

\subsection{Dynamic non-reference inputs}
Consider the following variation of \eqref{eqn:dynamicInput}

\begin{equation}
\dot{\bar{\mathcal{S}}}(t) = \mathcal{Q}\bar{\mathcal{S}}(t) + \mathcal{U}(t) \, ,
\end{equation}
where $\mathcal{U}(t)$ is some arbitrary time-varying input vector and $\mathcal{Q}$ can be used to described a CTMC. The inhomogeneous governing equation with dynamic non-reference inputs corresponds to typical open-loop control systems. Since $\mathcal{Q}$ is singular, the system is only marginally stable and not BIBO stable. A marginally stable system is known to give a bounded output if its input is an impulse of finite magnitude. By the principle of superposition, a continuously finite input that eventually decays to zero will also give a bounded output.

\subsection{Expanded state space}
Consider the following variation of \eqref{eqn:dynamicLearning}
\begin{multline}
\mathcal{S}_i(t+\Delta t) = \mathcal{S}_i(t) + \sum_{j\neq i} \mathcal{C}_{ij} (\mathcal{S}_j(t)-\mathcal{S}_i(t)) + \beta_1(X_i(t)-\mathcal{S}_i(t)) \\
+ \beta_2(\dot{X}_i(t)-\dot{\mathcal{S}}_i(t)) + \beta_3(Y_i(t)-\mathcal{T}_i(t)) \, ,
\end{multline}
where $\dot{Y}_i(t) = X_i(t)$ and $\dot{\mathcal{T}}_i(t) = \mathcal{S}_i(t)$. This leads us to the coupled equations
\begin{align}
\dot{\bar{\mathcal{T}}}(t) &= \bar{\mathcal{S}}(t) \, , \\
(1+\beta'_2)\dot{\bar{\mathcal{S}}}(t) &=
\begin{aligned}[t]
& (\mathcal{Q}-\beta'_1\mathbb{I})\bar{\mathcal{S}}(t) - \beta'_3\bar{\mathcal{T}}(t) \\
& + \beta'_1 X(t) + \beta'_2 \dot{X}(t) + \beta'_3 Y(t) \, ,
\end{aligned}
\end{align}
where $\beta'_1 = \beta_1 \rho$, $\beta'_2 = \beta_2 \rho$ and $\beta'_3 = \beta_3 \rho$. The inhomogeneous governing equation with an expanded state space corresponds to typical closed-loop proportional, integral and derivative (PID) control systems. Here, $\beta_1$, $\beta_3$ and $\beta_2$ correspond to P, I and D controls respectively. The expanded state space now includes the components of both $\mathcal{S}$ and $\mathcal{T}$, and we can write this as a single relation
\begin{multline}
(1 + \beta'_2) \mathbb{I} \ddot{\bar{\mathcal{T}}}(t) + (\beta'_1 \mathbb{I} - \mathcal{Q}) \dot{\bar{\mathcal{T}}}(t) + \beta'_3 \mathbb{I} \bar{\mathcal{T}}(t) \\
= \beta'_1 \dot{Y}(t) + \beta'_2 \ddot{Y}(t) + \beta'_3 Y(t) \, .
\end{multline}
Performing a Laplace transform and then rearranging, we obtain
\be
\dfrac{\hat{\bar{\mathcal{S}}}}{\hat{X}} = 
(\beta'_2 s^2 + \beta'_1 s + \beta'_3) \left[ (1+\beta'_2)\mathbb{I} s^2 + (\beta'_1 \mathbb{I} - \mathcal{Q}) s + \beta'_3 \mathbb{I} \right]^{-1}   \, ,
\label{eqn:PIDTransfer}
\ee
where the hat above a quantity denotes a Laplace-transformed quantity. The stability of this system is determined by the roots of the characteristic polynomial of the system, which can be obtained from the determinant of the matrix $\mathcal{Q}_E = \left[ (1+\beta'_2)\mathbb{I} s^2 + (\beta'_1 \mathbb{I} - \mathcal{Q}) s + \beta'_3 \mathbb{I} \right]$. From standard state-space theory, if all the poles of the system, or the roots of the characteristic polynomial of $\mathcal{Q}_E$, lie in the open left-half complex plane, then the system is BIBO stable. If at least one of the poles lies on the complex axis, then the system is instead marginally stable. Depending on the values of $\beta'_1$, $\beta'_2$ and $\beta'_3$, the convergence of a BIBO stable system can be derived. For example, if $\beta'_2 = \beta'_3 = 0$, we recover the results of Lemma \ref{lem:dynamicConv}; on the other hand, if $\beta'_3 \neq 0$, the final value theorem tells us that taking the limit $s \rightarrow 0$ of $s\hat{\bar{\mathcal{S}}}$ gives us the steady state value $\hat{\bar{\mathcal{S}}}_\infty = \hat{X}_\infty$.

\section{Control by exogenous excitation}
\label{sec:External}
Whereas we studied in Section \ref{sec:inhomoSC} the convergence and stability under exogenous inputs, we demonstrate in this section how the rates of property diffusion can be controlled through appropriate design of a generic time-varying input vector $\mathcal{U}(t)$. 

\subsection{Using inputs for targeted control}

We first establish a methodology for decomposing the existing system dynamics into a more convenient form for targeted control of individual system modes. We summarize this decomposition, as well as the subsequent reconstruction for implementation in real networks, in Algorithm \ref{alg:inputTransform}.

\begin{algorithm}
\caption{Implementing inputs for targeted control}
\label{alg:inputTransform}
\begin{algorithmic}[1]
\STATE Transform basis of nodes to quasi-node space using characteristic eigenvectors as in \eqref{eqn:transformBasis}
\STATE Perform a Laplace transform of the system to move into the frequency domain
\IF {implementation of external controllers is feasible}
	\STATE Implement controllers on one or more quasi-modes to achieve control objective
\ELSE
	\STATE Design quasi-inputs by incorporating controller action on one quasi-mode
\ENDIF
\STATE Perform an inverse Laplace transform back into the time domain
\STATE Transform quasi-nodes back to node space to obtain desired node inputs or system dynamics
\end{algorithmic}
\end{algorithm}

\begin{figure}[!htb]
\centering
\begin{tikzpicture}[auto, node distance=2cm,>=latex',every node/.style={scale=0.65}]
    \node [input, name=rinput] (rinput) {};
    \node [sum, right of=rinput] (sum1) {};
    \node [block, right of=sum1] (controller) {$\dfrac{1}{s}$};    
    \node [sum, right of=controller, node distance=1.5cm] (sum2) {};
    \node [input, above=0.27cm of sum2] (initial) {};
    \node [output, right of=sum2] (output) {};
    \node [block, below=0.13cm of controller] (feedback) {$-\mathcal{Q}$};
    
    \draw [->] (rinput) -- node{$\hat{\mathcal{U}}$} (sum1);
    \draw [->] (sum1) -- node{$\hat{\dot{\bar{\mathcal{S}}}}$} (controller);
    \draw [->] (controller) -- (sum2);
    \draw [->] (sum2) -- node [name=y] {$\hat{\bar{\mathcal{S}}}$}(output);
    \draw [->] (y) |- (feedback);
    \draw [->] (feedback) -| node[pos=0.99] {$-$} (sum1);
    \draw [->] (initial)node[above]{$\mathcal{S}(0)$} -- node[pos=0.99] {$+$} (sum2); 
\end{tikzpicture}
\caption{Block diagram for the inhomogeneous equation \eqref{eqn:inhomoMaster}. The hat above a quantity denotes a Laplace-transformed quantity.} \label{fig:block1}
\end{figure}
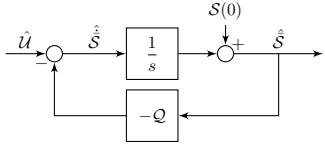
Figure \ref{fig:block1} demonstrates that \eqref{eqn:inhomoMaster} together with initial conditions $\mathcal{S}(0)$ can be expressed as a block diagram with a feedback loop. It is possible to separate the effects of the initial conditions and the external input, which is illustrated in Fig. \ref{fig:block2} for clarity. In what follows, we study the effects of modifying the lower branch in Fig. \ref{fig:block2} related to the exogenous input without influencing the upper branch. 
\begin{figure}[!htb]
\centering
\begin{tikzpicture}[auto, node distance=2cm,>=latex',every node/.style={scale=0.65}]
    \node [input, name=rinput1] (rinput1) {};
    \node [block, right of=rinput1] (controller1) {$s(s\mathbb{I}-\mathcal{Q})^{-1}$};
    \node [sum, below=0.45cm of rinput1, right of=controller1, node distance=2cm] (sum1) {};
    \node [block, below=0.45cm of sum1, left of=sum1, node distance=2cm] (controller2) {$(s\mathbb{I}-\mathcal{Q})^{-1}$};
    \node [input, left of=controller2] (rinput2) {};
    \node [output, right of=sum1] (output) {};
    
    \draw [->] (rinput1) -- node{$\mathcal{S}(0)$} (controller1);
    \draw [->] (controller1) -| node[pos=0.99] {$+$} (sum1);
    \draw [->] (controller2) -| node[pos=0.99] {$+$} (sum1);
    \draw [->] (rinput2) -- node{$\hat{\mathcal{U}}$} (controller2);
    \draw [->] (sum1) -- node {$\hat{\bar{\mathcal{S}}}$} (output);     
\end{tikzpicture}
\caption{Simplified block diagram for (\ref{eqn:inhomoMaster}).} \label{fig:block2}
\end{figure}
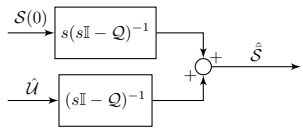
We consider here the case where the system has exactly $n$ distinct eigenvalues and eigenvectors. We use the decomposition in \eqref{eqn:eigenMaster} and rewrite \eqref{eqn:inhomoMaster} as 
\begin{equation}
A^{-1} \dot{\bar{\mathcal{S}}} = \Lambda A^{-1}\bar{\mathcal{S}} + A^{-1}\mathcal{U} \, ,
\label{eqn:transformBasis}
\end{equation}
which corresponds to performing a basis transformation by taking linear combinations of the various nodes. Under this transformation, we obtain \emph{quasi-modes} with quasi-property $\tilde{s}_i$, as well as corresponding \emph{quasi-inputs} $\tilde{u}_i$ that are each acted upon by a single eigenvalue $q_i$ of $\mathcal{Q}$. The time evolution for this transformed system is illustrated in Fig. \ref{fig:block3}.
\begin{figure}[!htb]
\centering
\begin{tikzpicture}[auto, node distance=2cm,>=latex',every node/.style={scale=0.65}]
    \node [input, name=rinput1] (rinput1) {};
    \node [block, right of=rinput1] (controller1) {$\dfrac{s}{s-q_i}$};
    \node [sum, below=0.45cm of rinput1, right of=controller1, node distance=2cm] (sum1) {};
    \node [block, below=0.45cm of sum1, left of=sum1, node distance=2cm] (controller2) {$\dfrac{1}{s-q_i}$};
    \node [input, left of=controller2] (rinput2) {};
    \node [output, right of=sum1] (output) {};
    
    \draw [->] (rinput1) -- node{$\tilde{s}_i(0)$} (controller1);
    \draw [->] (controller1) -| node[pos=0.99] {$+$} (sum1);
    \draw [->] (controller2) -| node[pos=0.99] {$+$} (sum1);
    \draw [->] (rinput2) -- node{$\hat{\tilde{u}}_i$} (controller2);
    \draw [->] (sum1) -- node {$\hat{\tilde{\bar{s}}}_i$} (output);
       
\end{tikzpicture}
\caption{Simplified block diagram for the inhomogeneous transformed governing equation for a single quasi-mode with quasi-property $\dot{\tilde{\bar{s}}}_i = q_i\tilde{\bar{s}}_i + \tilde{u}_i$. The tilde above a quantity and its lower case denote a quantity in quasi-node space.} \label{fig:block3}
\end{figure}
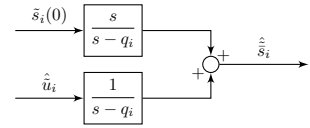

After applying the basis transformation, we can tune the quasi-modes individually to achieve the desired control objective using one of the two following approaches. We can either design a controller acting on one or more of the quasi-modes to steer the system response, or modify the quasi input such that it incorporates the controller action. In the first approach, we design a closed-loop feedback control block to adjust the system response to different input signals. These include the implementation of filters and lead-lag compensators to modify the frequency and phase responses. Conversely, we can design the quasi-input $\tilde{u}'_i=\tilde{u}'(\tilde{u}_i)$ for all quasi-modes $i$ in order to bring the quasi-outputs to their desired values. This open-loop approach could be convenient in scenarios where perfect measurements of the system state are difficult to make and the effects of system noise are not significant. We illustrate this with a simple example in Fig. \ref{fig:quasiinput2}. Suppose we wish to mimic proportional or integral closed-loop control on the quasi-input $\tilde{u}_y$ corresponding to quasi-mode $y$ using a feedback block $F$ with transfer function $K$ or $K/s$ respectively, where $K \in \mathbb{R}$. We then subsume the control loop into a quasi-input using the Laplace-transformed relation
\begin{equation}
\hat{\tilde{u}}'(\hat{\tilde{u}}_i) = \hat{\tilde{u}}_i\dfrac{s-q_y}{s-q_y+F},
\label{eq:netquasiinput}
\end{equation}
which makes the two block diagrams in Fig. \ref{fig:quasiinput2} equivalent. Note that the controller design here is only performed on a single quasi-mode $y$. Yet, the quasi-inputs of all the other quasi-modes must be transformed using \eqref{eq:netquasiinput}, since all the quasi-modes are subject to the same external input. The main advantage of the proposed method lies in the possibility to isolate parts of the system and separately adjust the response times of the different modes without a physical controller. 

\begin{figure}[!htb]
\centering
\subfloat{
\begin{tikzpicture}[auto, node distance=2cm,>=latex',every node/.style={scale=0.65}]
    \node [input, name=rinput] (rinput) {};
    \node [sum, right of=rinput, node distance=1cm] (sum1) {};
    \node [block, right of=sum1, node distance=1.5cm] (controller) {$\dfrac{1}{s-q_y}$};    
    \node [output, right of=controller, node distance=2cm] (output) {};
    \node [block, below of=controller, node distance=1.2cm] (feedback) {F};
    
    \draw [->] (rinput) -- node{$\hat{\tilde{u}}_y$} (sum1);
    \draw [->] (sum1) -- (controller);
    \draw [->] (controller) -- node [name=y] {$\hat{\tilde{\bar{s}}}_y$}(output);
    \draw [->] (y) |- (feedback);
    \draw [->] (feedback) -| node[pos=0.99] {$-$} (sum1);
        
\end{tikzpicture}
}
\mbox{\enskip}\mbox{\enskip}\mbox{\enskip}\mbox{\enskip}
\subfloat{
\begin{tikzpicture}[auto, node distance=2cm,>=latex',every node/.style={scale=0.65}]
    \node [input, name=rinput] (rinput) {};
    \node [block, right of=rinput, node distance=2cm] (controller) {$\dfrac{1}{s-q_y}$};    
    \node [output, right of=controller, node distance=2cm] (output) {};
    
    \draw [->] (rinput) -- node{$\hat{\tilde{u}}'(\hat{\tilde{u}}_y)$} (controller);
    \draw [->] (controller) -- node {$\hat{\tilde{\bar{s}}}_y$}(output);
        
\end{tikzpicture}
}
\caption{Block diagrams for quasi-mode $y$ for the original control loop (left) and when the loop is subsumed into the Laplace-transformed quasi-input $\hat{\tilde{u}}'(\hat{\tilde{u}}_i)$ (right).} \label{fig:quasiinput2}
\end{figure}
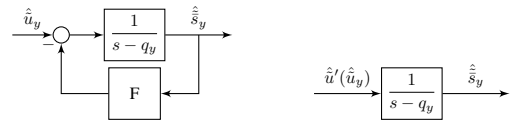

\begin{Remark}
Note that it is possible to account for the case where some of the $n$ eigenvalues are degenerate as long as $\mathcal{Q}$ is diagonalizable. In the event that the algebraic and geometric multiplicities of each of the degenerate eigenvalues are identical, we can use the same transformation as above, except that some of the quasi-modes corresponding to the same eigenvalue, yet different eigenvectors, will have the same transfer function $1/(s-q_i)$. The situation becomes more sensitive when $\mathcal{Q}$ is defective. In this case, we have to consider the generalized eigenspace of the matrix instead. Using the generalized eigenvector decomposition of $\mathcal{Q}$, we obtain a complete basis of $n$ orthogonal generalized eigenvectors, and $\mathcal{Q}$ can be diagonalized in the block Jordan form. Suppose we have a Jordan block involving nodes $z$ and $z+1$. Then, $\dot{\tilde{\bar{s}}}_z = q_z \tilde{\bar{s}}_z + \tilde{\bar{s}}_{z+1} + \tilde{u}_z$. 
\end{Remark}

%
%
%

\subsection{Case study for external control}
\label{sec:extCtrl}
\begin{figure}[!htb]
\centering
\begin{tikzpicture}[->,auto,node distance=2cm,>=latex',graph node/.style={circle,draw},every node/.style={scale=0.75}]
	
	\node [graph node] (1) {1};
	\node [graph node] (2) [right of=1] {2};
	\node [graph node] (3) [below of=2] {3};
	\node [graph node] (4) [left of=3] {4};
	
	\path (1) edge [bend left] node {1} (2);
	\path (2) edge [bend left] node {0.5} (3);
	\path (3) edge [bend left] node {1} (4);
	\path (4) edge [bend left] node {0.5} (1);
	\path (1) edge [bend left] node {1} (4);
	\path (4) edge [bend left] node {0.5} (3);
	\path (3) edge [bend left] node {1} (2);
	\path (2) edge [bend left] node [below] {0.5} (1);
	
\end{tikzpicture}
\caption{Asymmetric cycle graph $C_{4,a}$.} \label{fig:externalNetwork}
\end{figure}
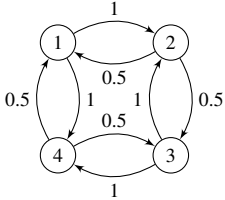
To illustrate how external control on a single quasi-mode can be implemented by quasi-input design, we consider a simple case study of four nodes in an asymmetric cycle graph configuration, as shown in Fig. \ref{fig:externalNetwork}. In this case, the eigenvalues of the associated $\mathcal{Q}$ are $\{q_1,q_2,q_3,q_4\} = \{-3,-2,-1,0\}$, and there are no degeneracies. 

Under \eqref{eqn:P1}, the corresponding unit right eigenvectors are $\tfrac{1}{2}\left[\begin{smallmatrix}-1 & 1 & -1 & 1\end{smallmatrix}\right]$, $\sqrt{\tfrac{1}{2}}\left[\begin{smallmatrix}-1 & 0 & 1 & 0\end{smallmatrix}\right]$, $\sqrt{\tfrac{1}{2}}\left[\begin{smallmatrix}0 & -1 & 0 & 1\end{smallmatrix}\right]$, and $-\sqrt{\tfrac{2}{5}}\left[\begin{smallmatrix}\tfrac{1}{2} & 1 & \tfrac{1}{2} & 1\end{smallmatrix}\right]$ respectively. Based on the unit steady-state eigenvector, we observe that the network on average tends to accumulate property at nodes 2 and 4 while depleting the property of the other nodes. This motivates us to consider the control objective of maintaining property at nodes 1 and 3 during the transient period to reverse the steady-state asymmetry. 

Under \eqref{eqn:P2}, the corresponding unit right eigenvectors are $\sqrt{\tfrac{2}{5}}\left[\begin{smallmatrix}-1 & \tfrac{1}{2} & -1 & \tfrac{1}{2}\end{smallmatrix}\right]$, $\sqrt{\tfrac{1}{2}}\left[\begin{smallmatrix}1 & 0 & -1 & 0\end{smallmatrix}\right]$, $\sqrt{\tfrac{1}{2}}\left[\begin{smallmatrix}0 & -1 & 0 & 1\end{smallmatrix}\right]$ and $\tfrac{1}{2}\left[\begin{smallmatrix}1 & 1 & 1 & 1\end{smallmatrix}\right]$ respectively. Based on the network structure, we observe that the network allocates higher weights to the quantities at nodes 2 and 4, which affects the calculation of the final consensus value. This motivates us to consider the control objective of increasing the weight of nodes 1 and 3 during the transient period to increase their influence on the system.

For either update protocol, it is sensible to excite nodes 1 and 3 in order to achieve the control objective. Since each quasi-mode is associated with a single pole, the net effect of the quasi-mode plant resembles the integration of the input, so we inject a unit impulse $\delta(t)$ into nodes 1 and 3 at $t=0$, and then apply control to displace the poles and achieve step-like outputs. The injection is equivalent to a Laplace-transformed input $\hat{\mathcal{U}} = \left[1 \, 0\,  1\, 0\right]$. By performing the aforementioned basis transformation on $\hat{\mathcal{U}}$, we obtain the equivalent quasi-inputs $\{\hat{\tilde{u}}_1,\hat{\tilde{u}}_2,\hat{\tilde{u}}_3,\hat{\tilde{u}}_4\}=\{-\tfrac{4}{3}, 0, 0, -2\sqrt{\tfrac{5}{18}}\}$ for the conservative network and $\{\hat{\tilde{u}}_1,\hat{\tilde{u}}_2,\hat{\tilde{u}}_3,\hat{\tilde{u}}_4\}=\{-2\sqrt{\tfrac{5}{18}}, 0, 0, \tfrac{2}{3}\}$ for the non-conservative network. Note that quasi-modes 2 and 3 generate zero quasi-inputs in both cases due to the asymmetrically designed input $\left[1 \, 0\,  1\, 0\right]$. Hence, from here on, we will only consider the quasi-modes 1 and 4 with non-trivial quasi-inputs for this particular case study. Figure \ref{fig:quasiinput} illustrates the block diagrams for these two quasi-modes in the $s$-domain, along with possible closed-loop control blocks $H_1$ and $H_4$. 

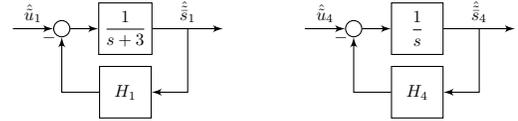
\begin{figure}[!htb]
\centering

\subfloat{
\begin{tikzpicture}[auto, node distance=2cm,>=latex',every node/.style={scale=0.65}]
    \node [input, name=rinput] (rinput) {};
    \node [sum, right of=rinput, node distance=1cm] (sum1) {};
	\node [block, right of=sum1, node distance=1.3cm] (controller) {$\dfrac{1}{s+3}$};    
    \node [output, right of=controller, node distance=2cm] (output) {};
    \node [block, below of=controller, node distance=1.3cm] (feedback) {$H_1$};
    
    \draw [->] (rinput) -- node{$\hat{\tilde{u}}_1$} (sum1);
    \draw [->] (sum1) -- (controller);
    \draw [->] (controller) -- node [name=y] {$\hat{\tilde{\bar{s}}}_1$}(output);
    \draw [->] (y) |- (feedback);
    \draw [->] (feedback) -| node[pos=0.99] {$-$} (sum1);
        
\end{tikzpicture}
}
\mbox{\enskip}\mbox{\enskip}\mbox{\enskip}\mbox{\enskip}
\subfloat{
\begin{tikzpicture}[auto, node distance=2cm,>=latex',every node/.style={scale=0.65}]
    \node [input, name=rinput] (rinput) {};
    \node [sum, right of=rinput, node distance=1cm] (sum1) {};
	\node [block, right of=sum1, node distance=1.3cm] (controller) {$\dfrac{1}{s}$};    
    \node [output, right of=controller, node distance=2cm] (output) {};
    \node [block, below of=controller, node distance=1.3cm] (feedback) {$H_4$};
    
    \draw [->] (rinput) -- node{$\hat{\tilde{u}}_4$} (sum1);
    \draw [->] (sum1) -- (controller);
    \draw [->] (controller) -- node [name=y] {$\hat{\tilde{\bar{s}}}_4$}(output);
    \draw [->] (y) |- (feedback);
    \draw [->] (feedback) -| node[pos=0.99] {$-$} (sum1);
        
\end{tikzpicture}
}
\caption{Block diagrams for quasi-modes 1 (left) and 4 (right) of the system illustrated in Fig. \ref{fig:externalNetwork}, with corresponding Laplace-transformed quasi-outputs $\hat{\tilde{\bar{s}}}_1$ and $\hat{\tilde{\bar{s}}}_4$. The controllers $H_1$ and $H_4$ represent a feedback control block.} \label{fig:quasiinput}
\end{figure}

\subsubsection{No control}
In the absence of controllers ($H_1 = H_4 = 0$), we obtain the following results:

Under \eqref{eqn:P1}, we obtain the quasi-outputs $\tilde{\bar{s}}_1 = -\tfrac{4}{3}e^{-3t}$ and $\tilde{\bar{s}}_4 = -2\sqrt{\tfrac{5}{18}}$. After transforming the quasi-outputs back into node space, we obtain $\bar{\mathcal{S}}_{1} = \bar{\mathcal{S}}_{3} = \tfrac{1}{3}(1+2e^{-3t})$ for the instance-averaged properties of nodes 1 and 3, and $\bar{\mathcal{S}}_{2} = \bar{\mathcal{S}}_{4} = \tfrac{2}{3}(1-e^{-3t})$ for the instance-averaged properties of nodes 2 and 4.

Under \eqref{eqn:P2}, we obtain the quasi-outputs $\tilde{\bar{s}}_1 = -2\sqrt{\tfrac{5}{18}}e^{-3t}$ and $\tilde{\bar{s}}_4 = \tfrac{2}{3}$, which then correspond to $\bar{\mathcal{S}}_{1} = \bar{\mathcal{S}}_{3} = \tfrac{1}{3}(1+2e^{-3t})$, and $\bar{\mathcal{S}}_{2} = \bar{\mathcal{S}}_{4} = \tfrac{1}{3}(1-e^{-3t})$. 

In both cases, the settling time of the outputs to their steady-state values is $1/3$. We can adjust this settling time using proportional control.

\subsubsection{Proportional control}

\begin{figure} [!t]
	\centering
	\subfloat [Conservative protocol]{
		\includegraphics[width=0.5\linewidth]{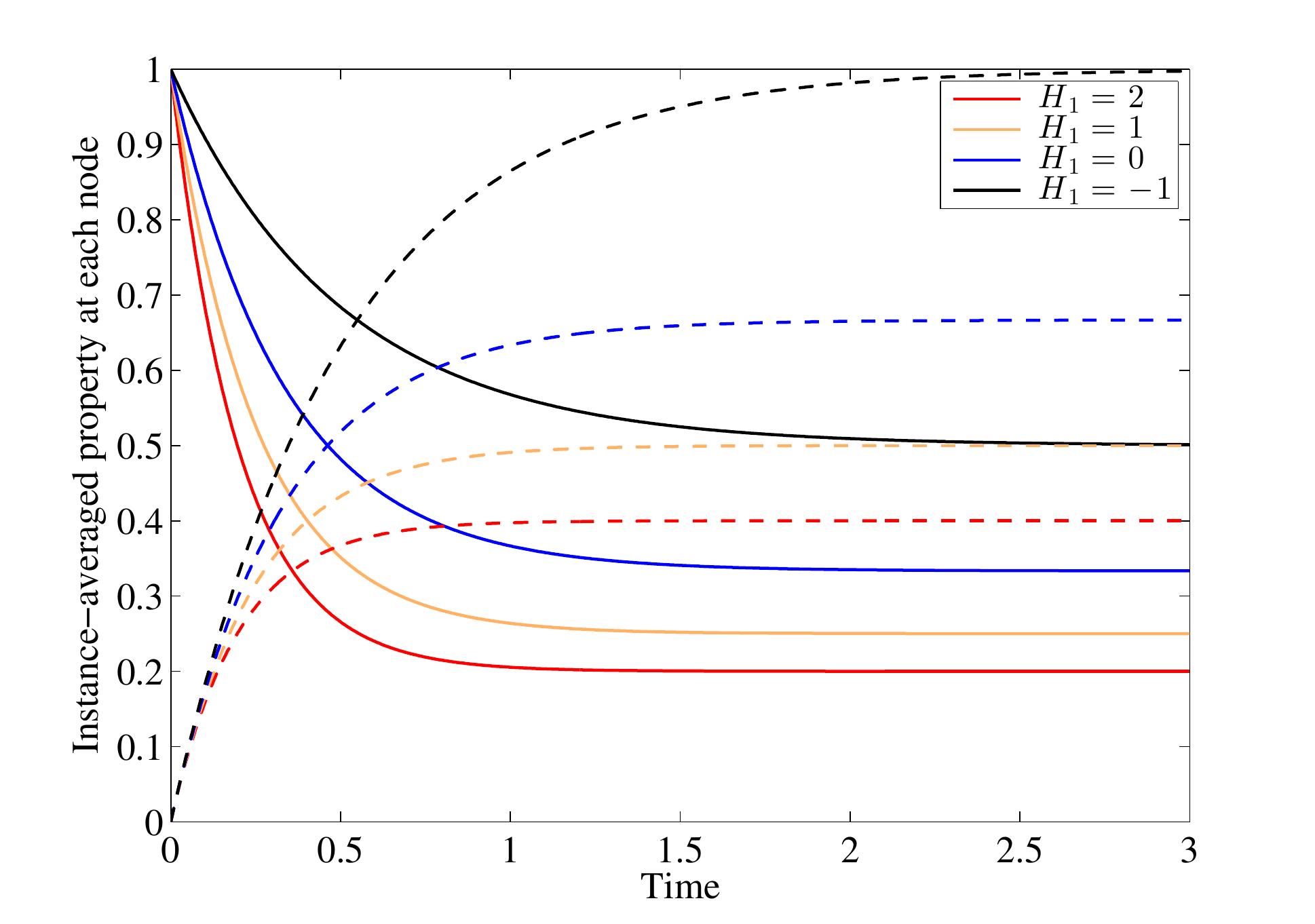}
		\label{fig:propCtrlConserv}
	}
	\subfloat [Non-conservative protocol]{
		\includegraphics[width=0.5\linewidth]{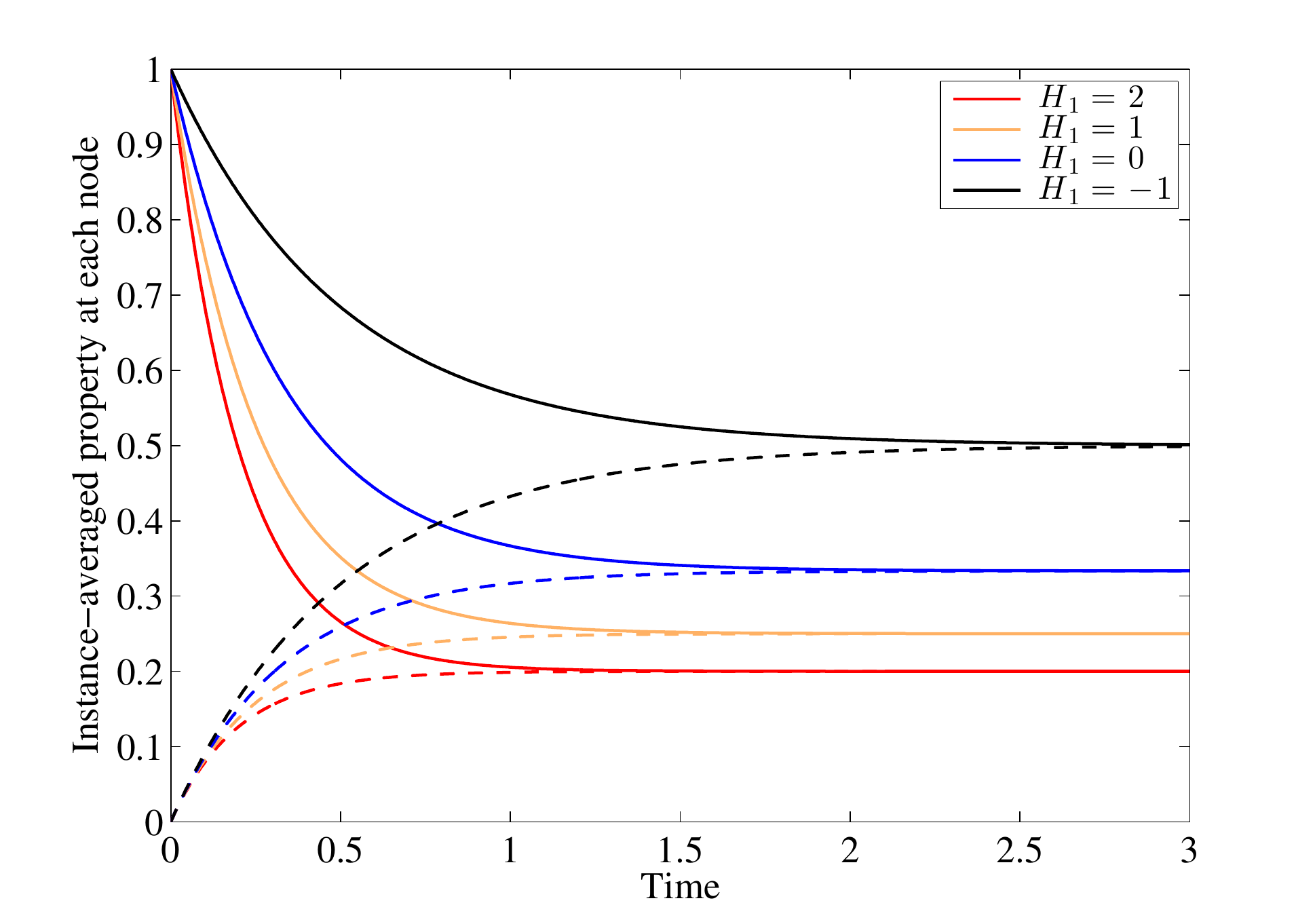}
		\label{fig:propCtrlConsens}
	}
	\caption{Resultant outputs for different $H_1$ for the two network update protocols by mimicking proportional control on quasi-mode 1. The solid lines indicate the time evolution of properties at nodes 1 and 3, while the dashed lines indicate the time evolution of properties at nodes 2 and 4.}
	\label{fig:propCtrl}
\end{figure}
We perform proportional control on quasi-mode 1 by replacing $H_1$ with a fixed constant. We illustrate the resultant outputs for some constants $H_1$ in Figures \ref{fig:propCtrlConserv} and \ref{fig:propCtrlConsens} for the conservative and non-conservative cases. 
By decreasing $H_1$ towards $-3$, we can delay the onset of the steady state, but we also increase the influence of the inputs on the system by increasing the total system input. 

\subsubsection{Integral control}
\begin{figure} [!t]
	\centering
	\subfloat [Conservative protocol]{
		\includegraphics[width=0.5\linewidth]{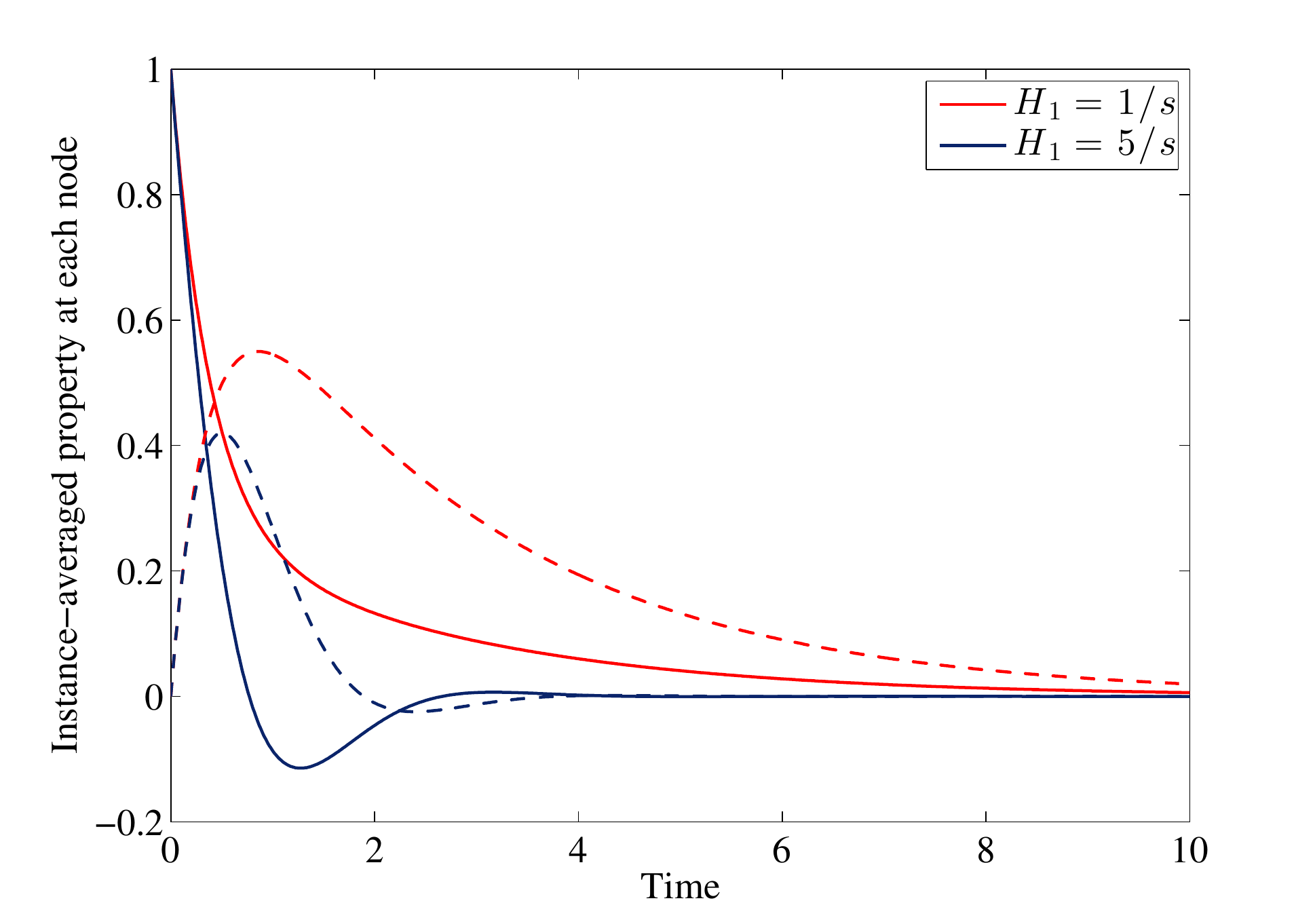}
		\label{fig:intCtrlConserv}
	}
	\subfloat [Non-conservative protocol]{
		\includegraphics[width=0.5\linewidth]{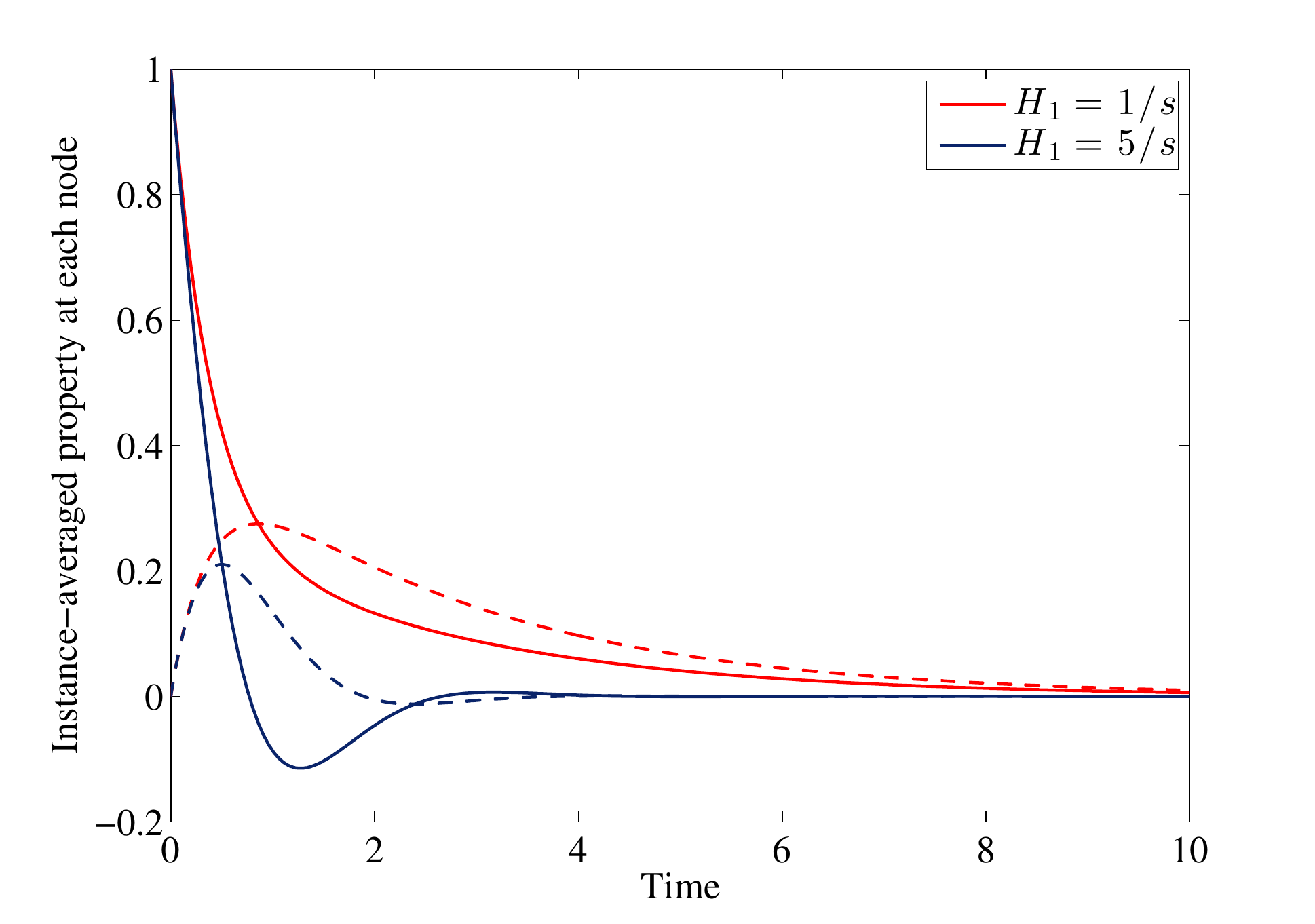}
		\label{fig:intCtrlConsens}
	}
	\caption{Resultant outputs for different $H_1$ for the two network update protocols by mimicking integral control on quasi-mode 1. The solid lines indicate the time evolution of properties at nodes 1 and 3, while the dashed lines indicate the time evolution of properties at nodes 2 and 4.}
	\label{fig:intCtrl}
\end{figure}
In standard control theory, proportional control is associated with a steady-state error relative to the input. This causes the outputs to be maintained at a non-zero level long after the impulses have ended. Integral control eliminates this error and further forces the steady outputs to zero at infinite time. We verify this possibly useful behavior in Figures \ref{fig:intCtrlConserv} and \ref{fig:intCtrlConsens} for the conservative and non-conservative cases by performing integral control on quasi-mode 1. The transient actions neither influence the total steady-state property in a conservative system nor the steady-state consensus value in a non-conservative system. However, the magnitude of the integral gain can influence the speed of onset of the steady state.

\subsubsection{Conclusions on external control}
We have demonstrated that by varying the input functions appropriately, both the transient and steady-state perturbations to the system can be modified. By careful design cognizant of the characteristic timescales of the system, the timescales of the perturbations can be varied. In addition, the perturbations can be designed with either permanent or temporary effects on the system depending on the objectives of the network designer.

\subsection{Network classification}
\begin{figure*} [!t]
	\centering
	\subfloat [Erd\"{o}s-R\'{e}nyi (random graph)]{
		\includegraphics[width=0.25\linewidth]{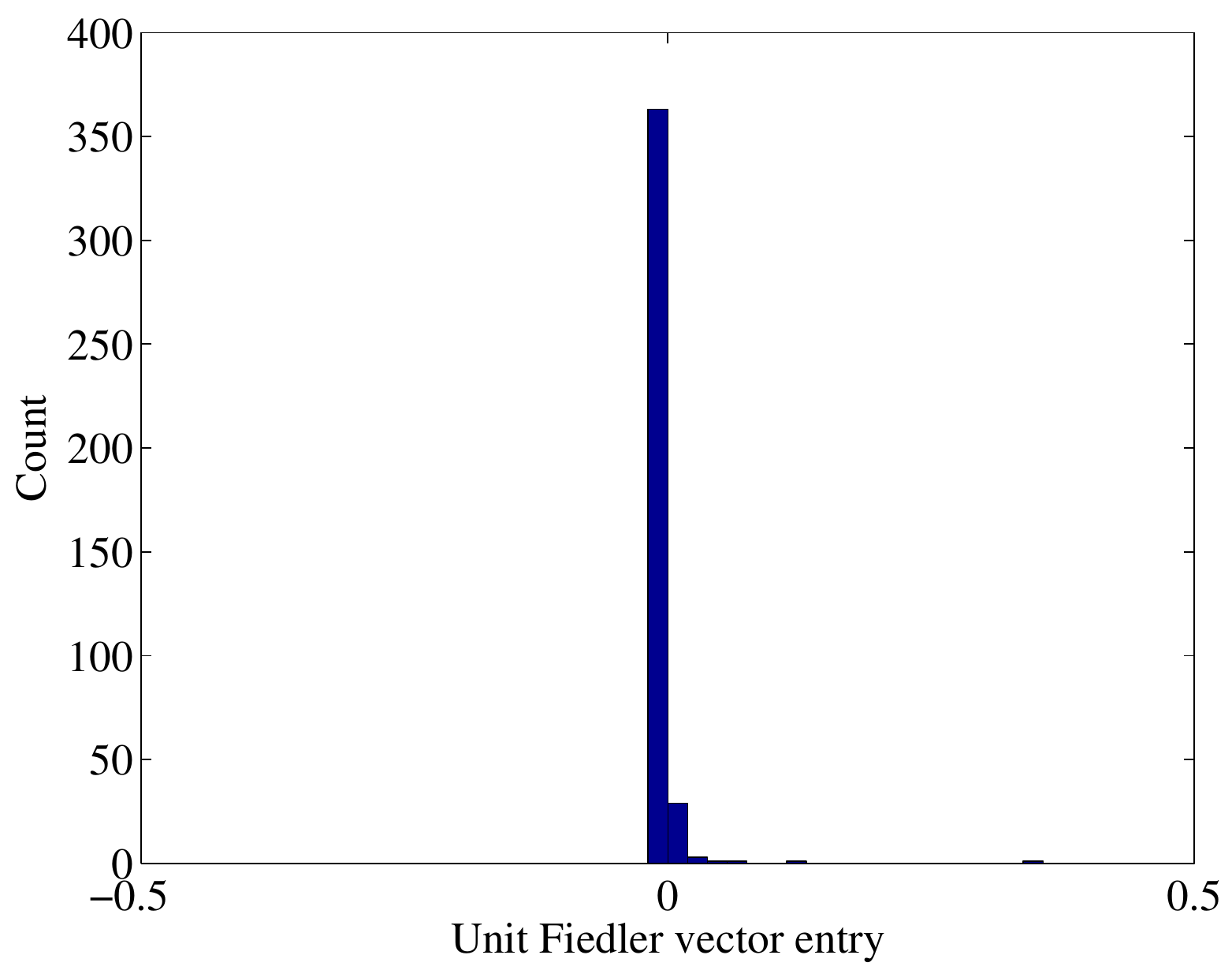}
	}
	\subfloat [Albert-Barab\'{a}si (scale-free)]{
		\includegraphics[width=0.25\linewidth]{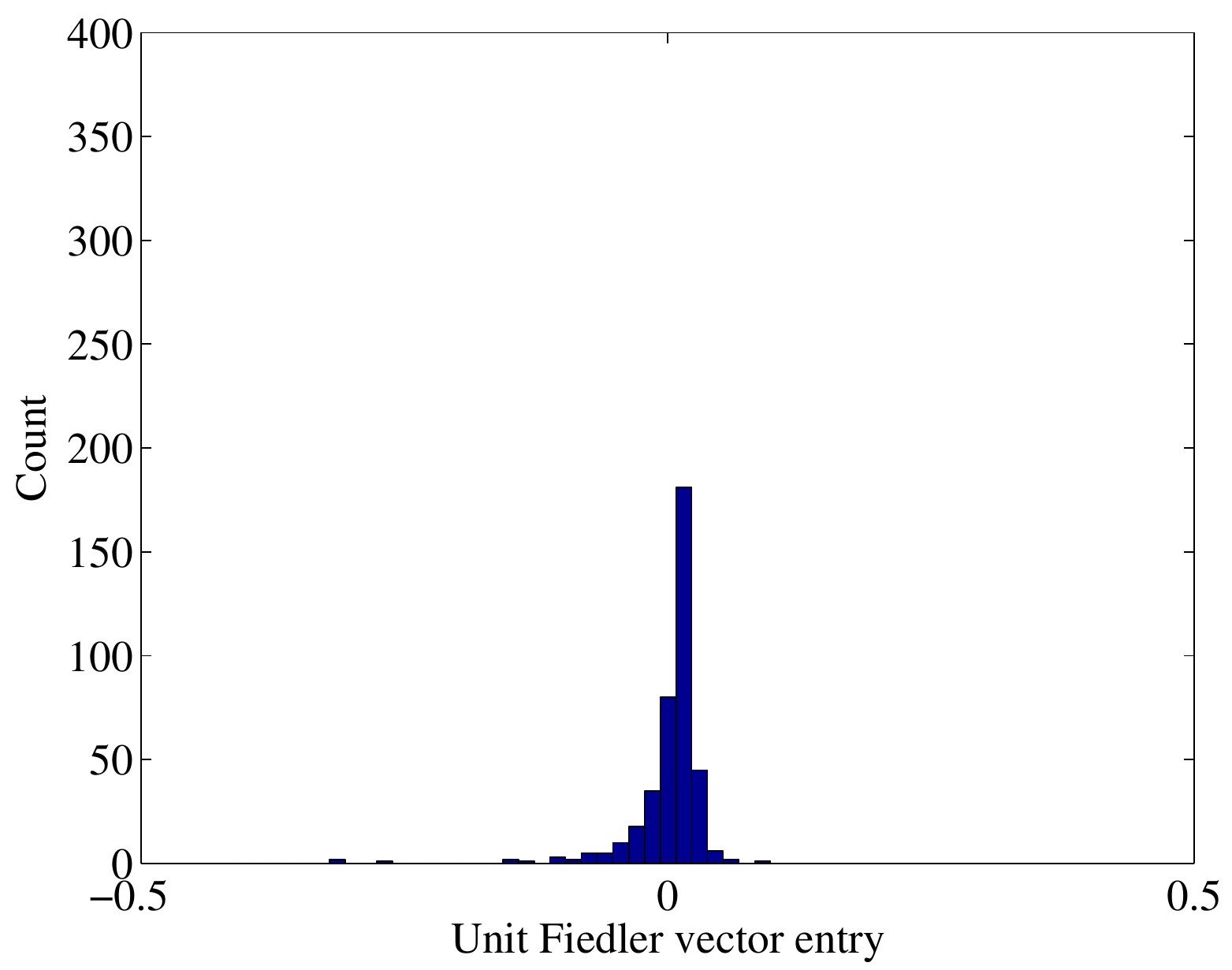}
	}
	\subfloat [Watts-Strogatz (small-world)]{
		\includegraphics[width=0.25\linewidth]{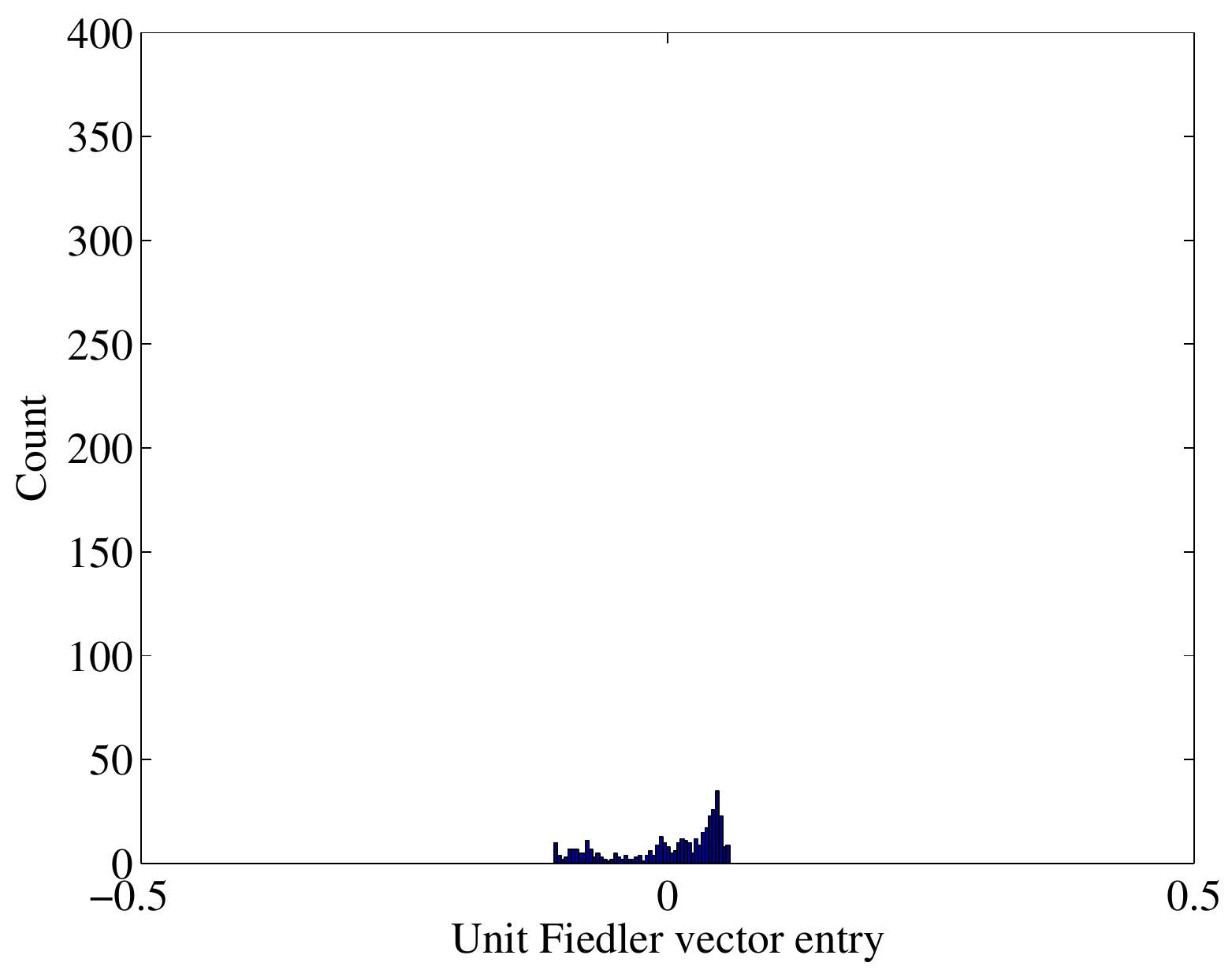}
	}
	\caption{Histograms of Fiedler vector entries for a single instance of various random networks with 400 nodes using the CONTEST toolbox~\cite{taylor2009contest}.}
	\label{fig:classNet}
\end{figure*}
Section \ref{sec:extCtrl} illustrated that both the external input and the structure of the quasi-modes in the system influence the resultant system outputs. The variance in the individual entries of the eigenvectors corresponding to the activated quasi-modes influence the effects of the external input on the individual nodes. For example, an external input on a single node in a network where the activated quasi-modes have low variances in their eigenvector entries tends to influence all other connected nodes to the same extent; conversely, the same input in a network with quasi-modes with large variances may have different effects on different nodes. 

We provide a simple example here to illustrate this point. It is generally accepted that the eigenvector associated with the non-zero eigenvalue of the Laplacian with the smallest absolute value, the Fiedler vector, provides a reasonable basis for spectral graph partitioning~\cite{fiedler1973algebraic}. We plot histograms of the entries of the Fiedler vector for different classes of networks in Fig. \ref{fig:classNet} to illustrate their varying distributions. We observe that the spread of the Fiedler vector entries increases with the average network clustering coefficient. Since small-world networks tend to have Fiedler vector entries with a larger spread, inputs to small-world networks can affect different nodes to different extents. The input design considerations for small-world networks have to ensure that specific nodes are targeted to achieve the desired control objective.

\section{Control by structure modification}
\label{sec:Internal}
Instead of controlling the network dynamics by means of exogenous excitation, we present in this section two methods to steer the dynamics by modifications in the network structure. 
\subsection{Network structure modification through design}
Instead of controlling the system with an exogenous input term, we can also modify the network structure directly to achieve control. Modifying the network structure may be preferable to introducing external excitations in systems where it is more convenient to modify the interactions between agents than to introduce or remove property from agents. 
Through this modification, we can shift both the eigenvalues and the eigenvector components, thereby changing the system dynamics as well as its steady-state distribution. In particular, modifying the network links causes changes in both the transformation basis between node space and quasi-node space, as well as the characteristic timescales of each system mode. 

By modifying the links more carefully, we can achieve modification of the characteristic timescales without reshaping the transformation basis. Recall that $\mathcal{Q} = A \Lambda A^{-1}$ if it is diagonalizable. Apart from decomposing $\mathcal{Q}$ into the matrices $A$, $\Lambda$ and $A^{-1}$, we can also reconstruct $\mathcal{Q}$ by multiplying $A$, $\Lambda$ and $A^{-1}$ together in the appropriate order. By holding $A$ constant and modifying the entries of $\Lambda$, we can achieve a modification in $\mathcal{Q}$ without affecting the composition of the system modes. The stability of the new eigenvalues of the modified $\mathcal{Q}$ will determine if the modified $\mathcal{Q}$ continues to represent a CTMC described by \eqref{eqn:markov2}. In this respect, as long as $q_\mathrm{s} = 0$ is preserved, the modification of $\mathcal{Q}$ will result in a CTMC where the columns or rows sum to zero for conservative and non-conservative networks, respectively.

\begin{Remark}
We note that this approach of network structure modification through design can be extended to more complex systems. In particular, consider the distributed control system~\cite{burbano2014distributed} 
\begin{equation}
\ddot{\bar{\mathcal{T}}} = \mathcal{Q}(\beta'_D \ddot{\bar{\mathcal{T}}} + \beta'_P \dot{\bar{\mathcal{T}}} + \beta'_I \bar{\mathcal{T}}) \, ,
\end{equation}
for some arbitrary gains $\beta'_D$, $\beta'_P$ and $\beta'_I$. Here, modifications to $\mathcal{Q}$ will also influence the dynamics of the system. However, the key governing parameters are no longer the eigenvalues $q_k$ of $\mathcal{Q}$, but instead the roots of the characteristic polynomial of the matrix $\left[ s^2(\mathbb{I} - \beta'_D \mathcal{Q}) - s(\beta'_P \mathcal{Q}) - \beta'_I \mathcal{Q} \right]$.
\end{Remark}

\begin{figure}[!t]
\centering
\begin{tikzpicture}[->,auto,node distance=1.3cm,>=latex',graph node/.style={circle,draw},every node/.style={scale=0.65}]
	
	\node [graph node] (1) {1};
	\node [graph node] (2) [above of=1, left of=1] {2};
	\node [graph node] (3) [above of=1, right of=1] {3};
	\node [graph node] (4) [below of=1, left of=1] {4};
	\node [graph node] (5) [below of=1, right of=1] {5};
	
	\path (1) edge [bend left] node [pos=0.8] {1} (2);
	\path (1) edge [bend left] node [pos=0.8] {1} (3);
	\path (1) edge [bend left] node [pos=0.8] {1} (4);
	\path (1) edge [bend left] node [pos=0.8] {1} (5);
	\path (5) edge [bend left] node [pos=0.2] {1} (1);
	\path (4) edge [bend left] node [pos=0.2] {1} (1);
	\path (3) edge [bend left] node [pos=0.2] {1} (1);
	\path (2) edge [bend left] node [pos=0.2] {1} (1);
	
\end{tikzpicture}
\caption{Star graph $S_5$ with symmetric links of unit weights.}
\label{fig:modifyNetwork}
\end{figure}
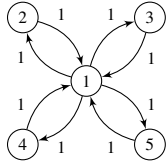
To illustrate how we can modify the eigenvalues of a network to achieve a shift in the system response, we consider a symmetric star network depicted in Fig. \ref{fig:modifyNetwork}. The transition rate matrix $\mathcal{Q}$ corresponding to this network is  
\begin{equation*}
\mathcal{Q} = 
\begin{bmatrix}
-4 & 1 & 1 & 1 & 1 \\
1 & -1 & 0 & 0 & 0 \\
1 & 0 & -1 & 0 & 0 \\
1 & 0 & 0 & -1 & 0 \\
1 & 0 & 0 & 0 & -1
\end{bmatrix}\, .
\end{equation*}
The eigenvalues of this matrix are 0 with multiplicity 1, $-5$ with multiplicity 1 and $-1$ with multiplicity 3. The corresponding eigenvectors are the steady-state eigenvector describing equilibration between all the nodes ($\left[\begin{smallmatrix}1 & 1 & 1 & 1 & 1\end{smallmatrix}\right]$), an eigenvector describing diffusion between the peripheral nodes and the central node ($\left[\begin{smallmatrix}-4 & 1 & 1 & 1 & 1\end{smallmatrix}\right]$), and eigenvectors describing diffusion among the peripheral nodes ($\left[\begin{smallmatrix}0 & 1 & 1 & 1 & -3\end{smallmatrix}\right]$, $\left[\begin{smallmatrix}0 & 0.36 & 1 & -1.36 & 0\end{smallmatrix}\right]$ and $\left[\begin{smallmatrix}0 & -1.36 & 1 & 0.36 & 0\end{smallmatrix}\right]$). Suppose we want to slow down diffusion between the peripheral nodes and the central node. We can do so by decreasing the effective update rate of this mode, which causes the relative importance of diffusion among the peripheral nodes to increase and results in extra links facilitating this diffusion. For instance, by changing the effective update rate of the peripheral-central mode from 5 to 4.5, we obtain the following transition rate matrix
\begin{equation}
\mathcal{Q} = 
\begin{bmatrix}
-3.6 & 0.9 & 0.9 & 0.9 & 0.9 \\
0.9 & -0.975 & 0.025 & 0.025 & 0.025 \\
0.9 & 0.025 & -0.975 & 0.025 & 0.025 \\
0.9 & 0.025 & 0.025 & -0.975 & 0.025 \\
0.9 & 0.025 & 0.025 & 0.025 & -0.975
\end{bmatrix}\, .
\label{eqn:modifyQ2}
\end{equation}
In the resultant architecture, the links between the central node and the peripheral nodes have weights of 0.9 instead of 1, and the links between the peripheral nodes have weights of 0.025 instead of 0. The corresponding network is a superposition of the two graphs depicted in Fig. \ref{fig:modifyNetwork2}.

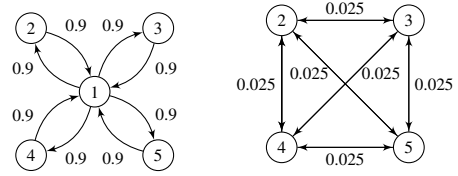
\begin{figure}[!t]
\centering
\begin{tikzpicture}[->,auto,node distance=1.3cm,>=latex',graph node/.style={circle,draw},every node/.style={scale=0.65}]
	
	\node [graph node] (1) {1};
	\node [graph node] (2) [above of=1, left of=1] {2};
	\node [graph node] (3) [above of=1, right of=1] {3};
	\node [graph node] (4) [below of=1, left of=1] {4};
	\node [graph node] (5) [below of=1, right of=1] {5};
	
	\path (1) edge [bend left] node [pos=0.8] {0.9} (2);
	\path (1) edge [bend left] node [pos=0.8] {0.9} (3);
	\path (1) edge [bend left] node [pos=0.8] {0.9} (4);
	\path (1) edge [bend left] node [pos=0.8] {0.9} (5);
	\path (5) edge [bend left] node [pos=0.2] {0.9} (1);
	\path (4) edge [bend left] node [pos=0.2] {0.9} (1);
	\path (3) edge [bend left] node [pos=0.2] {0.9} (1);
	\path (2) edge [bend left] node [pos=0.2] {0.9} (1);

\end{tikzpicture}
\hspace{4mm}
\begin{tikzpicture}[->,auto,node distance=2.6cm,>=latex',graph node/.style={circle,draw},every node/.style={scale=0.65}]
	
	\node [graph node] (1) {2};
	\node [graph node] (2) [right of=1] {3};
	\node [graph node] (3) [below of=2] {5};
	\node [graph node] (4) [left of=3] {4};
	
	\path (1) edge node {0.025} (2);
	\path (2) edge node {0.025} (3);
	\path (3) edge node {0.025} (4);
	\path (4) edge node {0.025} (1);
	\path (1) edge node [left, pos=0.4] {0.025} (3);
	\path (1) edge (4);
	\path (2) edge node [right, pos=0.4] {0.025} (4);
	\path (3) edge (1);
	\path (4) edge (1);
	\path (4) edge (2);
	\path (1) edge (4);
	\path (4) edge (3);
	\path (3) edge (2);
	\path (2) edge (1);
		
\end{tikzpicture}
\caption{The superposition of these two graphs yields the network structure corresponding to the modified transition rate matrix given in \eqref{eqn:modifyQ2}.} 
\label{fig:modifyNetwork2}
\end{figure}

\subsection{Network structure modification through adaptive control}

In sufficiently large systems, it can be computationally intensive to perform the eigendecomposition of various $\mathcal{Q}$ matrices in order to determine the network structure that gives the most desirable system response. There are two possible approaches to handle this issue. Firstly, one can choose from a set of matrices that deviate from the original $\mathcal{Q}$ by small perturbations $\delta\mathcal{Q}$. The eigenvalue and eigenvector perturbations can be derived as a function of $\delta\mathcal{Q}$ for sufficiently small $\delta\mathcal{Q}$. However, this limits us to small modifications in the network structure that do not span the feasible action space. 

It is possible to examine larger and/or more modifications without going through the computational complexity by performing adaptive control driven by reinforcement learning. In this approach, we formulate the problem as an MDP. The state space $\mathcal{X}$ of the MDP is equivalent to the node space of the network, while the action space $\mathcal{W}$ of the MDP is equivalent to the set of possible $\mathcal{Q}$ that the network can adopt. More formally, $\mathcal{X} = \mathcal{V}$, while $\mathcal{W} = \{\mathcal{Q}_1, \ldots, \mathcal{Q}_{w^*}\}$ for $w^*$ different actions. In essence, the MDP is a direct extension of the CTMC upon which the governing equation of the network diffusion is based.\footnote{A MDP with a single action is equivalent to a CTMC.} By selecting a reward function $R(X,W)$ that describes the desirability of action $W$ given that the system is in state $X$, and by scheduling decisions to be made immediately after a change in the state of the system\footnote{This means that the variation of the active action $W(t)$ with time is a piecewise constant function.}, the MDP is completely defined. We can determine the conditions for optimality in the MDP by searching the policy that yields the optimal expected reward integrated over time
\begin{equation}
\mathbb{E}_{X(0),W(0),\text{ sample paths}}\left[\int_0^\infty \gamma^t R(X(t),W(t)) dt \right] \, ,
\label{eqn:optimalReward}
\end{equation}
where $\gamma \in (0,1)$ denotes a discount rate. By considering (\ref{eqn:markov2}) and (\ref{eqn:optimalReward}) simultaneously, we obtain the Hamilton-Jacobi-Bellman equation for the system. In optimal control theory, this equation is solved to obtain the optimal policy for the system. For large state and/or action spaces, this solution can be difficult to obtain. For a more computationally manageable approach, we turn to reinforcement learning to determine a suboptimal solution with sufficiently fast convergence. For each state-action pair $(X,W)$, we store a quality value $V^\mathcal{Q}(X,W)$, and update $V^\mathcal{Q}(X,W)$ based on the Q-learning method~\cite{barto1998reinforcement}
\begin{multline}
V^\mathcal{Q}_{k+1}(X,W) = (1 - \mu) V^\mathcal{Q}_{k} (X, W) \\
+ \mu \left[ \max_{W'} R(X_\text{next},W') + \gamma \max_{W'} V^\mathcal{Q}_{k}(X_\text{next},W') \right] \, ,
\end{multline}
where $k$ is the index indicating the $k$-th change of the system state, $\mu \in (0,1]$ is the learning rate of the algorithm, and $X_\text{next}$ is the next state of the system. 
Whenever the system transitions to a new state, a new action has to be selected. We set up the system such that with probability $\epsilon$, the system selects the action with the highest quality given its current state, while with probability $1-\epsilon$, the system selects a random action. This is known as the $\epsilon$-greedy policy where $\epsilon$ is the exploitation probability of the system. By selecting appropriate values of $\mu$, $\gamma$ and $\epsilon$, as well as an appropriate reward function $R(X,W)$, we can design the learning process to achieve predefined system objectives at a suitable convergence rate.


To illustrate the use of an MDP to select a desirable $\mathcal{Q}$ in a large network, we set up a state space $\mathcal{X}$ containing 20 agents, and we set up the action space $\mathcal{W}$ by generating 400 $\mathcal{Q}$ matrices corresponding to 400 complete graphs whose links each have weights sampled from a uniform distribution on $[0,1]$, to simulate 400 different random network configurations. These matrices obey the conservative relation \eqref{eqn:CTMC1Conservative}, which obeys the update rule \eqref{eqn:P1}. In addition, we select 2 arbitrary nodes (12 and 14) out of the 20 to serve as target nodes for our system. We let the state transitions of the MDP be governed by a grand system transition matrix $\mathcal{Q}_g$. Whenever a state transition occurs, we assign a reward of 5 whenever the state coincides with 1 of these 2 target nodes, and a reward of 0 otherwise, in order to encourage the system to visit these states more often. The MDP then selects an action, following which $\mathcal{Q}_g$ is updated based on this selected action. For example, if the updated state of the system is node 5 and action 20 is selected, then the 5th column of $\mathcal{Q}_g$ is replaced by the 5th column of $\mathcal{Q}_{20}$ corresponding to the 20th action.

\begin{figure}[!t]
\centering
\includegraphics[width=0.8\linewidth]{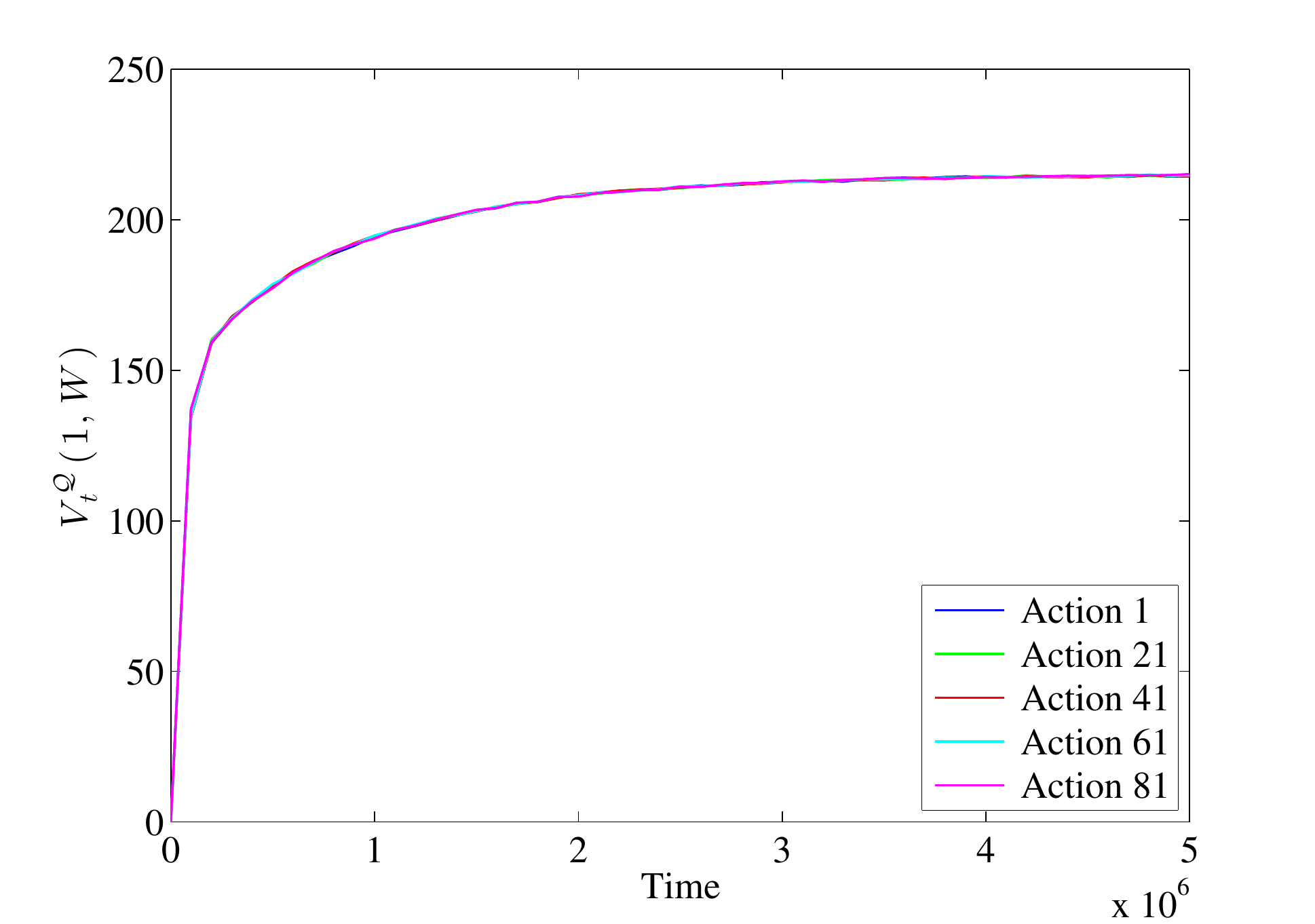}
\caption{Time evolution of quality $V^\mathcal{Q}(1,W)$ of node 1 for 5 different actions $W$ for a single trial.}
\label{fig:MDP1}
\end{figure}
\begin{figure}[!t]
\centering
\includegraphics[width=0.8\linewidth]{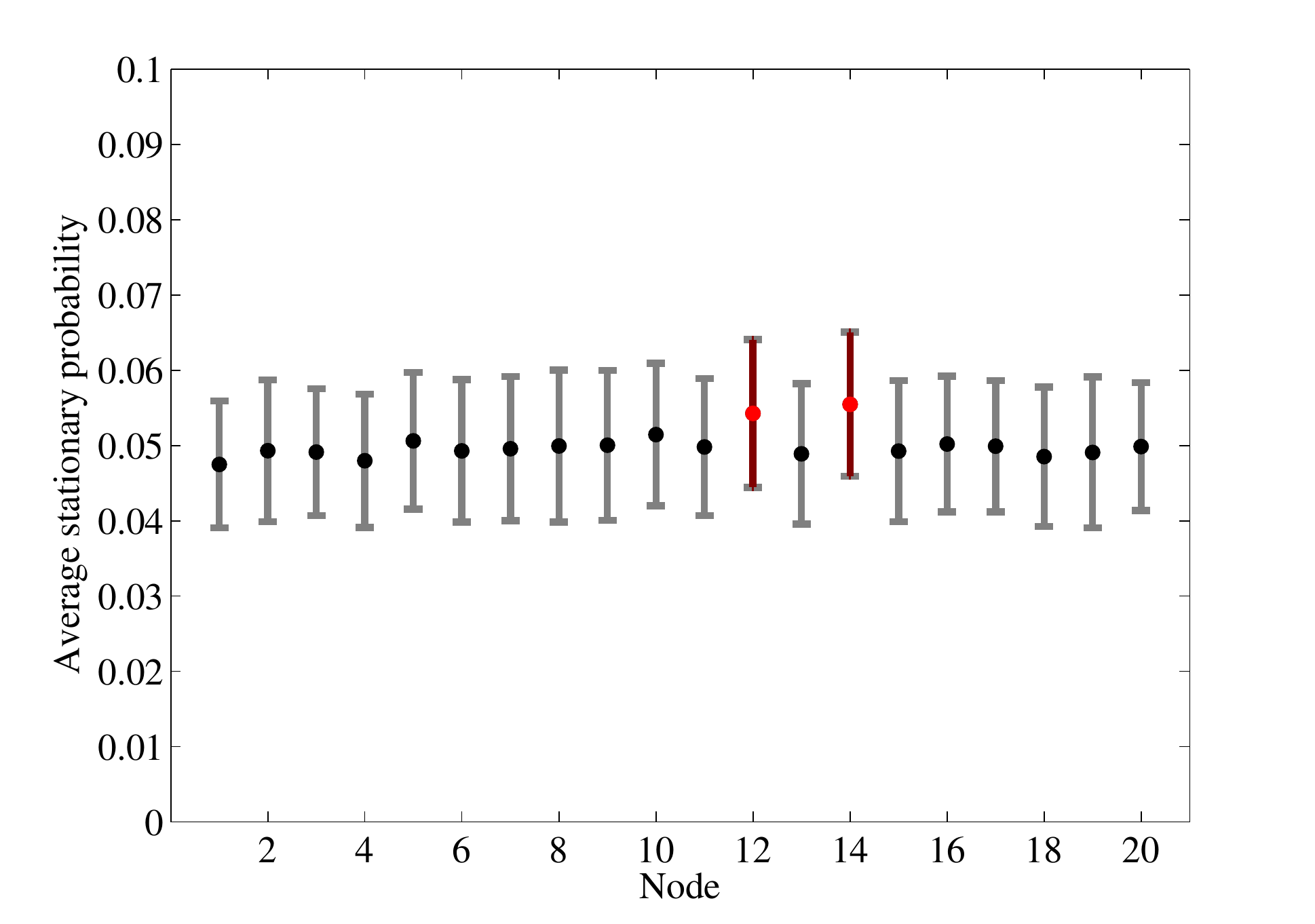}
\caption{Stationary probability distribution of the simulated network averaged over 100 trials. The nodes marked red are the two randomly selected nodes (12 and 14) that were assigned a reward on every visit by each MDP.}
\label{fig:MDP2}
\end{figure}
For this work, we ran 100 independent numerical simulations describing the time evolution of 100 MDPs with the same state and action spaces, but with different initial states and $\mathcal{Q}_g$. Using the reinforcement learning parameters $\mu = 0.2$, $\epsilon = 0.4$ and $\gamma = 0.995$, we run each simulation for $5\text{\sc{E}}6$ time-steps, following which we find the steady-state eigenvector (normalized to sum to 1), or stationary distribution, $v_0$ for each last known $\mathcal{Q}_g$ such that $\mathcal{Q}_g v_0=0$ and $\sum_i v_0(i) = 1$. In Fig. \ref{fig:MDP1}, we demonstrate using the time evolution of $V^\mathcal{Q}(1,W)$ for 5 different actions $W$ that the learning curve has reached convergence. In Fig. \ref{fig:MDP2}, we plot the last known stationary distribution $v_0(i)$ over all the states $i$ averaged over the 100 trials, and demonstrate that the MDP is, on average, able to modify the stationary distribution of the system after sufficient time has elapsed by careful design of the reward function. 

In this case, we guided the system to favor nodes 12 and 14 by modifying the network structure, analogous to how we modified the system steady state in Section \ref{sec:extCtrl} through the injection of appropriate external inputs. This MDP methodology can either be used online to allow the network to respond to time-varying objectives, or offline to cycle through various network possibilities with reduced computational complexity.

%
%
%
\section{Conclusion}
\label{sec:Conclusion}

In this work, we proposed a probabilistic framework that describes diffusion in multi-agent networks and allows for a variety of inter-agent update rules. We focused on two protocols where the total network quantity is conserved (conservative protocol) or variable (non-conservative protocol). By including the possibility of asymmetric updates, non-unitary links and switching topologies, we examined the stability and convergence characteristics of generic networks, and described their transient and steady-state dynamics in detail. In addition, we modeled external influences on networks by considering an inhomogeneous form of the governing equation. As such, the framework has the potential to capture the presence of stubborn agents and the process of dynamic learning. Furthermore, we demonstrated the ability to achieve control of diffusion in these networks using either external excitation or modification of the network structure. For the former, we showed that the form of the external excitation and the network structure are crucial to the characteristics of the controlled output, and for the latter, we identified an algorithm involving reinforcement learning for an MDP to promote quick achievement of the control objective. Through these techniques, we enable the possibility of actuating individual nodes and steering the system steady state towards some desired control objective, thereby allowing for the micromanagement of diffusion in networks. 

Our future work to extend the current toolbox will include more direct applications to large random networks. In addition, we intend to include time-inhomogeneous and stochastic governing equations, and to extend the framework to continuous node spaces. Another important extension of this work is to develop more insight in the higher order moments of the property.

\bibliographystyle{IEEEtran}
\bibliography{references}

\end{document}